\DeclareMathAlphabet{\mymathbb}{U}{BOONDOX-ds}{m}{n}
\newcommand{\cmark}{\text{\ding{51}}}
\newcommand{\xmark}{\text{\ding{55}}}
\newcommand{\Out}{\textbf{\upshape{Out}}}
\newcommand{\Act}{\textbf{\upshape{Act}}}
\newcommand{\Id}{\text{\upshape{Id}}}
\newcommand{\At}{\text{\upshape{At}}}
\newcommand{\Exp}{\textbf{\upshape{Exp}}}
\newcommand{\BA}{\textsf{BA}}
\newcommand{\ex}{\text{\upshape{exp}}}
\newcommand{\BExp}{\textbf{\upshape{BExp}}}
\newcommand{\WExp}{\textbf{\upshape{WExp}}}
\newcommand{\TExp}{\textbf{\upshape{TExp}}}
\newcommand{\HCo}{$\mathcal{H}$\text{-coalgebra }}
\newcommand{\WC}[2]{\,\leftidx{_{#1}}{{\oplus}}{_{#2}}\,}
\newcommand{\Ax}[1]{\ensuremath{\mathsf{#1}}}
\newcommand{\Mon}{\mathcal{M}_\omega}
\newcommand{\bskip}{\mymathbb{1}}
\newcommand{\babort}{\mymathbb{0}}
\newcommand{\wgkat}{\textsf{\upshape{wGKAT}}\xspace}
\newcommand{\gkat}{\textsf{\upshape{GKAT}}\xspace}
\newcommand{\probgkat}{\textsf{\upshape{ProbGKAT}}\xspace}
\newcommand{\kat}{\textsf{\upshape{KAT}}\xspace}
\newcommand{\kawt}{\textsf{\upshape{KAWT}}\xspace}
\newcommand{\netkat}{\textsf{\upshape{NetKAT}}\xspace}
\newcommand{\coin}{\textcircled{{\scriptsize \$}}}
\newcommand{\exN}{\mathbb{N}^{+\infty}}
\newcommand{\exQp}{\mathbb{Q}_{\geq0}^{+\infty}}
\newcolumntype{L}{>{$}l<{$}} %
\theoremstyle{plain}%
\newcommand{\bigplus}{%
	\DOTSB\mathop{\mathpalette\mattos@bigplus\relax}\slimits@
}
\newcommand\mattos@bigplus[2]{%
	\vcenter{\hbox{%
			\sbox\z@{$#1\sum$}%
			\resizebox{!}{0.9\dimexpr\ht\z@+\dp\z@}{\raisebox{\depth}{$\m@th#1+$}}%
	}}%
	\vphantom{\sum}%
}
\title{Weighted GKAT: Completeness and Complexity}
\author{Spencer Van Koevering}{Cornell University, United States of America \and \url{https://spencerkoevering.github.io/}} {sv493@cornell.edu}{https://orcid.org/0009-0008-5026-8060}{Partially supported by a Cornell Bowers CIS Deans Excellence Fellowship.}%
\author{Wojciech {Różowski}}{University College London, United Kingdom \and \url{https://wkrozowski.github.io/}}{w.rozowski@cs.ucl.ac.uk}{https://orcid.org/0000-0002-8241-7277}{Partially supported by ERC grant Autoprobe (grant agreement 101002697).}
\author{Alexandra Silva}{Cornell University, United States of America \and \url{https://alexandrasilva.org/}}{alexandra.silva@cornell.edu}{https://orcid.org/0000-0001-5014-9784}{Partially supported by ONR grant N68335-22-C-0411, ERC grant 101002697, and a Royal Society Wolfson fellowship.}
\authorrunning{S. Van Koevering, W. {Różowski}, A. Silva} 
\keywords{Weighted Programming, Automata, Axiomatization, Decision Procedure} 
\begin{document}
	
	\maketitle

	\begin{abstract}
	 We propose Weighted Guarded Kleene Algebra with Tests (\wgkat), an uninterpreted weighted programming language equipped with branching, conditionals, and loops. We provide an operational semantics for \wgkat using a variant of weighted automata and introduce a sound and complete axiomatization. We also provide a polynomial time decision procedure for  bisimulation equivalence.	\end{abstract}

		\begin{toappendix}
			\section{Coalgebra}\label{AppCoalg}
			While we strove to use the language of automata in the body of this paper, the underlying arguments rely heavily upon the idea of universal coalgebra \cite{Rutten2000UniversalCA}. This idea allows us to discuss transition systems in a principled and simplified manner. Universal Coalgebra is based on the language of category theory, which we will assume the reader has a brief awareness of. We will recap some relevant category theoretic ideas, but otherwise refer the reader to \cite{awodey2010category}.
			
			\begin{defn}[\cite{Rutten2000UniversalCA}]
				Given a functor $F$, an $F$-coalgebra is a pair $(X, \beta)$ where $X$ is a set, and $\beta:X \to FX$. The set $X$ is called the carrier, or set of states, of the system and the function $\beta$ is called the transition function. 
			\end{defn}
			\begin{defn}
				Let $(X, \beta), (Y, \gamma)$ be $F$-coalgebras. A function $f: X \to Y$ is an $F$-coalgebra homomorphism if $F(f)\circ \beta = \gamma \circ f$. That is, if the following diagram commutes:
				\begin{center}
				\begin{tikzcd}
					X && Y \\
					\\
					FX && FY
					\arrow["f", from=1-1, to=1-3]
					\arrow["\beta"{description}, from=1-1, to=3-1]
					\arrow["\gamma"{description}, from=1-3, to=3-3]
					\arrow["Ff", from=3-1, to=3-3]
				\end{tikzcd}
			\end{center}
			\end{defn}
			
			Bisimulation is also a coalgebraic notion.
			\begin{defn}[\cite{Rutten2000UniversalCA}]
				Let $(X, \beta)$ and $(Y, \gamma)$ be $F$-coalgebras. A relation $R\subseteq X \times Y$ is a \emph{bisimulation} if there exists a function $\rho: R \to FR$ such that the projections $\pi_1: R\to X, \pi_2: R\to Y$ are $F$-coalgebra homomorphisms from $(R, \rho)$ to $(X, \beta)$ and $(Y, \gamma)$ respectively. We say the elements $x\in X, y\in Y$ are \emph{bisimilar} if there exists a bisimulation R such that $(x,y) \in R$. \label{bisimulation}
			\end{defn}
			
			Many of the arguments regarding bisimulation made require that the coalgebra functor $F$ preserve weak pullbacks \cite{Rutten2000UniversalCA}. In particular we will need this property in order to establish the existence of a bisimulation.
			\begin{defn}[\cite{awodey2010category},\cite{gumm2021freelatticefunctorsweaklypreserve}]
				In a category $C$, given morphisms $f$ and $g$ with $\text{cod}(f) =  \text{cod}(g)$, the \emph{pullback} of $f$ and $g$ is the unique object $P$ and morphisms $p_1$, $p_2$ as in the following diagram, such that $fp_1 = gp_2$. 
				\begin{center}\begin{tikzcd}
					P & A \\
					B & C
					\arrow["{p_2}"', from=1-1, to=1-2]
					\arrow["{p_1}", from=1-1, to=2-1]
					\arrow["g"', from=1-2, to=2-2]
					\arrow["f", from=2-1, to=2-2]
				\end{tikzcd}\end{center}
				
				Furthermore the pullback is universal with this property, that is, given any $z_1: Z \to A$ and $z_2: Z \to B$ with $fz_1=gz_2$ there exists a unique $u:Z\to P$ such that $z_1 = p_1u$ and $z_2 = p_2u$. If the mapping $u$ is not necessarily unique, then it is called a \emph{weak} pullback.
			\end{defn}
			
			\begin{defn}[\cite{GUMM2001185}]
			A functor $F$ preserves weak pullbacks if it transforms weak pullback diagrams into weak pullback diagrams.
			\end{defn}
		\end{toappendix}
		\section{Introduction}\label{intoductionS}
		
		Weighted programming is a recently introduced paradigm~\cite{batz2022weighted} in which computational branches of programs are associated with quantities. For example, the program $e \WC{\frac25}{\frac35} f$ does $e$ with weight $\frac25$ and $f$ with weight $\frac35$. Semantically, each branch is executed, its results scaled, and then results of different branches added. If we were to interpret weights as probabilities and use addition of real numbers, then the above example would be a classic randomized program: a biased coin flip. If we interpret the weights as costs rather than a probability, and addition as minimum rather than traditional addition, then $e \WC{\frac25}{\frac35} f$ would model selecting the cheapest resulting branch and be well suited for optimization problems.	
			
		Weighted programs can model problems from a variety of domains, including optimization, model checking, and combinatorics \cite{batz2022weighted}. However, reasoning about their behavior is difficult as different interpretations of the weights lead to different properties of the programs and their semantics, impacting, for instance, whether the equivalence of weighted programs is decidable. In this paper, we propose an algebraic approach to reason about equivalence of weighted programs and devise a decision procedure which applies to a large class of weights. 
		
		The starting point of our development is Guarded Kleene Algebra with Tests (\gkat) \cite{Smolka_2019}, an algebraic framework for reasoning about equivalence of uninterpreted imperative programs. Its equational axiomatization offers a simple framework for deductive reasoning, and it also has a decision procedure. We propose and axiomatize a \gkat-inspired weighted programming language, \wgkat, which enables reasoning about the equivalence of deterministic, uninterpreted programs with {\em weighted branching}. We design this language with the goals of offering an equational axiomatization and a polynomial-time decision procedure for the equivalence of expressions, both derived uniformly for a broad class of weights.
		
		One of the primary challenges we address is identifying a suitable class of semirings which is sufficient to prove that bisimulation is a complete proof technique for behavioural equivalence and, moreover, to define a operational semantics of loops. Using this class of semirings we were able to generalize traditional arguments for soundness and completeness from process algebra to our setting \cite{ToddThesis}. Another challenge is the decidability of equivalence for weighted programs, which hinges on properties of the semiring as well. We leverage existing work on monoid labeled transition systems to design the operational semantics and establish that equivalence up to bisimilarity is decidable for \wgkat.
		
		Our work was developed concurrently with another weighted variant of Kleene Algebra with Tests, called Kleene Algebra with Weights and Tests (\kawt) \cite{sedlar2023kleenealgebratestsweighted}. \kawt lacks a decision procedure and admits only a restrictive class of weights. The key difference in our work that enables us to significantly enlarge the class of weights we can consider is that \kawt extends \textsf{KAT} directly whereas our starting point is \gkat, the deterministic fragment of \textsf{KAT}.  This is what allows us to offer a decision procedure and a sound and complete axiomatization: by carefully selecting the class of weights we retain the general road map to soundness and completeness of \gkat as well as the efficient decision procedure. 
		
		In a nutshell, our contributions are as follows. 
		\begin{enumerate}
			\item We propose a weighted version of \gkat and equip it with an operational semantics using a variant of weighted automata (\cref{syntaxS,semanticsS}). 
			\item We identify a class of semirings for which the semantics of weighted loops can be computed when considering bisimulation equivalence (\cref{equivS}).
			\item We axiomatize bisimulation equivalence for \wgkat and provide a proof soundness (\cref{axiomsS}) and a proof of completeness (\cref{completenessS}).
			\item We design a decision procedure for bisimilarity which runs in $O(n^3\log^2n)$ time if the number of tests is fixed, where $n$ is the size of the programs considered (\cref{decisionS}). 
				\end{enumerate}
	We conclude and discuss directions for future work in \cref{conclS}. 
		\section{Syntax}\label{syntaxS}
		\begin{figure}[t]
			\begin{alignat*}{10}
			\\[-6ex]
	e,f \in \Exp &\coloneq& \; \textbf{p} \in \Act&\qquad& \text{do } \textbf{p}\\
			&\mid&b \in \BExp&\qquad\qquad\qquad\qquad\qquad& \text{assert } \textbf{b}\\
			&\mid&e +_b f &\qquad\qquad& \text{if }\textbf{b}\text{ do }e\text{ else do } f\\
			&\mid&e;f &\qquad\qquad& \text{do }e\text{ then do }f\\
			&\mid&e^{(b)}&\qquad\qquad& \text{while } \textbf{b} \text{ do } e\\
			&\mid&v \in \Out&\qquad\qquad& \text{return } \textbf{v}\\
			&\mid&e \WC{r}{s} f &\qquad\qquad& \text{do } e \text{ with weight } \textbf{r} \text{ and } f \text{ with weight } \textbf{s}\\[-6ex]
			\end{alignat*}
			\caption{Syntax of \wgkat}
			\label{semisyntax}
		\end{figure}
		In this section, we introduce the syntax of our language and examples on how to use it to model simple quantitative programs. 
	The syntax of \wgkat (\Cref{semisyntax}) is two-sorted consisting of the set $\Exp$ of expressions containing a sort $\BExp$ of Boolean expressions also called tests. Intuitively, $\Exp$ represents the overall syntax of weighted programs, while $\BExp$ is used to specify assertions, as well as conditional expressions appearing in the scope of \texttt{if-then-else} and \texttt{while} constructs. Given a finite set $T$ of \emph{primitive tests}, we define $\BExp$, the grammar of tests, to be given by the following:
			$$b,c \in \BExp\coloneq \babort \mid \bskip \mid t\in T \mid \bar b \mid b+c \mid bc$$
	In the above, $\babort$ and $\bskip$ respectively represent false and true, $\bar{\cdot}$ denotes negation, $+$ represents disjunction and juxtaposition is conjunction. We will write $\equiv_{\BA}$ to denote Boolean equivalence in $\BExp$. Logical entailment defines a preorder given by $b \leq_{\BA} c \iff b + c \equiv_{\BA} c$. The quotient of $\BExp$ by the relation $\equiv_{\BA}$, denoted by ${\BExp}/{\equiv_{\BA}}$ is in one-to-one correspondence with the Boolean Algebra freely generated by the set $T$. The entailment $[b]_{\equiv_{\BA}} \leq [c]_{\equiv_{\BA}} \iff b \leq_{\BA} c$ defines a partial order on ${\BExp}/{\equiv_{\BA}}$. The minimal nonzero elements in that order are called \emph{atoms} and we will write $\At$ for the set of them. 
			
	The syntax of \wgkat is parametric on two fixed sets: $\Act$ of uninterpreted program actions and $\Out$ of possible return values. The first five constructs in \Cref{semisyntax}, namely uninterpreted program actions, Boolean assertions, conditionals, sequential composition and while loops capture the syntax of \gkat. The \emph{return values} are inherited from \probgkat~\cite{rozowski2023probabilistic} and intuitively correspond to the \texttt{return} operation from imperative programming languages. 

		The new syntactic construct in \wgkat is \emph{weighted choice} denoted by $e \WC{r}{s} f$. Intuitively, given two programs $e$ and $f$, $e \WC{r}{s} f$ represents a program that runs $e$ with the weight of $r$ and $f$ with the weight of $s$. To lighten up notation, we will follow a convention that sequential composition binds tighter than weighted choice. Moreover, we define a scaling operation as the following syntactic sugar for a program that immediately continues with the weight of r.
		\begin{defn}[Scaling] \label{scalingdef}
			$\odot r \coloneq \bskip \WC{r}{0} \babort$
		\end{defn} 
		As a notational convention, scaling will take the highest precedence out of all $\wgkat$ operators.
		
		 Weights in the aforementioned syntactic constructs are drawn from the set $R$ equipped with a structure of a semiring. The usage of semirings for modeling weighted computation has been standard in automata theory, program semantics and verification communities \cite{weightedhandbook}. We start by recalling the appropriate definitions concerning semirings.

		\begin{defn}[\cite{golan2013semirings}]
			A semiring $(R, +, \cdot, 0, 1)$ is a nonempty carrier set $R$ equipped with two monoid structures interacting in the suitable way.
			\begin{enumerate}
				\item $(R,+, 0)$ is a commutative monoid with identity $0$
				\item $(R, \cdot, 1)$ is a monoid with identity element $1\neq 0$
				\item $a\cdot (b+c) = a\cdot b+a \cdot c$ and $(b+c)\cdot a = b\cdot a+c \cdot a$ for all $a,b,c \in R$
				\item $0\cdot a=0=a \cdot 0$ for all $a$ in $R$
			\end{enumerate}\label{semiringDef}
		\end{defn}
		We will abuse the notation and write multiplication in $(R,\cdot,1)$ as juxtaposition; additionally we will write $R$ to denote the whole algebraic structure, when the operations defined on the carrier are clear from the context. Note that additive and multiplicative identities of the semiring (respectively denoted by $0$ and $1$) are distinct from false and true tests (respectively denoted by $\babort$ and $\bskip$). 
		
		Classic examples of semirings include (non-negative) reals/rationals with addition and multiplication and the tropical semiring of extended non-negative natural numbers (denoted by $\exN$), where the addition is given by pairwise minimum and semiring multiplication is a sum of extended natural numbers.
		
		Later, we will consider semirings subject to some mild algebraic constraints, allowing us to meaningfully define an operational model for \wgkat programs, in particular with loops. We will make those constraints precise in further sections. Before we do so, we first give two examples of problems that could be encoded using the syntax that we have just described.
		
		\begin{exmp}[Ski Rental \cite{batz2022weighted}]
			Consider the problem of a person going on a ski trip for $n$ days. Each day they either rent skis for a cost of $1$, or they buy them for a cost of $y$ and no longer have the need to rent. We can encode the situation of making this choice on the trip lasting $n$ days, via the following \wgkat expression over the tropical semiring:
			$$(\bskip \WC{1}{y} v)^n;v$$
			Here, for positive $n \in \mathbb{N}$, we write $e^n$ the $n$-fold sequential composition of the expression $e \in \Exp$. 
			
			Essentially, we perform at most $n$ choices between paying $1$ and immediately terminating upon paying $y$. The immediate termination is represented using a return value $v$. \label{ski1}
		\end{exmp}
		\begin{exmp}[Coin Game]
			Consider the problem of playing a game where a coin is flipped and if it is heads the player wins a dollar and the game continues but if it is tails no money is won and the game ends. How much money should a player expect to make? To model expected value we will use weights to represent both values and probabilities. If we let $\coin$ be a return value representing the outcome of winning a dollar, then we can encode this situation via the following \wgkat expression over the semiring of extended non-negative rationals:
			$$\left( \left(\bskip \WC{1}{1} \left(\odot 1;\coin \right)\right) \WC{.5}{.5} \left(\odot 0;\coin\right)\right)^{(\bskip)}$$
			The while true makes the program continue to execute until some terminating output is reached. We interpret the choice weights as probabilities. The outer weighted choice captures the coin flip, so each branch executes with probability $.5$. The inner weighted choice is slightly different, it represents an ``\textit{and}''. Each branch executes with probability $1$. This can be thought of as similar to concurrently executing both branches rather than nondeterministically choosing one. The intuitive difference stems from the fact that addition in the rationals is not idempotent. The weights in the scalings represent the \emph{value} of the outcome $\coin$, it is \emph{how many} dollars the player wins. \label{coin1}
		\end{exmp}

	\section{Operational Semantics}\label{semanticsS}
	In this section, we present an operational semantics for \wgkat using a variant of weighted automata~\cite{weightedhandbook}, which are akin to GKAT automata~\cite{Smolka_2019}, where additionally each transition carries a weight taken from a semiring. Before we formally introduce our operational model, we recall the necessary definitions concerning weighted transitions.

	\newcommand{\supp}{\mathsf{supp}}

	For a set $X$ and a fixed semiring $\mathbb S$, define $\Mon(X) = \{ \varphi \mid \varphi \colon X \to \mathbb S, \supp(\varphi)\text{ is finite}\}$. Here, $\supp$ refers to the support of a function given by $\text{supp}(\nu) = \{x \in X\;|\;\nu(x) \neq 0\}$. Elements of the set $\Mon(X)$ are functions $\nu:X\to \mathbb S$ which we will refer to as weightings on $X$ over $\mathbb S$. We can view these  as formal sums over $X$ with coefficients from $\mathbb S$ or, alternatively, as a generalization of finitely supported distributions obtained by replacing probabilities in an unit interval with weights taken from a semiring and by dropping the normalization requirement. Because of this analogy, given an element $x \in X$, we can define an analog of Dirac delta specialized to weightings, namely $\delta_x : X \to \mathbb S$, which maps its argument to the semiring multiplicative identity $1$ if it is equal to $x$, and the semiring additive identity $0$ otherwise. Given a set $A \subseteq X$, we will write $\nu[A] \coloneq \sum_{x\in A} \nu(x)$ for the total weight of the set $A$ under the weighting $\nu$. Moreover, any function $h \colon X \to \Mon(Y)$ can be uniquely extended to a function $\mathsf{lin}(h) \colon \Mon(X) \to \Mon(Y)$, called the {\em linearization of $h$}, given by $$\mathsf{lin}(h)(\varphi)(y) = \sum\limits_{x\in X} \varphi(x)h(x)(y)$$
	Crucially, the above satisfies the property that $\mathsf{lin}(h) \circ \delta = h$, where $\delta: X \to \Mon(X)$ is defined by $\delta(x) = \delta_x$. A categorically minded reader might observe that weightings are precisely free semimodules over $\mathbb S$ and hence form a monad on the category of sets. The linerization $\mathsf{lin}(h)$ is the unique $\mathbb S$-semimodule homomorphism of the type $\Mon(X) \to \Mon(Y)$, that allows to factor $h$ through $\delta$. This is a consequence of a free-forgetful adjunction between the category of sets and the category of $\mathbb S$-semimodules.

	\begin{toappendix}
		
		$\Mon$ is a functor \cite{10.1007/978-3-030-30942-8_18} that takes sets and functions to finite weightings and morphisms between weightings. Furthermore $\Mon(X)$ is an $\mathbb S$ \textit{semimodule} when we have pointwise addition and pointwise scalar multiplication \cite{vanheerdt2021learningweightedautomataprincipal}. Where a semimodule is defined by:
		\begin{defn}[\cite{golan2013semirings}]\label{semimodule}
			Let $\mathbb S$ be a semiring. A (left) $\mathbb S$ semimodule is a commutative monoid $(M, +)$ with identity $0_M$, for which we have a function $\cdot: \mathbb S \times M \to M$ called scalar multiplication where for all $s,r \in \mathbb S, n,m \in M$
			\begin{align*}
				s\cdot 0_M &= 0_M &  0\cdot m &= 0_M &  1\cdot m &= m \\
				s\cdot(m+n) &= s\cdot m+s\cdot n &  (s+r)\cdot m &= s\cdot m+r\cdot m&  (sr)\cdot m &= s\cdot (r\cdot m) 
			\end{align*}
		\end{defn}	
		
		We require that semimodules have commutative addition, and the result that $\Mon(X)$ is a semimodule  does not \cite{vanheerdt2021learningweightedautomataprincipal}. Clearly the commutativity of addition in $\mathbb S$ lifts to component-wise addition of our formal sums. 		
		
		\begin{defn}[\cite{Golan2003}]\label{smhom}
			Let $\mathbb S$ be a semiring and let $M, M^\prime$ be $\mathbb S$ semimodules with $m, m^\prime \in M, r \in \mathbb S$, then a function $\alpha:M \to M^\prime$ is a homomorphism if and only if both of the following are true
			\begin{align*}
				\alpha(m+m^\prime) &= \alpha (m) + \alpha (m^\prime)\\
				\alpha(rm) &= r \alpha (m)
			\end{align*}
		\end{defn}
			
	\end{toappendix}
			
	\subparagraph*{\wgkat automata.} A \wgkat automaton is a pair $(X, \beta)$ consisting of a set of states $X$ and a transition function $\beta: X\to \Mon(2+\Out+\Act\times X)^{\At}$ that assigns to each state and a Boolean atom (capturing the current state of the variables) a weighting over three kinds of elements, representing different ways a state might perform a transition: 
	\begin{itemize}
	\item an element of $2 = \{\cmark, \xmark\}$, representing immediate acceptance or rejection;
	\item an element of  $\Out$, representing a return value;
	\item a pair consisting of an atomic program action $\Act$ and a next state in $X$, representing performing a labelled transition to a next state. 
	\end{itemize}
	To lighten up notation, we will write $\beta(x)_\alpha$ instead of $\beta(x)(\alpha)$.

		\begin{toappendix}
			We will model our \wgkat automata as coalgebras for the functor $\mathcal{H} = \Mon(2+\Out+\Act\times \Id)^{\At}$ when we reason about it.
			\begin{defn}
				An \HCo is a tuple $(X, \beta)$ where $X$ is a set of states, and $\beta$ is a function $\beta: X \to \Mon(2+\Out+\Act\times X)^{\At}$.
			\end{defn}
		\end{toappendix}
		
		\begin{exmp}\label{automatonexample}
			Let $X =\{ x_1, x_2, x_3\}, \mathbb S = (\exN, +, \cdot, 0, 1), \At = \{\alpha, \beta\}, \Act = \{p_1, p_2, p_3\},$ $\Out = \{v\}$. We consider a \wgkat automaton $(X, \tau)$, whose pictorial representation is below on the left, while the definition of $\tau$ is below to the right.

			\begin{minipage}{0.45\textwidth}
				\begin{tikzcd}
					\xmark & \circ & {x_2} & \cmark \\
					{x_1} & v && \circ \\
					& \circ & {x_3}
					\arrow["4", dashed, from=1-2, to=1-1]
					\arrow["{p_1 \mid 3}"', dashed, from=1-2, to=1-3]
					\arrow["{\alpha,\beta}"', from=1-3, to=2-4]
					\arrow["\alpha", from=2-1, to=1-2]
					\arrow["\beta"', from=2-1, to=3-2]
					\arrow["15"', dashed, from=2-4, to=1-4]
					\arrow["{p_2 \mid 5}"', dashed, from=3-2, to=1-3]
					\arrow["1"', dashed, from=3-2, to=2-2]
					\arrow["{ p_3 \mid 2}"', dashed, from=3-2, to=3-3]
					\arrow["{\alpha,\beta}"', from=3-3, to=2-4]
				\end{tikzcd}
			\end{minipage}
			\begin{minipage}{0.5\textwidth}
				\centering
				\begin{align*}
					\tau(x_1)_\alpha &= 4\delta_\xmark + 3\delta_{(p_1, x_2)}\\
					\tau(x_1)_\beta &= 1\delta_v + 5\delta_{(p_2, x_2)}+2\delta_{(p_3, x_3)}\\
					\tau(x_2)_\alpha &= \tau(x_2)_\beta = \tau(x_3)_\alpha = \tau(x_3)_\beta = 15\delta_\cmark
				\end{align*}
		\end{minipage}
	\end{exmp}

\subsection{Operational semantics of Expressions}
We are now ready to define the operational semantics of \wgkat expressions: we will construct a \wgkat automaton with transition function $\partial: \Exp \to \Mon(2+\Out+\Act\times \Exp)^\At$, whose states are \wgkat expressions. The semantics of an expression $e$, will be given by the behavior of the corresponding state in that automaton. This construction is reminiscent of the Brzozowski/Antimirov derivative construction for DFA and NFA respectively. For $\alpha \in \At$, we define $\partial(e)_\alpha$ by structural induction on $e$.
	
\subparagraph*{Tests, actions, output variables.} Suppose $b \in \BExp, v \in \Out, p\in \Act$, we let
	$$
		\partial(b)_{\alpha} = \begin{cases}\delta_{\cmark} &\alpha\leq_{\text{BA}}b\\
			\delta_{\xmark} &\alpha\leq_{\text{BA}}\bar {b}
		\end{cases}
		\qquad\partial(v)_\alpha = \delta_v \;\;\;\;
		\qquad\partial(p)_\alpha = \delta_{(p,\cmark)}			   
	$$
	A test either accepts or aborts with weight $1$ depending on whether the given atom entails the truth of the test. An output variable outputs itself with weight $1$ and then terminates. An action outputs the atomic program action and then accepts with weight $1$, allowing for further computation after the action has been taken. We note here that when we do sequential composition, we will rewire acceptance into the composed expression. For this reason $\delta_\cmark$ can also be thought of as skipping with weight $1$.
	
	\subparagraph*{Guarded Choice.} A guarded choice $+_b$ is an if-then-else statement conditioned on $b$. To capture this, we let $e+_b f$ have the outgoing edges of $e$ if $\alpha \leq_{\BA} b$ and $f$ otherwise. 
	$$
		\partial(e +_b f)_\alpha = 
		\begin{cases}
			\partial(e)_\alpha & \alpha \leq b\\
			\partial(f)_\alpha & \alpha \leq \bar b
		\end{cases}
	$$
	
	\subparagraph*{Weighted Choice.} The intuition here is that the automaton executes both arguments and scales their output by the given quantity. We define $e \WC{r}{s} f$ as the automaton with all outgoing edges of $e$, scaled by $r$, and all outgoing edges of $f$, scaled by $s$. The derivative is: $$\partial(e \WC{r}{s} f)_\alpha = r\cdot\partial(e)_\alpha + s\cdot\partial(f)_\alpha$$
	where these operations are the addition and scalar multiplication of the weightings.
	\subparagraph*{Sequential Composition.}
	We define the derivative of sequential composition as: 
	$$
		\partial(e;f)_\alpha = \partial(e)_\alpha \vartriangleleft_\alpha f
	$$
	where given $\alpha$ and $f$ we let $(\text{\textendash} \vartriangleleft_\alpha f):\Mon(2+\Out+\Act\times \Exp) \to \Mon(2+\Out+\Act\times \Exp)$  be the linearization of $c_{\alpha, f} \colon 2+\Out+\Act\times \Exp \to \Mon(2+\Out+\Act\times \Exp)$, given by: 
	$$
		c_{\alpha, f}(x) = \begin{cases}
			\partial(f)_\alpha & x=\cmark\\
			\delta_x & x\in\{\xmark\} \cup \Out\\
			\delta_{(p,e^\prime;f)}& x=(p,e^\prime)
		\end{cases}
	$$
	The intuition here is that if $e$ skips, then the behavior of the composition is the behavior of $f$. If $e$ aborts or returns a value then the derivative is the behavior of $e$. Finally, if $e$ executes an action and transitions to another state, then the behavior is the next step of $e$ composed with $f$ along with emitting the given action. 
	
	\subparagraph*{Guarded loops} A desired feature of the semantics of a program that performs loops is that each loop can be equivalently written as a guarded choice between acceptance and performing the loop body followed again by the loop. However, if the loop body could immediately accept, then the loop itself could non-productively accept any number of times before making a productive transition. 
	Each time the loop immediately accepts the computation is scaled again by the weight of acceptance. One can represent this unbounded accumulation of weights through the notion of a fixpoint. There is a well-studied class of semirings featuring such a construct, namely Conway semirings. We first recall the necessary definitions.
	 
	\begin{defn}[\cite{IterationSemirings}]
		A semiring $\mathbb S$ is a Conway Semiring if there exists a function $*:R \to R$ such that $\forall a,b \in R$\label{conwaySemiring}
		\begin{align}
				(a+b)^* &= a^*(ba^*)^* \label{denesting}\\
				(ab)^* &= 1+a(ba)^*b \label{ax2}
		\end{align}
	\end{defn} 
		
	We note that the  \emph{fixpoint rule} $aa^* +1 = a^*$ holds in all Conway Semirings  \cite[p.~6]{IterationSemirings}. 
		
	The properties of Conway semirings are enough to guarantee that we can meaningfully define semantics of loops.  We will see in \cref{axiomsS} that these additional identities, \Cref{denesting,ax2} in \cref{conwaySemiring},  play a crucial role in the proof of soundness.
	
	We now define the semantics of loops by iterating the weight of immediate acceptance: 
	$$
		\partial(e^{(b)})_\alpha(x) = 
		\begin{cases}
			1 &\quad x = \cmark \text{ and } \alpha \leq_{\BA} \bar b\\
			\partial(e)_\alpha(\cmark)^{*}\partial(e)_\alpha(x)&\quad x\in \{\xmark\} \cup \Out \text{ and } \alpha \leq_{\BA} b\\
			\partial(e)_\alpha(\cmark)^{*}\partial(e)_\alpha(p, e^\prime)&\quad x=(p, (e^\prime;e^{(b)})) \text{ and } \alpha \leq_{\BA} b\\
			0&\quad \text{otherwise}
		\end{cases}
	$$

	The loop could fail the Boolean test and then it just skips and does not execute the loop body (case 1). If the loop passes the test the body executes. Recall that the body might skip with some weight. So in this case the loop must either execute an action (case 3) or terminate (case 2) after some number of iterations. If the loop terminates (case 2) then we scale the weight of termination by the accumulated weight and terminate. If the loop takes an action (case 3), then we scale the weight of the action by the accumulated weight, emit the action, and compose the loop with the next step of $e$ to re-enter the loop once computation of $e$ has finished. We identify the behavior of divergent loops (case 4) with the \emph{zero weighting}, since they never take a productive action nor terminate. Such a weighting simply assigns the weight zero to all possible branches of the program. This is distinct from the program that asserts false, which immediately aborts with weight one.
	
	We now show that all expressions have a finite number of derivatives. This will be important for the completeness of our axiomatization and decidability of equivalence. We denote the set of states in the automaton reachable from expression $e$ as $\langle e\rangle_\partial$. 
	
	\begin{toappendix}
		$\langle e\rangle_\partial$ is the $\mathcal{H}$ subcoalgebra generated by $e$. This corresponds to the set of states which are reachable from state $e$ in the automaton.
	\end{toappendix}
	
	\begin{lemrep}
		For all $e \in \Exp$, $\left\langle e\right\rangle_\partial$ is finite. We bound it by $\#(e): \Exp \to \mathbb{N}$ where \label{lem7}
		\begin{align*}
			&\#(b) = 1,\;\;\;\#(v) = 1,\;\;\; \#(p) = 2,\;\;\; \#(e \WC{r}{s} f) = \#(e)+\#(f),\\
			&\#(e +_b f) = \#(e)+\#(f),\;\;\; \#(e;f) = \#(e)+\#(f),\;\;\; \#(e^{(b)}) = \#(e)
		\end{align*}
	\end{lemrep}
	\begin{appendixproof}
		We adapt the analogous proof for \probgkat \cite{rozowski2023probabilistic}. We set up the problem and show the cases for our new operators, but do not address the untouched cases. For any $e \in \Exp$ let $\left|\left\langle e\right\rangle_\partial\right|$ be the cardinality of the least subcoalgebra of $\Exp, \partial$ containing $e$. The proof that for all $e \in \Exp$ $\left|\left\langle e\right\rangle_\partial\right| \leq \#(e)$ is by induction.
		
		Assume that $\left|\left\langle e\right\rangle_\partial\right| \leq \#(e), \left|\left\langle f\right\rangle_\partial\right| \leq \#(f)$:
		
		$\#(e \WC{r}{s} f) = \#(e) +\#(f)$ because every derivative of $e \WC{r}{s} f$ is a derivative of $e$ or a derivative of $f$. Therefore $\left|\left\langle e\WC{r}{s}f \right\rangle_\partial\right| \leq \left|\left\langle e\right\rangle_\partial\right|+\left|\left\langle f \right\rangle_\partial\right| \leq \#(e) +\#(f)$.
		
		For sequential composition the behavior of $e;f$ is either termination, a derivative of $f$ or a derivative of $e$ follow by $f$. $|\langle e;f\rangle_\partial| = |\langle e\rangle_\partial \times \{f\}| + |\langle f \rangle_\partial| \leq \langle e\rangle_\partial + \langle f\rangle_\partial$.
		
		For the guarded loop, note that every derivative of $e^{(b)}$ is a derivative of $e$ composed with a derivative of $e^{(b)}$, in the action case in particular. Clearly then the number of derivatives is no greater than the number of derivatives of $e$. That is $\left|\left\langle e^{(b)}\right\rangle_\partial\right| \leq \left|\left\langle e\right\rangle_\partial\right| \leq \#(e)$.
	\end{appendixproof}
	\subsection{Characterizing Equivalence}\label{equivS}
	We will use {bisimilarity} as our notion of equivalence. Intuitively, two states of \wgkat automata are bisimilar if at each step of the continuing computation they are indistinguishable to an outside observer. This is a canonical notion of equivalence, that is more strict than language equivalence, and  will allow us to present an efficient decision procedure in \cref{decisionS}. We formalize what it means for states to be bisimilar in terms of automaton homomorphisms.
	\begin{defn}
		Given two \wgkat automata $(X, \beta), (Y, \gamma)$ a function $f: X\to Y$ is a \wgkat homomorphism if for all $x \in X$ and $\alpha \in \At$
		\begin{enumerate}
			\item For any $o \in 2+ \Out$\;\; $\gamma(f(x))_\alpha(o) = \beta(x)_\alpha(o)$
			\item For any $(p, y) \in \Act \times Y$\;\; $\gamma(f(x))_\alpha(p,y) = \beta(x)_\alpha[\{p\} \times f^{-1}(y)]$
		\end{enumerate}
	\end{defn}
	We concretely instantiate the definition of bisimulation \cite{Rutten2000UniversalCA} for our automata, for the general version see \cref{AppCoalg}.
	\begin{defn}
		Let $(X, \beta)$ and $(Y, \gamma)$ be \wgkat automata. A relation $R\subseteq X \times Y$ is a \textit{bisimulation} if there exists a transition function $\rho: R \to \Mon(2+\Out+\Act\times R)^{At}$ such that the projections $\pi_1: R\to X, \pi_2: R\to Y$ are \wgkat homomorphisms from $(R, \rho)$ to $(X, \beta)$ and $(Y, \gamma)$ respectively. We say the elements $x\in X, y\in Y$ are \textit{bisimilar} if there exists a bisimulation R such that $(x,y) \in R$. \label{bisimulationwatered}
	\end{defn}
	
	To be able to build the combined transition structure over $R$ in \cref{bisimulationwatered} we require some additional properties that enable us to refine the original weightings on each set to construct more granular weightings over the relation. We achieve this by requiring that the \textit{additive monoid} of the semiring has two properties: \emph{refinement}~\cite{refinementmonoids} and \emph{positivity}~\cite{gumm2009copower}.
	\begin{defn}\label{ref_pos_def}
		A monoid $M$ is a refinement monoid if $x+y=z+w$ implies that there exist  $s,t,u,v$ such that $s+t=x, s+u=z, u+v=y, t+v=w$. A monoid $M$ with identity $e$ is positive (conical) if for all $x,y\in M$, $x+y=e \implies x=e=y$.
	\end{defn} 
	 Refinement extends inductively to sums with any (finite) number of summands \cite{refinementmonoids}. Positivity says that no element other than the additive identity has an additive inverse. 
	 	\begin{toappendix}
		The reason that we need the semiring to be refinement and positive (\cref{ref_pos_def}) is that these two properties correspond precisely with the functor $\Mon$ preserving weak pullbacks.
		\begin{lem}
			$\Mon$ preserves weak pullbacks if and only if $\mathbb S$ is positive and refinement.
		\end{lem}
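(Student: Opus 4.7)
The plan is to establish the equivalence in two directions, where the easier right-to-left implication constructs the pullback lifting directly, while the harder left-to-right implication extracts the two algebraic properties from specific pullback diagrams.

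For the direction ``positive and refinement $\Rightarrow$ weak pullback preservation'', I would take an arbitrary pullback square in $\mathbf{Set}$ over $f \colon A \to C$ and $g \colon B \to C$ with pullback $P = \{(a,b) \mid f(a) = g(b)\}$, and show that given $\varphi \in \Mon(A)$ and $\psi \in \Mon(B)$ with $\Mon(f)(\varphi) = \Mon(g)(\psi)$, one can build $\rho \in \Mon(P)$ projecting to $\varphi$ and $\psi$. For each $c \in C$, the agreement of the pushforwards gives $\sum_{a \in f^{-1}(c)} \varphi(a) = \sum_{b \in g^{-1}(c)} \psi(b)$. When both fibres are nonempty I would apply the $n$-ary form of refinement (deduced by induction from the binary axiom) to produce weights $\rho_{a,b}$ indexed by $(a,b) \in f^{-1}(c) \times g^{-1}(c)$ whose row- and column-sums recover $\varphi(a)$ and $\psi(b)$. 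When one fibre is empty, positivity forces every term in the opposite fibre to vanish, so the empty refinement works. Gluing these local solutions across $c \in C$ defines $\rho$, and the required projection identities then follow immediately.

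For the converse I would specialise weak pullback preservation to small, hand-picked diagrams. To recover refinement, given $x+y = z+w$ in $\mathbb{S}$, take $A = \{1,2\}$, $B = \{3,4\}$, $C = \{\ast\}$ with all maps constant and $P = A \times B$. Setting $\varphi = x\delta_1 + y\delta_2$ and $\psi = z\delta_3 + w\delta_4$, the hypothesis is exactly $\Mon(f)(\varphi) = \Mon(g)(\psi)$, and the lifting $\rho \in \Mon(A \times B)$ guaranteed by weak pullback preservation provides the four refining weights directly. To recover positivity, given $s+t = 0$, take $A = \{1,2\}$, $B = \emptyset$, $C = \{\ast\}$, so $P = \emptyset$. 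With $\varphi = s\delta_1 + t\delta_2$ and $\psi$ the empty weighting, both sides agree at $0 \in \Mon(C)$; but the only element of $\Mon(\emptyset)$ is the empty function, whose image in $\Mon(A)$ is $0$, so $\varphi = 0$, forcing $s = t = 0$.

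The main obstacle is the left-to-right half. The $n$-ary generalisation of refinement is routine by induction but interacts subtly with positivity, which is needed both for the empty-fibre case and for isolating single summands during the inductive step. I would also need to verify that the supports of the fibrewise refinements glue into a globally finitely supported weighting on $P$; this uses that $\varphi$ and $\psi$ already have finite support, so only finitely many fibres over $C$ carry nonzero mass. Once this bookkeeping is in place, checking that $\rho$ projects correctly onto $\varphi$ and $\psi$ is a direct calculation from the definitions of $\Mon(\pi_1)$ and $\Mon(\pi_2)$.
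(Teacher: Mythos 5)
Your argument is correct, but it takes a genuinely different route from the paper. The paper's proof is a three-step citation chain: it reduces preservation of weak pullbacks to weak preservation of (nonempty) pullbacks for \emph{Set} endofunctors, and then invokes Gumm--Schr\"oder's Theorem~5.3 characterising exactly when the monoid-valuation functor weakly preserves nonempty pullbacks in terms of positivity and refinement. You instead reprove that characterisation from scratch: the fibrewise $m\times n$ refinement matrix over each $c\in C$ (with positivity handling the empty-fibre case) gives the mediating weighting for the forward direction, and the two-point-over-a-point and empty-codomain squares extract the binary refinement identity and positivity for the converse. Your construction is essentially the content of the cited theorem, so nothing is lost; what it buys is a self-contained proof that makes visible precisely where each hypothesis is used, at the cost of the bookkeeping you flag (the inductive $1\times n$, $2\times n$, $m\times n$ refinement steps, restriction to supports so all sums are finite, and the standard reduction to checking only the canonical pullback $P=\{(a,b)\mid f(a)=g(b)\}$, which is legitimate since every cospan in \emph{Set} has a pullback). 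The paper's version is shorter and defers all of this to the literature.
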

		\begin{proof}
			Since $\Mon$ is a set endofunctor it preserves weak pullbacks if and only if it weakly preserves pullbacks \cite[p.~20]{gumm2021freelatticefunctorsweaklypreserve}. Furthermore by \cite[Theorem 2.7]{typesandcoalg} $\Mon$ weakly preserves nonempty pullbacks if and only if it weakly preserves pullbacks. Finally, by \cite[Theorem 5.3]{GUMM2001185} the commutative monoid weighting functor $\Mon$ weakly preserves nonempty pullbacks if and only if the monoid is positive and refinement. The lemma follows immediately.
		\end{proof}
		\begin{lem}
			the functor $\mathcal{H}$ preserves weak pullbacks if the semiring over which weightings are taken is positive and refinement. \label{refinementIsPullbacks}
		\end{lem}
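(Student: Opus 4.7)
The plan is to decompose $\mathcal{H} = \Mon(2 + \Out + \Act \times \Id)^{\At}$ into a composition of simpler set functors and invoke the fact that the class of set functors preserving weak pullbacks is closed under composition, products, coproducts, and exponentiation by a constant set.

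First I would observe, using the previous lemma in this subsection, that the hypothesis that $\mathbb S$ is positive and refinement gives weak-pullback preservation for $\Mon$. Next I would assemble the remaining building blocks: the identity functor $\Id$ trivially preserves weak pullbacks; any constant functor (in particular those sending every set to $2$, to $\Out$, or to $\Act$) preserves weak pullbacks since for constant functors the candidate cone in the image is already (weakly) limiting; and products and coproducts of weak-pullback-preserving functors again preserve weak pullbacks, which can be verified pointwise using the fact that both $\times$ and $+$ in $\mathbf{Set}$ commute with limits component-wise. Consequently the functor $G = 2 + \Out + \Act \times \Id$ preserves weak pullbacks.

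Then I would argue that composition preserves the property: if $F_1$ and $F_2$ both send weak pullback squares to weak pullback squares, then so does $F_1 \circ F_2$, because applying $F_2$ to a weak pullback yields a weak pullback, which $F_1$ then sends to another weak pullback. Applying this to $\Mon \circ G$ gives that $\Mon(2 + \Out + \Act \times \Id)$ preserves weak pullbacks. Finally, the exponential functor $(-)^{\At}$ preserves weak pullbacks, since $\At$ is a fixed set and exponentiation by a constant preserves all limits in $\mathbf{Set}$, and weak pullbacks are transported component-wise through $(-)^{\At}$. Composing once more yields the desired conclusion for $\mathcal{H}$.

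The main obstacle, if any, is strictly bookkeeping: checking the closure properties for weak pullbacks of set functors (products, coproducts, exponentials, composition). These are standard in the coalgebra literature and can be cited, e.g.\ from the references already in use (such as \cite{GUMM2001185,gumm2021freelatticefunctorsweaklypreserve}), so the proof reduces to verifying that $\mathcal{H}$ is built from these constructors starting from $\Mon$ (handled by the previous lemma), constants, and the identity.
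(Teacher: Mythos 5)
Your proposal is correct and follows essentially the same route as the paper: decompose $\mathcal{H}$ into constants, the identity, products, coproducts, exponentiation by the fixed set $\At$, and the functor $\Mon$, then invoke the standard closure properties of weak-pullback preservation together with the preceding lemma that $\Mon$ preserves weak pullbacks exactly when $\mathbb S$ is positive and refinement. The only difference is the citation used for the closure properties, which is immaterial.
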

		\begin{proof}
			If $F$ and $G$ preserve weak pullbacks, then so do $F \circ G, F+G, F \times G$ \cite{PETERGUMM2000111}. Furthermore for any functor $F$: $F(X) = X^\Sigma$ for a fixed set $\Sigma$, $F(X) = X+A$, $F(X) = A$ for a fixed set $A$, and $\Id$, the identity functor, also preserve weak pullbacks \cite{PETERGUMM2000111}. Hence if $\Mon$ preserves weak pullbacks, then so does $\mathcal{H} = \Mon(2+\Out+\Act\times \Id)^{\At}$. $\Mon$ preserves weak pullbacks if $\mathbb{S}$ is positive and refinement. Hence $\mathcal H$ preserves weak pullbacks if $\mathbb{S}$ is positive and refinement.
		\end{proof}
	\end{toappendix}
	We now have a full characterization of the semirings we will consider: \emph{positive, Conway, and refinement}. We offer the following list of semirings as examples for interpretation due to their relevance to weighted programming \cite{batz2022weighted}, and assume that all semirings discussed from now on have these three properties.
	\begin{toappendix}
		\begin{lem}
			The following semirings are positive refinement monoids.
			\begin{multicols}{2}
				\begin{enumerate}[nosep]
					\item Tropical $(\exN, \min, +, \infty, 0)$
					\item Arctic  $(\mathbb N^{+\infty}_{-\infty}, \max, +, -\infty, 0)$
					\item Bottleneck  $(\mathbb R^{+\infty}_{-\infty}, \max, \min, -\infty, \infty)$
					\item Formal languages  $(2^{\Gamma^*}, \cup, \cdot, \emptyset, \{\varepsilon\})$
					\item Extended naturals  $(\mathbb \exN, +, \cdot, 0, 1)$
					\item Viterbi  $([0,1], \max, \cdot, 0, 1)$
					\item Boolean  $(\{0,1\}, \lor, \land, 0, 1)$
					\item[]
				\end{enumerate}
			\end{multicols}
			\begin{enumerate}[nosep]
				\setcounter{enumi}{7}
				\item Extended non-negative rationals $(\exQp, +, \cdot, 0, 1)$
			\end{enumerate}\label{posrefsemis}
		\end{lem}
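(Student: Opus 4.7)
The plan is to split the eight semirings into two families according to the structure of their additive monoid, handle each family by a uniform argument, and check the $\infty$-edge cases separately where needed.

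\textbf{Family 1 (idempotent/semilattice addition):} This covers Tropical, Arctic, Bottleneck, Viterbi, Boolean, and Formal Languages. In each case the additive monoid is a semilattice, and the additive identity is the bottom element of the induced order ($\infty$ for Tropical, $-\infty$ for Arctic/Bottleneck, $0$ for Viterbi and Boolean, $\emptyset$ for Formal Languages). Positivity is then immediate: if $x+y$ equals the bottom, then both $x$ and $y$ must equal the bottom by the definition of the semilattice order. For refinement, the five arithmetic semilattices are totally ordered, so I would give a single uniform witness. Given $x+y=z+w$ with join operation $\vee$, take $s=x\wedge z$, $t=x$, $u=z$, $v=x\vee y$, and verify the four defining equations using totality (for instance $s\vee t = (x\wedge z)\vee x = x$). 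For the Formal Languages case, which is distributive but not totally ordered, I would use the standard witnesses $s=x\cap z$, $t=x\cap w$, $u=y\cap z$, $v=y\cap w$ and check the four equations using distributivity of $\cap$ over $\cup$, e.g.\ $s\cup t = (x\cap z)\cup(x\cap w) = x\cap(z\cup w) = x\cap(x\cup y)=x$.

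\textbf{Family 2 (cancellative arithmetic addition):} This covers Extended Naturals and Extended Non-Negative Rationals. Positivity is immediate since all elements are non-negative and ordinary addition of non-negative quantities cannot equal $0$ unless both summands are $0$. For refinement on finite values, I would use the classical construction: given $x+y=z+w$, set $s=\min(x,z)$, $t=x-s$, $u=z-s$, and $v = y - u = w - t$, then check that $x+y=z+w$ makes the two expressions for $v$ agree and remain non-negative. For the extended case where one of $x,y,z,w$ equals $\infty$, I would separately exhibit witnesses; for instance if $x=\infty$, then necessarily $z=\infty$ or $w=\infty$ (since $x+y=z+w=\infty$) and I can pick $s=\infty$ or $t=\infty$ accordingly to absorb the infinities while the remaining witnesses are defined as above.

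\textbf{Anticipated obstacle:} The main nuisance is not conceptual but bookkeeping: verifying the four refinement equations simultaneously for each of the eight carriers, and handling the boundary cases in Family 2 when values equal $\infty$. I expect to present Family 1 with one unified witness for the totally-ordered cases plus a short separate verification for Formal Languages, and Family 2 with the $\min$-based witness plus a brief case split for the infinite values. No creative step should be required beyond invoking distributivity and totality.
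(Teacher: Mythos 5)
Your overall strategy coincides with the paper's: positivity is dispatched immediately in both families, and refinement is proved by exhibiting explicit witnesses, with the idempotent semirings and the cancellative ones ($\exN$ and $\exQp$) treated separately. Your Family~2 argument (the $\min$-based witness $s=\min(x,z)$, $t=x-s$, $u=z-s$, $v=y-u=w-t$, plus a case split at $\infty$) is essentially the paper's construction. However, your uniform witness for the totally ordered idempotent cases does not work. Writing addition as the join $\vee$, you propose $s=x\wedge z$, $t=x$, $u=z$, $v=x\vee y$. The first two equations $s\vee t=x$ and $s\vee u=z$ hold, but $u\vee v=z\vee(x\vee y)=x\vee y$ (because $z\le z\vee w=x\vee y$), and this equals $y$ only when $x\le y$; likewise $t\vee v=x\vee y$ equals $w$ only when $z\le w$. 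A concrete failure in the Boolean semiring: $x=1$, $y=0$, $z=0$, $w=1$ satisfies $x\vee y=z\vee w$, but your witnesses give $u\vee v=1\neq 0=y$. So the last two refinement equations are simply not verified by your choice.

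The repair is already contained in your own proposal: the distributive-lattice witness $s=x\wedge z$, $t=x\wedge w$, $u=y\wedge z$, $v=y\wedge w$ that you use for formal languages works verbatim for every totally ordered case as well, since a chain is a distributive lattice (for the Tropical semiring the join is $\min$ and the complementary meet is $\max$). The paper takes a slightly different but equally short route for these cases: it normalizes so that the dominant summands on the two sides coincide and then uses the matrix whose second row contains the additive identity, i.e.\ witnesses of the form $(x,\,y,\,\text{other summand},\,\mathbf{0})$. Either fix is a one-line change, but as written your Family~1 verification has a genuine gap.
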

		\begin{proof}
			We note that none of these refinements are necessarily unique nor do we attempt to prove uniqueness. We simply find one refinement in each semiring.
			\begin{enumerate}
				\item First: $\min(a,b) = \infty \implies a = \infty = b$. Suppose $\min(a,b) = \min(x,y)$, wlog suppose $a\geq b$ and $x\geq y$ (in the natural ordering on the semiring). Then we know $a=x$ and a refinement is:
				$$
				\begin{bmatrix}
					&\sum & \sum &\vline & \sum \\
					\sum&a & y& \vline & a \\
					\sum&b & \infty& \vline & b \\
					\hline
					\sum&x & y & \vline& & 
				\end{bmatrix}
				$$
				\item First: $\max(a,b) = -\infty \implies a = -\infty = b$. Suppose $\max(a,b) = \max(x,y)$, wlog suppose $a\geq b$ and $x\geq y$. Then we know $a=x$ and a refinement is:
				$$
				\begin{bmatrix}
					&\sum & \sum &\vline & \sum \\
					\sum&a & y& \vline & a \\
					\sum&b & -\infty& \vline & b \\
					\hline
					\sum&x & y & \vline& & 
				\end{bmatrix}
				$$
				\item This semiring is positive and refinement by the same argument as the Arctic semiring.
				\item First: $a \cup b = \emptyset \implies a = \emptyset = b$. Suppose $a \cup b = x\cup y$. We know that $a \subseteq x\cup y$ and the analogous fact is true for $b$, $x$, and $y$. Then the refinement is: 
				$$
				\begin{bmatrix}
					&\sum & \sum &\vline & \sum \\
					\sum&a\cap x & a \cap y& \vline & a \\
					\sum&b\cap x & b\cap y& \vline & b \\
					\hline
					\sum&x & y & \vline& & 
				\end{bmatrix}
				$$
				\item First: $a+b = 0 \implies a = 0 = b$. Suppose $a + b = x + y$. In the natural numbers we are guaranteed that if $a\geq b$ there exists some unique $c$ such that $b+c = a$. Let $a-b = c$ if $a\geq b$. We let $\textbf{m} \coloneq \min(a, x)$ The refinement is:
				$$
				\begin{bmatrix}
					&\sum & \sum &\vline & \sum \\
					\sum&\textbf{m} & \min(a-\textbf{m},y)& \vline & a \\
					\sum&\min(x-\textbf{m}, b) & \min(b-\min(x-\textbf{m}, b), y-\min(a-\textbf{m},y))& \vline & b \\
					\hline
					\sum&x & y & \vline& & 
				\end{bmatrix}
				$$
				Clearly this refinement is valid if $a=x$. wlog suppose $a > x$. Then $y > b$, as otherwise $a+b > x+y$ would hold. So then the refinement looks like: 
				$$
				\begin{bmatrix}
					&\sum & \sum &\vline & \sum \\
					\sum&x & \min(a-x,y)& \vline & a \\
					\sum&\min(0, b) & \min(b-\min(0, b), y-\min(a-x,y))& \vline & b \\
					\hline
					\sum&x & y & \vline& & 
				\end{bmatrix}
				$$
				$$
				\begin{bmatrix}
					&\sum & \sum &\vline & \sum \\
					\sum&x & \min(a-x,y)& \vline & a \\
					\sum&0 & \min(b, y-\min(a-x,y))& \vline & b \\
					\hline
					\sum&x & y & \vline& & 
				\end{bmatrix}
				$$
				Note that $a-x>y \implies a>x+y$ which contradicts the equality of the two original sums. Hence we can simplify to: 
				$$
				\begin{bmatrix}
					&\sum & \sum &\vline & \sum \\
					\sum&x & a-x& \vline & a \\
					\sum&0 & \min(b, y-(a-x))& \vline & b \\
					\hline
					\sum&x & y & \vline& & 
				\end{bmatrix}
				$$
				Note again that $y-(a-x) > b \implies y> b+(a-x) \implies y+x > b+a$ which contradicts our assumption. Hence we simplify again:
				$$
				\begin{bmatrix}
					&\sum & \sum &\vline & \sum \\
					\sum&x & a-x& \vline & a \\
					\sum&0 & b& \vline & b \\
					\hline
					\sum&x & y & \vline& & 
				\end{bmatrix}
				$$
				Note that $a+b=x+y \implies y = (a+b)-x = (a-x)+b$. So this refinement is valid.
				\item This proof is essentially the same as the arctic semiring.
				\item First: $a \lor b = 0 \implies a = 0 = b$. Suppose $a \lor b = x \lor y$. wlog suppose $a \geq b$, $x \geq y$. Note that $a = 1 \iff x=1$ and $a = 0 \iff x = 0$ and $a = 0 \implies b=0$ as well as $x=0 \implies y=0$. Hence the refinement looks like:
				$$
				\begin{bmatrix}
					&\sum & \sum &\vline & \sum \\
					\sum&a & y& \vline & a \\
					\sum&b & 0& \vline & b \\
					\hline
					\sum&x & y & \vline& & \rlap{\qedhere} 
				\end{bmatrix}
				$$
				\item This proof is essentially the same as the natural numbers.
			\end{enumerate}
		\end{proof}
		
		\begin{lem}
			Each of the following semirings is a Conway semiring. \label{conawysemis}
			\begin{multicols}{2}
				\begin{enumerate}[nosep]
					\item Tropical $(\exN, \min, +, \infty, 0)$
					\item Arctic  $(\mathbb N^{+\infty}_{-\infty}, \max, +, -\infty, 0)$
					\item Bottleneck  $(\mathbb R^{+\infty}_{-\infty}, \max, \min, -\infty, \infty)$
					\item Formal languages  $(2^{\Gamma^*}, \cup, \cdot, \emptyset, \{\varepsilon\})$
					\item Extended naturals  $(\mathbb \exN, +, \cdot, 0, 1)$
					\item Viterbi  $([0,1], \max, \cdot, 0, 1)$
					\item Boolean  $(\{0,1\}, \lor, \land, 0, 1)$
					\item[]
				\end{enumerate}
			\end{multicols}
			\begin{enumerate}[nosep]
				\setcounter{enumi}{7}
				\item Extended non-negative rationals $(\exQp, +, \cdot, 0, 1)$
			\end{enumerate}
		\end{lem}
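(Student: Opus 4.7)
The plan is to define, for each of the eight semirings, a star operation $*$ via the formula $a^* := \sum_{n \geq 0} a^n$ interpreted appropriately, and then verify the two Conway axioms (\Cref{denesting} and \Cref{ax2}). In the idempotent semirings the infinite sum is read as a supremum in the natural order induced by $+$; in the extended naturals and extended non-negative rationals I set $a^* = 1$ when $a = 0$ and $a^* = \infty$ otherwise, matching the natural extension of a divergent geometric series.

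For the formal languages semiring the resulting $*$ is the classical Kleene star, and the two Conway axioms are the standard denesting and sliding identities of Kleene algebras, verified for instance in \cite{IterationSemirings}. For the Tropical, Bottleneck, Viterbi, and Boolean semirings, I observe that the multiplicative identity $1$ is the top element of the natural order induced by $+$: for Tropical this is because the $\min$-order reverses the usual order on $\exN$ so $1 = 0$ sits on top, while for the other three $1$ literally coincides with $\infty$ or the maximum. It follows that $a^n \leq 1$ for every $n$, hence $a^* = 1$ identically. Denesting then collapses to $1 = 1 \cdot (b \cdot 1)^* = 1$, and the product-star axiom collapses to $1 = 1 + ab$, which holds because $1$ absorbs $+$ in each of these four semirings ($\min(0, a+b) = 0$ in Tropical, $\max(\infty, \min(a,b)) = \infty$ in Bottleneck, $\max(1, ab) = 1$ in Viterbi, $1 \lor ab = 1$ in Boolean).

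The remaining cases — Arctic, extended naturals, and extended non-negative rationals — require a short case analysis. In each of them $a^*$ takes only two values: the multiplicative identity when $a$ belongs to a small subset ($\{-\infty, 0\}$ in Arctic, $\{0\}$ in the two extended numerical semirings) and the top element $\infty$ otherwise. I verify both Conway axioms by splitting on which subset each of $a, b$ belongs to, using $0 \cdot x = 0$ from \Cref{semiringDef} together with the absorption of $+$ by $\infty$ in these semirings. The main obstacle is checking the product-star axiom $(ab)^* = 1 + a(ba)^*b$ when exactly one of $a, b$ is zero, because the value of $(ba)^*$ can then disagree with the product $a^* b^*$; for instance in the extended naturals with $a = 0 < b$, the left-hand side is $0^* = 1$ while the right-hand side is $1 + 0 \cdot (b \cdot 0)^* \cdot b = 1 + 0 = 1$. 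All such verifications are finite and mechanical, so I anticipate only careful bookkeeping rather than any conceptual difficulty.
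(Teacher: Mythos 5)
Your proof is correct, but it is organized quite differently from the paper's. The paper verifies the two Conway identities by hand only for the Bottleneck and Viterbi semirings (using exactly the constant star $a^* = 1_S$ that you derive), and for the remaining six semirings it simply cites the weighted-automata handbook rather than exhibiting a star operation and checking the axioms. You instead give a uniform recipe ($a^* = \sum_{n\ge 0}a^n$, collapsing to a one- or two-valued function in each case) and carry out the finite case analyses yourself; this is more self-contained and makes the lemma independent of the literature, at the cost of more bookkeeping. All of your individual verifications check out, including the delicate Arctic cases where $-\infty + \infty = -\infty$ is forced by the annihilation axiom of \cref{semiringDef}.

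One caveat worth flagging, though it is not a gap in the proof of the lemma as stated (which only asserts the \emph{existence} of a suitable $*$): for the extended non-negative rationals you declare $a^* = \infty$ for every $a > 0$, which departs from $\sum_{n\ge 0}a^n$ when $0 < a < 1$ (where the series converges to $1/(1-a)$). Both choices satisfy the Conway axioms, but the star is not merely a witness here — it enters the operational semantics of loops via $\partial(e)_\alpha(\cmark)^*$, and the paper's coin-game example computes $.5^* \cdot .5 = 1$, which requires $.5^* = 2$, not $\infty$. So if your star were adopted, the lemma would still hold but the intended semantics over $\exQp$ would change. You should either take the convergent geometric value for $a<1$ in that semiring or note explicitly that the lemma is indifferent to the choice while the semantics is not.
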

		\begin{proof}
			\begin{enumerate}
				Semirings 1, 2, 4, 5, 7, 8 are known to be Conway Semirings \cite[p. 8]{weightedhandbook}

				For the Bottleneck semiring
				We define $\forall a \; a^* = \infty$. Then: 
				\begin{align*}
					(a+b)^* &= \infty = \infty\infty = a^*(ba^*)^*\\
					(ab)^* &= \infty = \infty+a(ba)^*b = 1_S+a(ba)^*b
				\end{align*}

				For the Viterbi semiring
				Let $\forall a\; a^* = 1$.
				\begin{align*}
					 (a+b)^* &= 1 = 11 = a^*(ba^*)^*\\
					 (ab)^* &= 1 = 1+a(ba)^*b = 1_S+a(ba)^*b&\rlap{\qedhere} 
				\end{align*}
			\end{enumerate}
		\end{proof}
	\end{toappendix}
	\begin{correp}
		The following semirings are positive, Conway, and refinement.
		\begin{multicols}{2}
			\begin{enumerate}[nosep]
				\item Tropical $(\exN, \min, +, \infty, 0)$
				\item Arctic  $(\mathbb N^{+\infty}_{-\infty}, \max, +, -\infty, 0)$
				\item Bottleneck  $(\mathbb R^{+\infty}_{-\infty}, \max, \min, -\infty, \infty)$
				\item Formal languages  $(2^{\Gamma^*}, \cup, \cdot, \emptyset, \{\varepsilon\})$
				\item Extended naturals  $(\mathbb \exN, +, \cdot, 0, 1)$
				\item Viterbi  $([0,1], \max, \cdot, 0, 1)$
				\item Boolean  $(\{0,1\}, \lor, \land, 0, 1)$
				\item[]
			\end{enumerate}
		\end{multicols}
		\begin{enumerate}[nosep]
			\setcounter{enumi}{7}
				\item Extended non-negative rationals $(\exQp, +, \cdot, 0, 1)$
		\end{enumerate}
	\end{correp}
	\begin{appendixproof}
		This follows from \cref{posrefsemis} and \cref{conawysemis}.
	\end{appendixproof}
	
	\begin{toappendix}
		With these three properties, positive, Conway, and refinement, in hand we can characterize $\mathcal H$-coalgebras.
		
		\begin{prop}
			There exists an $\mathcal{H}$-coalgebra $(Z, \zeta)$ which is final in Coalg$_{\mathcal{H}}$. Equivalently, For any $\mathcal{H}$-coalgebra $(X, \beta)$ there exists a unique homomorphism $!_{\beta}:X\to Z$ 
		\end{prop}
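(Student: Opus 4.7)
The plan is to prove this by showing that $\mathcal{H}$ is a bounded set endofunctor and invoking the standard existence theorem for final coalgebras of such functors~\cite{Rutten2000UniversalCA}. The equivalence of the two formulations is immediate from the definition of a final object in the category of $\mathcal{H}$-coalgebras: a final coalgebra is precisely one admitting a unique homomorphism from every other coalgebra, so the ``Equivalently'' clause in the statement is nothing but unfolding this definition.

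My first step would be to verify boundedness, namely that the subcoalgebra generated by any single state is small. Concretely, given an $\mathcal{H}$-coalgebra $(X,\beta)$ and a state $x \in X$, for each atom $\alpha \in \At$ the weighting $\beta(x)_\alpha \in \Mon(2+\Out+\Act \times X)$ has finite support, so $x$ has only finitely many successors along $\alpha$. Since $\At$ is finite (being the set of atoms of the Boolean algebra freely generated by the finite set $T$ of primitive tests), $x$ has only finitely many immediate successors overall, and by induction the subcoalgebra $\langle x \rangle_\partial$ is at most countable. Hence every $\mathcal{H}$-coalgebra is a union of subcoalgebras of cardinality at most $\aleph_0$.

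From this boundedness property, existence of the final coalgebra follows from well-known results on set functors. One may construct $(Z,\zeta)$ either (i) as the limit of the terminal sequence $1 \leftarrow \mathcal{H}(1) \leftarrow \mathcal{H}^2(1) \leftarrow \cdots$, which stabilises at a sufficiently large ordinal because $\mathcal{H}$ is accessible, or (ii) explicitly as the quotient, by bisimilarity, of a set of representatives of pointed coalgebras $(X,\beta,x)$ with $X = \langle x \rangle_\partial$, equipped with the transition structure induced from the representatives. The unique homomorphism $!_{\beta}\colon X \to Z$ then sends each $x \in X$ to the bisimilarity class of the pointed subcoalgebra it generates; uniqueness follows because any $\mathcal{H}$-coalgebra homomorphism is forced to identify bisimilar states, and this prescription is the only one compatible with $\zeta$.

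The main subtlety I expect is ensuring that the collection of bisimilarity classes of pointed coalgebras really forms a set rather than a proper class; this is precisely what the countable-boundedness argument of the second paragraph provides, by allowing us to restrict to coalgebras of cardinality at most $\aleph_0$ without losing representative states. Once this set-theoretic point is settled, the rest is a direct instantiation of the general coalgebraic machinery in~\cite{Rutten2000UniversalCA}, and indeed our earlier result that $\mathcal{H}$ preserves weak pullbacks will later guarantee that bisimilarity coincides with kernel-equivalence of $!_{\beta}$, which is how we will use $(Z,\zeta)$ in the sequel.
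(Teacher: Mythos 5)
Your proposal is correct and follows essentially the same route as the paper: both establish that $\mathcal{H}$ is bounded (the subcoalgebra generated by any state is countable, because each weighting has finite support and $\At$ is finite) and then invoke the standard existence theorem for final coalgebras of bounded weak-pullback-preserving set functors from \cite{Rutten2000UniversalCA}. The only cosmetic difference is that the paper derives boundedness of $\mathcal{H}$ compositionally from boundedness of $\Mon$ via closure of boundedness under composition, products and coproducts, whereas you argue it directly; both are fine.
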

		\begin{proof}
			We first argue that $\Mon$ is bounded using the same argument as finitely supported probability distributions \cite{DEVINK}. Let $\langle s \rangle$ be the smallest subcoalgebra of $(X, \beta)$ containing $s$. It is constructed by repeatedly taking the set $\{\{s\} \cup \{s^\prime \mid \exists \alpha, p \; \beta(s)_\alpha(p, s^\prime) \neq 0 \}\}$. Note that since the support is finite, we have added finitely many elements a countable number of times. Hence the subcoalgebra is bounded by the cardinal number $\omega$. Boundedness is preserved under functor composition, binary products and binary coproducts \cite[Corollary 4.9]{gumm2002coalgebras}. Since $\At$ is finite $\Id^{\At}$ will preserve boundedness. Hence $\mathcal{H}$ is bounded and preserves weak pullbacks, so the existence of the final coalgebra follows from \cite[Theorem 10.3]{Rutten2000UniversalCA}.
		\end{proof}
		When it is clear from context we will omit the subscript on the map into the final coalgebra.
		\begin{prop}
			Let $(X, \beta)$ and $(Y, \gamma)$ be $\mathcal{H}$-coalgebras. The elements $x \in X, y \in Y$ are bisimilar iff $!_\beta(x) = !_\gamma(y)$. \label{bisimIffBeh}
		\end{prop}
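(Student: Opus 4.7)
The plan is to prove both implications by exploiting the fact, established in the lemma preceding this proposition, that $\mathcal{H}$ preserves weak pullbacks whenever $\mathbb{S}$ is positive and refinement. This is a standard coalgebraic fact (see e.g.\ \cite{Rutten2000UniversalCA}), but since we are working with a concrete functor and have already done the work to know $\mathcal{H}$ preserves weak pullbacks, the argument specializes cleanly.

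For the forward direction, suppose $R \subseteq X \times Y$ is a bisimulation with transition function $\rho \colon R \to \mathcal{H}R$ and $(x,y)\in R$. By definition, the projections $\pi_1 \colon (R,\rho) \to (X,\beta)$ and $\pi_2 \colon (R,\rho) \to (Y,\gamma)$ are $\mathcal{H}$-coalgebra homomorphisms. Composing with the unique homomorphisms into the final coalgebra, both $!_\beta \circ \pi_1$ and $!_\gamma \circ \pi_2$ are homomorphisms from $(R,\rho)$ to $(Z,\zeta)$. By finality there is only one such homomorphism, so $!_\beta \circ \pi_1 = !_\gamma \circ \pi_2$. Evaluating at $(x,y)$ gives $!_\beta(x) = !_\gamma(y)$.

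For the backward direction, suppose $!_\beta(x) = !_\gamma(y)$. Form the pullback
\begin{center}
\begin{tikzcd}
P \ar[r,"p_2"] \ar[d,"p_1"'] & Y \ar[d,"!_\gamma"] \\
X \ar[r,"!_\beta"'] & Z
\end{tikzcd}
\end{center}
in $\mathbf{Set}$; explicitly $P = \{(x',y') \in X \times Y \mid !_\beta(x') = !_\gamma(y')\}$, which contains $(x,y)$. Applying $\mathcal{H}$ to this square yields a weak pullback in $\mathbf{Set}$ by the preceding lemma. Now $\beta \circ p_1$ and $\gamma \circ p_2$ are arrows from $P$ into $\mathcal{H}X$ and $\mathcal{H}Y$ respectively, and they satisfy $\mathcal{H}(!_\beta) \circ (\beta \circ p_1) = \zeta \circ !_\beta \circ p_1 = \zeta \circ !_\gamma \circ p_2 = \mathcal{H}(!_\gamma) \circ (\gamma \circ p_2)$ using that $!_\beta$ and $!_\gamma$ are coalgebra homomorphisms and that $p_1, p_2$ make the pullback square commute. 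By the weak universal property, there exists (not necessarily uniquely) a map $\rho \colon P \to \mathcal{H}P$ such that $\mathcal{H}(p_1) \circ \rho = \beta \circ p_1$ and $\mathcal{H}(p_2) \circ \rho = \gamma \circ p_2$. These equalities are exactly the statement that $p_1$ and $p_2$ are $\mathcal{H}$-coalgebra homomorphisms from $(P,\rho)$, so $P$ (viewed as a relation $P \subseteq X \times Y$) is a bisimulation containing $(x,y)$.

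The main obstacle is the backward direction, and specifically the step that requires $\mathcal{H}$ to preserve weak pullbacks; without this we cannot lift the pullback square of carriers to a coalgebra structure on $P$. Fortunately this work has already been done: the preceding appendix lemma reduces weak-pullback preservation of $\mathcal{H}$ to weak-pullback preservation of $\mathcal{M}_\omega$, which in turn follows from positivity and refinement of $\mathbb{S}$. Everything else is a routine chase of the diagram together with uniqueness from finality.
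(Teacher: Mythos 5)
Your proof is correct and is essentially the paper's argument with the black box opened: the paper simply invokes the preceding weak-pullback-preservation lemma together with Rutten's theorem that bisimilarity coincides with behavioural equivalence for such functors, and what you have written out (finality for the forward direction, lifting a coalgebra structure onto the pullback of $!_\beta$ and $!_\gamma$ for the converse) is exactly the standard proof of that cited theorem. No gaps.
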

		\begin{proof}
			$\mathcal{H}$ preserves weak pullbacks and as such by \cite[Theorem 9.3]{Rutten2000UniversalCA} $x$ and $y$ are bisimilar if and only if $!_\beta(x) = !_\gamma(y)$.
		\end{proof}
		The condition $!_\beta(x) = !_\gamma(y)$ is called behavioral equivalence.
	\end{toappendix} 
	
	\section{Axioms}\label{axiomsS}
	In this section, we present an axiomatization of bisimulation equivalence for \wgkat expressions (\Cref{axiomsFig}). We also derive some facts from the axioms which will be useful for both the examples and the proof of completeness (\Cref{sec:derivable}). 
	
	To explain the axioms in \Cref{axiomsFig}, we first need to define a property similar to Salomaa's empty word property \cite{salomaa}, which will be a familiar side condition for a fixpoint axiom.	
	\begin{defn}
		We define $E\colon \Exp \to \At \to \mathbb S$ inductively as follows
		
		\begin{minipage}[t]{0.45\textwidth}\vspace{-.65cm}
			\begin{align*}
				E(p)_\alpha &= E(v)_\alpha = 0\\
				E(b)_\alpha &= \begin{cases}1 &\alpha \leq_{\BA} b\\ 0 & \alpha \leq_{\BA} \bar b \end{cases}\\
				E(e \WC{r}{s} f)_\alpha &= rE(e)_\alpha+sE(f)_\alpha
			\end{align*}
		\end{minipage}
		\begin{minipage}[t]{0.45\textwidth}\vspace{-.65cm}
			\begin{align*}
				E(e +_b f)_\alpha &= \begin{cases}E(e)_\alpha &\alpha \leq_{\BA} b\\ E(f)_\alpha & \alpha \leq_{\BA} \bar b \end{cases}\\
				E(e;f)_\alpha &= E(e)_\alpha E(f)_\alpha\\
				E(e^{(b)})_\alpha &= E(\bar b)_\alpha
			\end{align*}
		\end{minipage}
	\end{defn}
	\begin{lemrep}
		Let $e \in \Exp, \alpha \in \At$. It holds that $E(e)_\alpha = \partial(e)_\alpha(\cmark)$\label{lem44}.
	\end{lemrep}
	\begin{appendixproof}
		We prove this by induction of the construction of $e$. The base cases hold immediately.
		\begin{align*}
			E(e\WC{r}{s}f) &= rE(e)_\alpha + sE(f)_\alpha& \\
			&= r\partial(e)_\alpha(\cmark) + s\partial(f)_\alpha(\cmark) & \text{Inductive hypothesis}\\
			&= \partial(e \WC{r}{s} f)_\alpha(\cmark) &
		\end{align*}\begin{align*}
			E(e +_b f) &= \begin{cases}E(e)_\alpha &\alpha \leq_{\BA} b\\ E(f)_\alpha & \alpha \leq_{\BA} \bar b \end{cases}&\\
			&= \begin{cases}\partial(e)_\alpha(\cmark) &\alpha \leq_{\BA} b\\ \partial(f)_\alpha(\cmark) & \alpha \leq_{\BA} \bar b \end{cases}& \text{Inductive hypothesis}\\
			&= \partial(e +_b f)_\alpha(\cmark) & 
		\end{align*}
		\begin{align*}
			E(e;f) &= E(e)_\alpha;E(f)_\alpha& \\
			&= \partial(e)_\alpha(\cmark)\partial(f)_\alpha(\cmark) & \text{Inductive hypothesis}\\
			&= \partial(e;f)_\alpha(\cmark) &
		\end{align*}
		\begin{align*}
			E(e^{(b)}) &= \begin{cases}1 & \alpha \leq_{\BA} \bar b\\0 & \alpha \leq_{\BA} b\end{cases}& \\
			&= \begin{cases}\partial(e^{(b)})_\alpha(\cmark) & \alpha \leq_{\BA} \bar b\\\partial(e^{(b)})_\alpha(\cmark) & \alpha \leq_{\BA} b\end{cases}& \\
			&= \partial(e^{(b)})_\alpha(\cmark)&\rlap{\qedhere} 
		\end{align*}
	\end{appendixproof}
	\begin{figure}[t!]\vspace{-.8cm}
		\begin{equation*}
			\begin{aligned}[t]
				&\textbf{\underline{Guarded Choice Axioms\phantom{q}}} \\
				\Ax{(G1)}\; &  e +_b e \equiv e \\
				\Ax{(G2)}\; &  e+_b f\equiv b;e+_{b}f\\
				\Ax{(G3)}\; &  e +_b f \equiv f+_{\bar b} e \\
				\Ax{(G4)}\; &  (e+_b f)+_cg\equiv e+_{bc}(f+_cg) \\
				&\textbf{\underline{Distributivity Axioms}} \\
				\Ax{(D1)}\; &  e\WC{r}{s} (f+_b g) \equiv (e \WC{r}{s} f) +_b (e \WC{r}{s} g) \\
				\Ax{(D2)}\; &  e\WC{r}{s} (f \WC{t}{u} g) \equiv e\WC{r}{1} (f \WC{st}{su} g) \\
				\Ax{(D3)}\; &  b;(e \WC{r}{s} f) \equiv b;(b;e \WC{r}{s} b;f)\\
				&\textbf{\underline{Sequencing Axioms}} \\
				\Ax{(S1)}\; &  \bskip; e \equiv e \equiv e; \bskip \\
				\Ax{(S2)}\; &  (e; f); g \equiv e; (f; g) \\
				\Ax{(S3)}\; &  \babort ; e\equiv \babort\\
				\Ax{(S4)}\; &  (e\WC{r}{s}f);g\equiv e;g \WC{r}{s} f;g \\
				\Ax{(S5)}\; &  (e+_bf);g\equiv e;g +_b f;g\\
				\Ax{(S6)}\; &  v ; e\equiv v \\
				\Ax{(S7)}\; &  b;c\equiv bc
			\end{aligned}\quad
			\begin{aligned}[t]
				&\textbf{\underline{Loop Axioms}} \\
				\Ax{(L1)}\; &  e^{(b)} \equiv e ; e^{(b)} +_b \bskip \\[2ex]
				\Ax{(L2)}\; &  \frac{e\equiv (f \WC{r}{s} \bskip)+_c g}{c;e^{(b)}\equiv c;((\odot s^*r ;f;(e)^{(b)})+_b \bskip)}\\[2ex]	
				&\textbf{\underline{Scaling Axioms}} \\
				\Ax{(C1)}\; &  \odot 1 \equiv \bskip\\
				\Ax{(C2)}\; &  \odot 0;e \equiv \odot 0\\ \\
				&\textbf{\underline{Weighted Choice Axioms}} \\
				\Ax{(W1)}\; &  e \WC{r}{s} e \equiv \odot (r+s);e \\
				\Ax{(W2)}\; &  e \WC{r}{s} f \equiv f \WC{s}{r} e \\
				\Ax{(W3)}\; &  e \WC{r}{s} (f \WC{t}{u} g ) \equiv (e \WC{r}{st} f) \WC{1}{su} g  \\ 
				\Ax{(W4)}\; &  e \WC{ru}{s} f \equiv (\odot u;e) \WC{r}{s} f \\[1.5ex]
				&\textbf{\underline{Fixpoint Rule} }   \\
				\Ax{(F1)}\; & \frac{ g\equiv e;g +_b f \;\;\; \forall \alpha\in\At\; E(e)_\alpha = 0}{g \equiv e^{(b)}; f}
			\end{aligned}
		\end{equation*}\vspace{-.45cm}
		\caption{The axiomatization of \wgkat bisimulation equivalence. In these statements we let $e,f,g \in \Exp, b,c \in \BExp, v\in\Out, \alpha \in \At, p \in \Act,\;  r,s,t,u \in \textbf{S}$. $1$ and $0$ weights refer to the multiplicative and additive identity in the chosen semiring.}
		\label{axiomsFig}
	\end{figure}
	\subparagraph*{Axiom Summary}
	
	We define ${\equiv} \subseteq {\Exp \times \Exp}$ as the smallest congruence satisfying the  axioms in \cref{axiomsFig}. Axioms \Ax{G1}-\Ax{G4} are inherited from \gkat \cite{Smolka_2019} and describe guarded choice. Axioms \Ax{D1} and \Ax{D2} state that weighted choice distributes over guarded choice and another weighted choice, respectively. \Ax{D3} describes the interaction of tests and weighted choice.

	Axioms \Ax{S1}-\Ax{S7} characterize sequential composition. \Ax{S4} is perhaps the most interesting; it states that composition right distributes over weighted choice just like guarded choice (\Ax{S5}). Since we equate the programs with respect to bisimilarity, which is a branching-time notion  of equivalence, the symmetric rules of left distributivity are not sound. \Ax{S1} and \Ax{S2} state that \wgkat expressions have the structure of a monoid with identity $\bskip$, and with $v \in \Out$ and $\babort$ being absorbent elements when composed on the left (\Ax{S6}, \Ax{S3}).
	
	\Ax{C1}-\Ax{C2} are about scaling. Scaling by the multiplicative identity does nothing, it is the same as skip. Scaling by $0$ annihilates any following expressions as it multiplies the weights by the multiplicative annihilator of the semiring. Though note that it is not the same as abort ($\babort$), since scaling by $0$ assigns weight $0$ to $\xmark$, while $\babort$ assigns weight $1$ to $\xmark$. In this way both expressions annihilate any following expressions, but are not the same.
	
	\Ax{W1}-\Ax{W4} describe weighted branching. Identical branches can be combined, \Ax{W1}. The order of branches doesn't matter, \Ax{W2}. A repeated branching can be re-nested by re-weighting, \Ax{W3}. Finally, weights can be pushed through the branching into the following expression, \Ax{W4}.
	
	\begin{rem}
		We have made a choice to present some of our axioms using the syntactic sugar $\odot$. The axiomatization could be equivalently expressed without it. For example, if we were to express \Ax{C1} without the syntactic sugar it would state $(\bskip \WC{1}{0} \babort);\bskip \equiv \bskip$. Stated explicitly, this axiom offers a way to remove weighted choices from expressions. Note that it is not derivable from other axioms of weighted choice as it is the only axiom with a weighted choice on the left-hand side but not on the right. The use of $\odot$ improves the readability and usability of the axioms. This is because it provides a compact representation of weighted choices with branches scaled by $0$ (which are irrelevant for the behavior).
	\end{rem}
	
	Finally we discuss the loop axioms. \Ax{F1} is the fixpoint rule. It states that fixpoints for productive loops are unique. We capture the notion of productivity of expressions, using an auxiliary operator $E$. \Ax{L1} is standard loop unrolling. A loop body is executed no times or one or more depending on the test, so it is the same as making a guarded choice between skipping or executing the loop body one or more times. \Ax{L2} allows unrolling and flattening the first set of iterations of a weighted loop if one of the branches is non-productive. The idea is that the loop can execute the nonproductive branch any number of times, but eventually must take the productive branch, after which it reenters the loop. Since a program action has now been taken, the boolean test $c$ could now fail. Hence when the loop body is re-entered it is of the original form. Intuitively the loop has been unrolled up to the first time it executes the productive branch and this nonproductive unrolled portion has been flattened into one scaling followed by the productive program action and then the original loop. 
	\begin{toappendix}
		We show that $\equiv$ is a bisimulation equivalence by first showing that any pair ${(e,f)} \in {\equiv}$ is behaviorally equivalent. Since $\mathcal H$ preserves weak pullbacks this  also implies that they are bisimilar by \cref{bisimIffBeh}, making $\equiv$ a bisimulation equivalence.
		
		Our proof of soundness via behavioral equivalence is inspired by process algebra \cite{ToddThesis}. We will first show that the congruence $\equiv$ is the kernel of a morphism, which will then allow us to express the morphisms to the final coalgebra as composite morphisms of the canonical quotient morphism and the morphism from the quotient coalgebra to the final coalgebra.
		
		First we must introduce some notation.
		\begin{defn}[\cite{rozowski2023probabilistic}]
			Let $A \subseteq \Exp$ and $f \in \Exp$, then $A/f = \{g \in \Exp \mid g;f \in A\}$.
		\end{defn}
		Furthermore we modify some lemmas from \probgkat about this piece of notation which remain sound with our new model. Their proofs are unchanged in the new model, so we omit them. 
		\begin{lem}[Lemma 41, \cite{rozowski2023probabilistic}]
			If $R \subseteq \Exp \times \Exp$ is a congruence relation with respect to \wgkat operators and $(e,f) \in R$ then $R(A/e) \subseteq R(A)/f$.
			\label{lem41}
		\end{lem}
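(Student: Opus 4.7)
The plan is to unpack both sides and reduce the inclusion to a single application of the congruence hypothesis. First I would take an arbitrary element $g \in R(A/e)$ and chase the definitions: by the definition of $R(A/e)$ there exists some $h \in A/e$ with $(h,g) \in R$, and by the definition of the right-quotient $A/e$ that element $h$ satisfies $h;e \in A$.

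Next I would invoke the congruence. By hypothesis $R$ is a congruence with respect to all \wgkat operators, including sequential composition, so from $(h,g) \in R$ and $(e,f) \in R$ we immediately deduce $(h;e, \; g;f) \in R$. Since $h;e \in A$, this witnesses that $g;f \in R(A)$, which by definition of the right-quotient means $g \in R(A)/f$. As $g$ was arbitrary, $R(A/e) \subseteq R(A)/f$, as required.

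I do not expect any genuine obstacle here: the lemma is essentially a bookkeeping observation about how $R$-closure commutes with the division operator, and its entire content is the single congruence step $(h,g), (e,f) \in R \Rightarrow (h;e, g;f) \in R$. The only point worth double-checking is that sequential composition is indeed among the operators with respect to which $R$ is assumed to be a congruence, but this is immediate since $(-;-)$ is part of the \wgkat signature, and the hypothesis is stated with respect to \wgkat operators generally. Consequently the proof should fit in a few lines and is formally identical to the analogous argument for \probgkat in~\cite{rozowski2023probabilistic}.
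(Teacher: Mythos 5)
Your proof is correct and is exactly the standard definition-chasing argument; the paper itself omits the proof, citing \cite{rozowski2023probabilistic} and noting that the argument is unchanged in the weighted setting, and the single congruence step $(h,g),(e,f)\in R \Rightarrow (h;e,\,g;f)\in R$ you identify is indeed the entire content of the lemma.
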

		\begin{lem}[Lemma 42, \cite{rozowski2023probabilistic}]
			For all $\alpha \in \At$, $r,f \in \Exp$, $p \in \Act$, $A \subseteq \Exp$
			$$\partial(e;f)_\alpha[\{p\} \times A] = \partial(e)_\alpha[\{p\} \times A/f] + \partial(e)_\alpha(\cmark)\partial(f)_\alpha[\{p\}\times A]$$
			\label{lem42}
		\end{lem}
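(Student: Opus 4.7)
The plan is to unfold both sides by the definitions of $\partial(e;f)_\alpha$ and the linearization, and then rearrange the resulting double sum. First I would expand $\partial(e;f)_\alpha(p,g)$ using the fact that $\partial(e;f)_\alpha = \mathsf{lin}(c_{\alpha,f})(\partial(e)_\alpha)$:
\[
\partial(e;f)_\alpha(p,g) \;=\; \sum_{x \in 2+\Out+\Act\times\Exp} \partial(e)_\alpha(x)\,c_{\alpha,f}(x)(p,g).
\]
By the case analysis in the definition of $c_{\alpha,f}$, the value $c_{\alpha,f}(x)(p,g)$ is nonzero only in two situations: either $x=\cmark$, where $c_{\alpha,f}(\cmark)(p,g) = \partial(f)_\alpha(p,g)$, or $x = (p,e')$ for some $e'$ with $e';f = g$, where $c_{\alpha,f}(p,e')(p,g) = 1$. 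The contributions from $x=\xmark$, from $x \in \Out$, and from $x=(q,e')$ with $q\neq p$ all vanish, so only these two kinds of summands survive.

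Next I would collect terms to obtain
\[
\partial(e;f)_\alpha(p,g) \;=\; \partial(e)_\alpha(\cmark)\,\partial(f)_\alpha(p,g) \;+\; \partial(e)_\alpha\!\left[\{p\}\times\{e':e';f=g\}\right].
\]
Summing this identity over all $g \in A$ gives the left-hand side of the lemma. The first piece becomes $\partial(e)_\alpha(\cmark)\,\partial(f)_\alpha[\{p\}\times A]$ immediately by linearity of the $\nu[\,\cdot\,]$ notation. For the second piece I would note that the family of sets $\{e':e';f=g\}$ indexed by $g$ is pairwise disjoint (if $e';f=g$ and $e';f=g'$ then $g=g'$) and that their union over $g \in A$ is exactly $A/f = \{e':e';f\in A\}$, by the definition of $A/f$. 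Hence
\[
\sum_{g \in A}\partial(e)_\alpha\!\left[\{p\}\times\{e':e';f=g\}\right] \;=\; \partial(e)_\alpha[\{p\}\times A/f],
\]
which combined with the first piece yields the claimed equation.

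The only subtle point, and the place I would be most careful, is the disjointness/coverage argument for the second summand: we need that syntactic sequential composition is injective in its first argument (with $f$ fixed), so that each $e' \in A/f$ contributes to exactly one $g \in A$ and conversely. Since $e';f$ is a concrete syntactic term in which both $e'$ and $f$ are recoverable, this is immediate, but it is the one step where one could accidentally double-count weights; everything else is a mechanical rewrite using the definitions of $\partial(e;f)_\alpha$, $c_{\alpha,f}$, and $\nu[\,\cdot\,]$.
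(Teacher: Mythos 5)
Your proof is correct, and it is the expected argument: the paper itself omits the proof of this lemma, deferring to the corresponding lemma in the \probgkat paper, and your unfolding of $\partial(e;f)_\alpha$ via the linearization of $c_{\alpha,f}$ followed by the fiber-wise regrouping over $A/f$ is exactly how that argument goes. One small remark: the ``subtle point'' you flag is not really injectivity of $e' \mapsto e';f$ but merely its well-definedness --- each $e'$ lies in the fiber of exactly one $g$ because $e';f$ is a single term --- and that is precisely what your parenthetical ``if $e';f=g$ and $e';f=g'$ then $g=g'$'' establishes, so no double-counting can occur regardless of whether distinct $e'$ happen to yield the same composite.
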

		\begin{lem}[Lemma 45, \cite{rozowski2023probabilistic}] \label{lem45}
			Let $R \subseteq \Exp \times \Exp$ be a congruence with respect to \wgkat operators such that $(e,f) \in R$, and let $Q \in \Exp/R$ be an equivalence class of $R$. It holds that $Q/e = Q/f$.
		\end{lem}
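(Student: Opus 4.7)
The plan is to prove both inclusions $Q/e \subseteq Q/f$ and $Q/f \subseteq Q/e$ by a symmetric, essentially direct, argument from the definition of $Q/(-)$ and the congruence assumption on $R$. First I would unfold the defined notation: by the definition of $A/h$, a term $g$ belongs to $Q/e$ precisely when $g;e \in Q$, and analogously for $Q/f$. So the statement reduces to the claim that for every $g \in \Exp$, $g;e \in Q$ if and only if $g;f \in Q$.

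Next I would exploit the two properties of $R$. Since $R$ is an equivalence relation (it is a congruence, hence in particular reflexive), we have $(g,g) \in R$. Combining this with the hypothesis $(e,f) \in R$ and the fact that $R$ is a congruence with respect to the \wgkat operators, and in particular with respect to sequential composition, yields $(g;e,\, g;f) \in R$. Consequently $g;e$ and $g;f$ lie in the same $R$-equivalence class.

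Now I would finish by invoking that $Q$ is an equivalence class of $R$: if $g;e \in Q$, then $Q$ is the class of $g;e$, and since $g;f$ is $R$-equivalent to $g;e$, it must belong to the same class $Q$; hence $g;f \in Q$. Therefore $g \in Q/e \Rightarrow g \in Q/f$. The converse implication is entirely symmetric (reflexivity still gives $(g,g) \in R$ and the same congruence argument applies), which proves $Q/e = Q/f$.

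The argument is essentially routine once the definitions are unfolded, and I do not expect a genuine obstacle. The only point that requires minor care is ensuring the correct direction of the congruence closure, namely that sequential composition on the right, and \emph{not} on the left, is what is invoked; this is exactly the form of congruence needed and is guaranteed by $R$ being a congruence for all \wgkat operators (including $;$). No appeal to the axioms of \Cref{axiomsFig} or to Conway/refinement/positivity properties of the semiring is needed here, as this is a purely relational statement about congruences.
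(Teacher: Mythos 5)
Your proof is correct, and it is exactly the intended argument: the paper omits the proof of this lemma (citing the corresponding lemma of \probgkat, whose proof is the same routine unfolding), namely that $g;e \in Q$ implies $(g;e,\,g;f)\in R$ by reflexivity plus congruence for $;$, hence $g;f \in Q$ since $Q$ is an $R$-class, with the converse by symmetry.
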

		\begin{lem}[Lemma 46, \cite{rozowski2023probabilistic}]
			Let $e,f \in \Exp$ and let $Q \in \Exp/\equiv$. Then $(Q/f)/e = Q/e;f$.  \label{lem46}
		\end{lem}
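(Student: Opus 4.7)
The plan is a straightforward chase through the two definitions, with associativity of sequential composition doing all the work. First I would unfold the left-hand side: by the definition $A/g = \{h \in \Exp \mid h;g \in A\}$, applied twice, I get $h \in (Q/f)/e$ iff $h;e \in Q/f$ iff $(h;e);f \in Q$. Unfolding the right-hand side once gives $h \in Q/(e;f)$ iff $h;(e;f) \in Q$. So the two sets coincide precisely when $(h;e);f \in Q \Leftrightarrow h;(e;f) \in Q$ holds for every $h \in \Exp$.

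The key observation is that axiom \Ax{S2} gives $(h;e);f \equiv h;(e;f)$, and by construction $Q \in \Exp/{\equiv}$ is an equivalence class of $\equiv$ and hence closed under $\equiv$. Membership of either of these two terms in $Q$ is therefore equivalent to membership of the other, and both set inclusions follow immediately. Double inclusion then yields $(Q/f)/e = Q/(e;f)$, which is the claim.

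There is no genuine obstacle here: the lemma is a bookkeeping consequence of the associativity axiom \Ax{S2} together with the fact that $Q$ is a $\equiv$-equivalence class. The one point to be careful about is simply that $\equiv$ must include \Ax{S2}, which it does by definition as the smallest congruence satisfying the axioms of \cref{axiomsFig}. This parallels the analogous result from \probgkat cited as the source, and the argument transfers verbatim because \wgkat inherits associativity of sequencing unchanged.
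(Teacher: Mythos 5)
Your proof is correct and is exactly the intended argument: the paper omits this proof (citing it as unchanged from \probgkat), and the expected reasoning is precisely your two-step unfolding of the definition of $A/g$ followed by an appeal to \Ax{S2} and the closure of the $\equiv$-class $Q$ under $\equiv$. Nothing is missing.
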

		\begin{lem}[Lemma 48, \cite{rozowski2023probabilistic}]
			Let $e \in \Exp$ and $b \in \BExp$. If $a \leq_{\BA} b$, then $\partial(b;e)_\alpha = \partial(e)_\alpha$. \label{lem48}
		\end{lem}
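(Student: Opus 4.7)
The proof is a short direct computation unwinding the definition of the derivative of sequential composition. The plan is to first apply the definition $\partial(b;e)_\alpha = \partial(b)_\alpha \vartriangleleft_\alpha e$, then use the hypothesis $\alpha \leq_{\BA} b$ to conclude via the base case that $\partial(b)_\alpha = \delta_\cmark$. At this point the problem reduces to evaluating the linearization $\mathsf{lin}(c_{\alpha,e})$ at a Dirac delta, and the crucial identity $\mathsf{lin}(h) \circ \delta = h$ (stated right after the definition of linearization) finishes the job.

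More concretely, I would chain the equalities
\[
\partial(b;e)_\alpha \;=\; \partial(b)_\alpha \vartriangleleft_\alpha e \;=\; \mathsf{lin}(c_{\alpha,e})(\delta_\cmark) \;=\; c_{\alpha,e}(\cmark) \;=\; \partial(e)_\alpha,
\]
where the second equality is justified by the assumption $\alpha \leq_{\BA} b$ and the definition of $\partial$ on tests, the third equality is the defining property $\mathsf{lin}(h) \circ \delta = h$ applied to $h = c_{\alpha,e}$, and the last equality is simply the first clause in the case split defining $c_{\alpha,e}$.

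There is essentially no obstacle: the statement is really a sanity check that the sequential composition derivative correctly absorbs a leading test whose condition is entailed by the current atom. The only thing one has to be careful about is not to confuse $\partial(b)_\alpha$ as a weighting (an element of $\Mon(2+\Out+\Act\times\Exp)$) with the unique target $\cmark$ that its support singles out; once this is kept straight, the computation above is purely symbolic. No induction on the structure of $e$ is needed, and no properties of the semiring beyond the presence of $\delta_\cmark$ are used.
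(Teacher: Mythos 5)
Your proof is correct. The paper omits the proof of this lemma entirely (deferring to the unchanged argument in the cited \probgkat paper), and your direct computation --- unfolding $\partial(b;e)_\alpha = \partial(b)_\alpha \vartriangleleft_\alpha e$, using $\partial(b)_\alpha = \delta_\cmark$ under the hypothesis $\alpha \leq_{\BA} b$, and applying the identity $\mathsf{lin}(h)\circ\delta = h$ to reduce to $c_{\alpha,e}(\cmark) = \partial(e)_\alpha$ --- is exactly the expected argument.
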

		
		Before proving soundness we recall two relevant properties of Conway semirings.
		\begin{lem}[\cite{IterationSemirings}] \label{fixpointrule}
			If $\mathbb S$ is a Conway semiring, the following equation holds in $\mathbb S$
			$$
				a^* = 1+aa^*
			$$ 
		\end{lem}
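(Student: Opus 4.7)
The plan is to derive this one-step from the Conway axiom (\ref{ax2}), namely $(ab)^* = 1 + a(ba)^*b$, by specializing $b$ to the multiplicative identity $1$ of the semiring. I do not expect to need the denesting axiom (\ref{denesting}) at all.

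Concretely, I would substitute $b := 1$ in (\ref{ax2}) to obtain
\[
  (a \cdot 1)^* \;=\; 1 + a \cdot (1 \cdot a)^* \cdot 1,
\]
and then simplify each occurrence of multiplication by $1$ using the fact that $(\mathbb S, \cdot, 1)$ is a monoid, which collapses the left-hand side to $a^*$ and the right-hand side to $1 + a \cdot a^*$. This yields the desired identity $a^* = 1 + aa^*$.

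There is no real obstacle here; the only thing to be careful about is that (\ref{ax2}) is stated for arbitrary semiring elements, so applying it with $b = 1$ is immediate and requires no additional hypothesis beyond $\mathbb S$ being a Conway semiring. The result is already noted in the text as holding ``in all Conway Semirings'' following \cref{conwaySemiring}, and this short substitution is essentially a repackaging of that remark in a citable form for later use in the soundness argument.
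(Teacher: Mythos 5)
Your derivation is correct: specializing $b := 1$ in the Conway axiom $(ab)^* = 1 + a(ba)^*b$ and simplifying the unit multiplications immediately yields $a^* = 1 + aa^*$, and no use of the denesting axiom is needed. The paper itself supplies no proof of this lemma --- it simply cites \cite[p.~6]{IterationSemirings} --- so your one-line substitution is a valid and standard way to make the statement self-contained; there is nothing in the paper's argument to compare against.
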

		\begin{lem}[\cite{IterationSemirings}] \label{slidingrulelem}
			If $\mathbb S$ is a Conway semiring, the following equation holds in $\mathbb S$
			$$
				(ab)^*a = a(ba)^*
			$$
		\end{lem}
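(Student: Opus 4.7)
The plan is to prove $(ab)^\ast a = a(ba)^\ast$ by a short algebraic manipulation, using only the defining Conway identity \Ax{ax2} and (a dual version of) the fixpoint rule. The idea is to expand the left-hand side using \Ax{ax2}, factor out a leading $a$, and recognize the remaining bracketed term as $(ba)^\ast$ via the fixpoint rule.

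First I would apply axiom \Cref{ax2} (i.e.\ $(xy)^\ast = 1 + x(yx)^\ast y$) with $x = a$, $y = b$ to rewrite $(ab)^\ast = 1 + a(ba)^\ast b$. Multiplying on the right by $a$ and using right distributivity of the semiring gives
$$
    (ab)^\ast a \;=\; a + a(ba)^\ast ba \;=\; a\bigl(1 + (ba)^\ast ba\bigr).
$$
It then remains to show that $1 + (ba)^\ast ba = (ba)^\ast$, which is a right-handed instance of the fixpoint rule applied to the element $ba$.

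The one subtlety is that the paper only records the left version of the fixpoint rule, namely $c^\ast = 1 + c c^\ast$. To obtain the right version $c^\ast = 1 + c^\ast c$ that we need with $c = ba$, I would derive it directly from \Cref{ax2} by specializing $a := 1$: that identity reads $(1\cdot c)^\ast = 1 + 1 \cdot (c\cdot 1)^\ast \cdot c$, which collapses to $c^\ast = 1 + c^\ast c$ using the monoid laws of the semiring. Substituting this with $c = ba$ into the factored expression above yields $(ab)^\ast a = a(ba)^\ast$ as required.

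I do not expect any real obstacle here; the main thing to get right is to use the correct (right-handed) form of the fixpoint rule and to justify it from the Conway axioms stated in \Cref{conwaySemiring} rather than only from the left-handed version recorded in the text. Everything else is straightforward distributivity in the semiring.
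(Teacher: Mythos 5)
Your derivation is correct. The paper does not actually prove this lemma — it is imported by citation from the iteration-semirings literature — so there is no in-paper argument to compare against; your self-contained derivation from the Conway axioms is a welcome addition. The two steps that matter both check out: specializing \Cref{ax2} at $a:=1$ does yield the right-handed fixpoint identity $c^* = 1 + c^*c$ (the paper only records the left-handed form $c^* = 1 + cc^*$, so you are right to derive the variant you need rather than assume it), and the computation
$$(ab)^*a = \bigl(1 + a(ba)^*b\bigr)a = a + a(ba)^*ba = a\bigl(1 + (ba)^*ba\bigr) = a(ba)^*$$
uses only distributivity and the monoid laws, all of which hold in any semiring. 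Note that your argument never invokes the denesting axiom \Cref{denesting}, so the sliding rule in fact follows from \Cref{ax2} alone; this is consistent with the paper's later remark that, conversely, \Cref{ax2} is derivable from sliding plus the left fixpoint rule.
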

		
		While Conway semirings are not the most general axiomatization of a semiring with a $^*$ operator, we need the $^*$ operator to obey \cref{denesting} in \cref{conwaySemiring}, and \cref{slidingrulelem} in addition to the fixpoint rule, \cref{fixpointrule}. We note that \ref{ax2} in \cref{conwaySemiring} is easily derivable from \cref{slidingrulelem} and the fixpoint rule so this class of semiring is not too strong.
		\begin{lem}
			If a semiring has a total operator $^*:R \to R$ with the properties $\forall a,b\;\; (ab)^*a = a(ba)^*$ and $\forall a\;\; a^* = 1+aa^*$, then $\forall a,b\;\; (ab)^* = 1+a(ba)^*b$.
		\end{lem}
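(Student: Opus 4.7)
The plan is to derive $(ab)^* = 1 + a(ba)^*b$ by applying the two assumed identities — the fixpoint rule $a^* = 1 + aa^*$ and the sliding rule $(ab)^*a = a(ba)^*$ — in sequence, with an application of associativity in between. This is a short calculation, so the main work is simply chaining the right rewrites.

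First I would instantiate the fixpoint rule at the element $ab$, yielding
\[
(ab)^* = 1 + (ab)(ab)^*.
\]
Next I would use associativity of semiring multiplication to regroup $(ab)(ab)^* = a\bigl(b(ab)^*\bigr)$. Then I would apply the sliding rule with the roles of $a$ and $b$ swapped: the assumed identity $\forall a,b\; (ab)^*a = a(ba)^*$ specialized to the pair $(b,a)$ reads $(ba)^*b = b(ab)^*$, so $b(ab)^* = (ba)^*b$. Substituting this in gives $(ab)(ab)^* = a(ba)^*b$, and combining with the first line produces $(ab)^* = 1 + a(ba)^*b$, as required.

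There is no real obstacle here; the only point requiring a bit of care is correctly reading the sliding rule in the direction needed — namely that it is symmetric in the sense that instantiating it at $(b,a)$ rather than $(a,b)$ supplies exactly the rewrite of $b(ab)^*$ into $(ba)^*b$. Once that observation is made, the proof reduces to two one-step applications of the assumed identities.
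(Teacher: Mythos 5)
Your proof is correct and is essentially identical to the paper's: both apply the fixpoint rule at $ab$ to get $(ab)^* = 1 + ab(ab)^*$ and then use the sliding rule (instantiated at $(b,a)$) to rewrite $b(ab)^*$ as $(ba)^*b$. You merely make explicit the associativity regrouping and the swapped instantiation that the paper leaves implicit.
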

		\begin{proof}
			\begin{align*}
				(ab)^* &= 1+ab(ab)^* & \cref{fixpointrule}\\
				&= 1+a(ba)^*b & \cref{slidingrulelem}\rlap{\qedhere} 
			\end{align*}
		\end{proof}
		Now we can prove the existence of the necessary homomorphism.
		\begin{lem}
			$\equiv$ is the kernel of a coalgebra homomorphism.\label{SoundnessMorphism}
		\end{lem}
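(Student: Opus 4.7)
The plan is to construct a coalgebra structure $\bar{\partial} \colon \Exp/{\equiv} \to \mathcal{H}(\Exp/{\equiv})$ on the quotient set and show that the canonical projection $q \colon \Exp \to \Exp/{\equiv}$ is an $\mathcal H$-coalgebra homomorphism. Since $\ker(q) = {\equiv}$ holds by construction, this will prove the lemma. The crucial step is to establish that $\equiv$ is \emph{compatible} with $\partial$: whenever $(e,f) \in {\equiv}$ and $\alpha \in \At$, we have (i) $\partial(e)_\alpha(o) = \partial(f)_\alpha(o)$ for every $o \in 2 \cup \Out$, and (ii) $\partial(e)_\alpha[\{p\}\times Q] = \partial(f)_\alpha[\{p\}\times Q]$ for every $p \in \Act$ and every equivalence class $Q \in \Exp/{\equiv}$. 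From (i)--(ii) one sets $\bar\partial([e]_\equiv)_\alpha(o) = \partial(e)_\alpha(o)$ and $\bar\partial([e]_\equiv)_\alpha(p,[g]_\equiv) = \partial(e)_\alpha[\{p\}\times[g]_\equiv]$, and the homomorphism property is immediate.

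Compatibility itself is proved by induction on the generation of $\equiv$. Reflexivity is trivial; symmetry and transitivity follow by reading (i)--(ii) relative to a fixed class $Q$. The congruence cases for the \wgkat operators amount to rewriting the derivative of a compound expression in terms of equivalence-class decompositions. For sequential composition this relies on \cref{lem42} to split $\partial(e;f)_\alpha[\{p\}\times Q]$ into a contribution from $\partial(e)$ referring to the class $Q/f$ and a contribution from $\partial(f)$ scaled by $\partial(e)_\alpha(\cmark) = E(e)_\alpha$ (\cref{lem44}); the fact that $Q/f$ is well-defined across $\equiv$-classes is supplied by \cref{lem45,lem46}. The scaling, weighted choice, and guarded choice congruences reduce to pointwise identities in the semiring $\mathbb S$, using positivity and refinement to justify rearranging class-indexed sums.

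The axiomatic cases \Ax{G1}--\Ax{G4}, \Ax{D1}--\Ax{D3}, \Ax{S1}--\Ax{S7}, \Ax{C1}--\Ax{C2}, and \Ax{W1}--\Ax{W4} are routine consequences of the defining clauses for $\partial$, the semimodule axioms, and \cref{lem48} for simplifying $\partial(b;e)_\alpha$ under atoms $\alpha \leq_{\BA} b$. The main obstacle is the loop-related rules. For \Ax{L1} one unfolds $\partial(e^{(b)})_\alpha$ case-by-case and matches it with $\partial(e;e^{(b)} +_b \bskip)_\alpha$ using the fixpoint identity $a^* = 1+aa^*$ (\cref{fixpointrule}). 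The most delicate case is \Ax{L2}: after plugging the hypothesis $e \equiv (f \WC{r}{s}\bskip) +_c g$ into the loop semantics, one must show that the weight $s^*r$ on the right-hand side equals the weight accumulated from iterating the non-productive $\bskip$-branch with weight $s$ before firing the productive $f$-branch with weight $r$; collapsing the resulting nested stars uses \cref{slidingrulelem} together with the fixpoint rule, and it is precisely here that the Conway semiring axioms become indispensable. Finally, \Ax{F1} uses the productivity hypothesis $E(e)_\alpha = 0$ to kill the divergent self-referential contributions of $g$, allowing the transition structure of $g$ to be identified with that of $e^{(b)};f$ on the quotient, which in turn justifies treating $g$ as a legitimate instance of the loop's unique fixpoint.
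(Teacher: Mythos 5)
Your proposal is correct and follows essentially the same route as the paper: the paper defines $\partial_\equiv([e]_\equiv) = \mathcal{H}[-]_\equiv\,\partial(e)$ and reduces well-definedness to exactly your compatibility conditions (i)--(ii), proved by induction over the generation of $\equiv$ with the same key ingredients (\cref{lem42}, \cref{lem45}, \cref{lem46}, \cref{lem48}, and the Conway identities for \Ax{L1}, \Ax{L2}, \Ax{F1}). One minor correction: rearranging the class-indexed sums in the congruence cases needs only that the pushforward $W[-]_\equiv$ is a semimodule homomorphism, not positivity or refinement, which the paper reserves for the weak-pullback/bisimilarity argument rather than for this lemma.
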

		\begin{proof}
			This proof is inspired by the proof of \cite[lemma 4.4.3]{ToddThesis}. Consider the following commutative diagram
			\begin{center}\begin{tikzcd}
				\Exp && {\Exp/\equiv} \\
				\\
				{\mathcal{H}\Exp} && {\mathcal{H}\Exp/\equiv}
				\arrow["{[-]_\equiv}", from=1-1, to=1-3]
				\arrow["\partial"', from=1-1, to=3-1]
				\arrow["{\partial_\equiv}", dashed, from=1-3, to=3-3]
				\arrow["{\mathcal{H}[-]_\equiv}"', from=3-1, to=3-3]
			\end{tikzcd}\end{center}
			
			We want to show that there exists a morphism $\partial_\equiv$. Note that $[-]_\equiv$ is a surjection. We define $\partial_\equiv$ as $\partial_\equiv([e]_\equiv) = \mathcal{H}[-]_\equiv \partial(e)$ for any $e \in \Exp$. This is well defined if and only if $\text{ker}([-]_\equiv)\subseteq \text{ker}(\mathcal{H}[-]_\equiv\circ\partial)$ \cite[Lemma 3.17]{gumm1999elements}. Clearly $\equiv$ is the kernel of $[-]_\equiv$. So we show by induction $e \equiv f \implies \mathcal{H}[-]_\equiv\partial(e) = \mathcal{H}[-]_\equiv\partial(f)$.
			
			Consider the functor $W = \Mon(2+\Out+\Act\times \Exp)$, since $\At$ is finite $\mathcal{H} = \prod_{\alpha \in \At} W$. Therefore $(\mathcal{H}[-]_\equiv\circ\partial)(e) = (\mathcal{H}[-]_\equiv\circ\partial)(f) \iff \forall \alpha \in \At \;\; (W[-]_\equiv \circ \partial_\alpha)(e) = (W[-]_\equiv \circ \partial_\alpha)(f)$. Furthermore, $W[-]_\equiv$ is a semimodule homomorphism from $W\Exp \to \Exp/\equiv$. Consider $\nu_1, \nu_2 \in W\Exp, Q \in \Exp/\equiv$.
			\begin{align*}
				(W[-]_\equiv \nu_1) Q + (W[-]_\equiv \nu_2) Q &= \sum_{e \in \Exp \mid e \in Q} \nu_1(e) + \sum_{e \in \Exp \mid e \in Q} \nu_2(e)\\ 
				&= \sum_{e \in \Exp \mid e \in Q} \nu_1(e)+\nu_2(e)\\
				&= W[-]_\equiv (\nu_1+\nu_2) Q
			\end{align*}
			and consider a $\nu \in W\Exp, r \in \mathbb S, Q \in \Exp/\equiv$
			\begin{align*}
				(W[-]_\equiv r\nu) Q &= \sum_{e \in \Exp \mid e \in Q} r\nu(e)\\ 
				&= r\sum_{e \in \Exp \mid e \in Q} \nu(e)\\
				&= (r W[-]_\equiv \nu) Q
			\end{align*}

			We now examine the congruence part of the relation, then we examine the axioms.
			
			\subparagraph*{Case of: $\WC{}{}$}
			Suppose $e \equiv g, f\equiv h$, then $e \WC{r}{s} f \equiv g \WC{r}{s} h$, $\forall \alpha \in \At$  
			\begin{align*}
			 W[-]_\equiv \partial(e \WC{r}{s} f)_\alpha &=  W[-]_\equiv (r\partial(e)_\alpha + s\partial(f)_\alpha)\\
			 &= r\cdot W[-]_\equiv\partial(e)_\alpha+ s\cdot W[-]_\equiv\partial(f)_\alpha &\\
				&= r\cdot W[-]_\equiv\partial(g)_\alpha + s\cdot W[-]_\equiv\partial(h)_\alpha & e \equiv f, g \equiv h\\
				&= W[-]_\equiv(\partial (g \WC{r}{s} h)_\alpha)&
			\end{align*}

			\subparagraph*{Case of: $+_b$}
			Suppose $e \equiv g, f\equiv h$, then $e +_b f \equiv g +_b h$. For all $\alpha \in \At$\begin{align*}
				W[-]_\equiv\partial(e+_b f) &= W[-]_\equiv\begin{cases}
					\partial(e)_\alpha & \alpha \leq_{\BA} b\\
					\partial(f)_\alpha & \alpha \leq_{\BA} \bar b
				\end{cases}\\
				&= \begin{cases}
					W[-]_\equiv\partial(e)_\alpha & \alpha \leq_{\BA} b\\
					W[-]_\equiv\partial(f)_\alpha & \alpha \leq_{\BA} \bar b
				\end{cases}  \\
				&= \begin{cases}
					W[-]_\equiv\partial(g)_\alpha & \alpha \leq_{\BA} b\\
					W[-]_\equiv\partial(h)_\alpha & \alpha \leq_{\BA} \bar b
				\end{cases} & e \equiv g, f \equiv h\\
				&= W[-]_\equiv\begin{cases}
					\partial(g)_\alpha & \alpha \leq_{\BA} b\\
					\partial(h)_\alpha & \alpha \leq_{\BA} \bar b
				\end{cases} \\
				 &=W[-]_\equiv\partial(e+_b f)
			\end{align*}
			
			\subparagraph*{Case of: $;$}
			Suppose $e \equiv g, f\equiv h$, then $e;f \equiv g;h$. Before we begin, note that if $e \equiv g$ then $W[-]_\equiv\delta_{(p,e^\prime;e)} = \delta_{(p, [e^\prime;e]_\equiv)} = \delta_{(p, [e^\prime;g]_\equiv)} = W[-]_\equiv\delta_{(p,e^\prime;g)}$. As the fact that $\equiv$ is a congruence means that $[e^\prime;e]_\equiv = [e^\prime;g]_\equiv$. For all $\alpha \in \At$ \begin{align*}
				&W[-]_\equiv(\partial (e;f)_\alpha) = W[-]_\equiv(\partial(e)_\alpha \vartriangleleft_\alpha f)\\
				&= W[-]_\equiv\left(\sum_{x \in 2+\Out+\Act\times \Exp} \partial(e)_\alpha(x)c_{\alpha, f}(x)\right)\\
				&= \sum_{x \in 2+\Out+\Act\times \Exp} W[-]_\equiv\partial(e)_\alpha(x)W[-]_\equiv c_{\alpha, f}(x)&\\
				&=\quad\sum_{\mathclap{\substack{x \in 2+\Out+\Act\times \Exp}}} W[-]_\equiv\partial(e)_\alpha(x)
				\left(\begin{cases}
					W[-]_\equiv\partial(f)_\alpha & x=\cmark\\
					W[-]_\equiv\delta_x & x\in\{\xmark\} \cup \Out\\
					W[-]_\equiv\delta_{(p,e^\prime;f)}& x=(p,e^\prime)
				\end{cases}\right)(x)&\text{def of c}\\
				&=\quad\sum_{\mathclap{\substack{x \in 2+\Out+\Act\times \Exp}}} W[-]_\equiv\partial(g)_\alpha(x)
				\left(\begin{cases}
					W[-]_\equiv\partial(h)_\alpha & x=\cmark\\
					W[-]_\equiv\delta_x & x\in\{\xmark\} \cup \Out\\
					W[-]_\equiv\delta_{(p,e^\prime;h)}& x=(p,e^\prime)
				\end{cases}\right)(x)&e \equiv g, f \equiv h\\\\
				&= \sum_{x \in 2+\Out+\Act\times \Exp} W[-]_\equiv\partial(g)_\alpha(x)W[-]_\equiv c_{\alpha, h}(x)&\text{def of c}\\
				&= W[-]_\equiv(\partial(g)_\alpha \vartriangleleft_\alpha h) &\\
				&= W[-]_\equiv(\partial (g;h)_\alpha)
			\end{align*}
			
			\subparagraph*{Case of: $^{(b)}$}
			Suppose $e\equiv f$, then $e^{(b)} \equiv f^{(b)}$. For all $\alpha \in \At$ 
			\begin{align*}
				&W[-]_\equiv\partial (e^{(b)})_\alpha \\ 
				&= W[-]_\equiv \begin{cases}
					\bskip & \alpha \leq_{\BA} \bar b\\
					\partial(e)_\alpha(\cmark)^{*}\partial(e)_\alpha(x)& x\in \{\xmark\} \cup \Out, \alpha \leq_{\BA} b\\
					\partial(e)_\alpha(\cmark)^{*}\partial(e)_\alpha(p, e^\prime)& x=(p, (e^\prime;e^{(b)})), \alpha \leq_{\BA} b\\
					\babort& \text{Otherwise}\\
				\end{cases}\\
				&= \begin{cases}
					W[-]_\equiv\bskip & \alpha \leq_{\BA} \bar b\\
					W[-]_\equiv\partial(e)_\alpha(\cmark)^{*}\cdot W[-]_\equiv\partial(e)_\alpha(x)& x\in \{\xmark\} \cup \Out, \alpha \leq_{\BA} b\\
					W[-]_\equiv\partial(e)_\alpha(\cmark)^{*}\cdot W[-]_\equiv\partial(e)_\alpha(p, e^\prime)& x=(p, e^\prime;e^{(b)}), \alpha \leq_{\BA} b\\
					W[-]_\equiv\babort& \text{Otherwise}\\
				\end{cases} & \\
				&= \begin{cases}
					W[-]_\equiv\bskip & \alpha \leq_{\BA} \bar b\\
					W[-]_\equiv\partial(f)_\alpha(\cmark)^{*}\cdot W[-]_\equiv\partial(f)_\alpha(x)& x\in \{\xmark\} \cup \Out, \alpha \leq_{\BA} b\\
					W[-]_\equiv\partial(f)_\alpha(\cmark)^{*}\cdot W[-]_\equiv\partial(f)_\alpha(p, e^\prime)& x=(p, (e^\prime;f^{(b)})), \alpha \leq_{\BA} b\\
					W[-]_\equiv\babort& \text{Otherwise}\\
				\end{cases} & e \equiv f, e^b \equiv f^b\\
				&= W[-]_\equiv\begin{cases}
					\bskip & \alpha \leq_{\BA} \bar b\\
					\partial(f)_\alpha(\cmark)^{*}\partial(f)_\alpha(x)& x\in \{\xmark\} \cup \Out, \alpha \leq_{\BA} b\\
					\partial(f)_\alpha(\cmark)^{*}\partial(f)_\alpha(p, e^\prime)& x=(p, (e^\prime;f^{(b)})), \alpha \leq_{\BA} b\\
					\babort& \text{Otherwise}\\
				\end{cases} &\\
				&=W[-]_\equiv\partial (f^{(b)})\alpha & \text{def of }e^{(b)}
			\end{align*}
			Now we address the cases for each axiom. For each axiom we want to show that for any $\alpha$ if $e \equiv f$ by that axiom that $W[-]_\equiv (\partial(e)_\alpha) = W[-]_\equiv (\partial(f)_\alpha)$. We note that in some cases we will simply argue that $\forall \alpha \;\; \partial(e)_\alpha = \partial(f)_\alpha$ which clearly implies equivalence under the lifted quotient map. We also note that $W[-]_\equiv \partial_\alpha(e)(p, Q) = \partial_\alpha(e)[\{p\}\times Q]$ where $\alpha \in \At, e \in \Exp, p \in \Act, Q \in \Exp/\equiv$. So we examine the final cases with the latter syntax for the sake of readability.
			
			Then we can conclude from $\forall \alpha \in \At \;\; W[-]_\equiv (\partial(e)_\alpha) = W[-]_\equiv (\partial(f)_\alpha)$ that $\mathcal{H}[-]_\equiv\partial(e) = \mathcal{H}[-]_\equiv\partial(f)$.
			\begin{enumerate}
				\item[\Ax{G1}]
				\begin{align*}
					\forall \alpha \in \At\;\;\partial(e+_b e)_\alpha &= 
					\begin{cases}
						\partial(e)_\alpha & \alpha \leq_{\BA} b\\
						\partial(e)_\alpha & \alpha \leq_{\BA} \bar b
					\end{cases}\\
					&= \partial(e)_\alpha
				\end{align*}
				\item[\Ax{G2}]
				\begin{align*}
					\forall \alpha \in \At\;\;\partial(e+_b f)_\alpha
					&= \begin{cases}
						\partial(e)_\alpha & \alpha \leq_{\BA} b\\
						\partial(f)_\alpha & \alpha \leq_{\BA} \bar b
					\end{cases}\\
					&= \begin{cases}
						\partial(b;e)_\alpha & \alpha \leq_{\BA} b\\
						\partial(f)_\alpha & \alpha \leq_{\BA} \bar b
					\end{cases} &\cref{lem48}\\
					&= \partial(b;e +_b f)_\alpha
				\end{align*}
				\item[\Ax{G3}]
				\begin{align*}
					\forall \alpha \in \At\;\;\partial(e+_b f)_\alpha
					&= \begin{cases}
						\partial(e)_\alpha & \alpha \leq_{\BA} b\\
						\partial(f)_\alpha & \alpha \leq_{\BA} \bar b
					\end{cases}\\
					&= \partial(f+_{\bar{b}} e)_\alpha
				\end{align*}
				\item[\Ax{G4}]
				\begin{align*}
					\forall \alpha \in \At\;\;\partial((e+_b f)+_cg)_\alpha &= 
					\begin{cases}
						\partial(g)_\alpha & \alpha \leq_{\BA} \bar c\\
						\partial(e+_b f)_\alpha & \alpha \leq_{\BA} c
					\end{cases}\\
					&=\begin{cases}
						\partial(g)_\alpha & \alpha \leq_{\BA} \bar c\\
						\partial(e)_\alpha & \alpha \leq_{\BA} bc\\
						\partial(f)_\alpha & \alpha \leq_{\BA} \bar{b}c
					\end{cases}\\
					&=\begin{cases}
						\partial(g)_\alpha & \alpha \leq_{\BA} b\bar c\\
						\partial(g)_\alpha & \alpha \leq_{\BA} \bar b\bar c\\
						\partial(e)_\alpha & \alpha \leq_{\BA} bc\\
						\partial(f)_\alpha & \alpha \leq_{\BA} \bar{b}c
					\end{cases}\\
					&=\begin{cases}
						\partial(e)_\alpha & \alpha \leq_{\BA} bc\\
						\begin{cases}
							\partial(f)_\alpha & \alpha \leq_{\BA} c\\
							\partial(g)_\alpha & \alpha \leq_{\BA} \bar c
						\end{cases} & \alpha \leq_{\BA} \bar{bc}
					\end{cases}\\
					&=\begin{cases}
						\partial(e)_\alpha & \alpha \leq_{\BA} bc\\
						\partial(f+_c g) & \alpha \leq_{\BA} \bar{bc}
					\end{cases}\\
					&= \partial(e+_{bc}(f+_cg))_\alpha\\
				\end{align*}
				
				\item[\Ax{D1}]
				\begin{align*}
					\forall \alpha \in \At\;\;\partial(e \WC{r}{s} (f+_b g))_\alpha &= r\partial(e)_\alpha +s\partial(f+_b g)_\alpha\\
					&=\begin{cases}
						r\partial(e)_\alpha + s\partial(f)_\alpha & \alpha \leq_{\BA} b\\
						r\partial(e)_\alpha + s\partial(g)_\alpha & \alpha \leq_{\BA} \bar b\\
					\end{cases}\\
					&=\begin{cases}
						\partial(e \WC{r}{s} f)_\alpha & \alpha \leq_{\BA} b\\
						\partial(e \WC{r}{s} g)_\alpha & \alpha \leq_{\BA} \bar b\\
					\end{cases}\\
					& = \partial((e \WC{r}{s} f)+_b (e \WC{r}{s} g))_\alpha
				\end{align*}
				\item[\Ax{D2}]
				\begin{align*}
					\forall \alpha \in \At\;\;\partial(e\WC{r}{s} (f \WC{t}{u} g))_\alpha &= r\partial(e)+s\partial(f \WC{t}{u} g))_\alpha\\
					&= r\partial(e)+s(t\partial(f)_\alpha + u\partial(g)_\alpha)\\
					&= r\partial(e)+(st\partial(f)_\alpha + su\partial(g)_\alpha) & \cref{semimodule}\\
					&= r\partial(e)+\partial(f\WC{st}{su} g)_\alpha\\
					&= \partial(e \WC{r}{1} (f\WC{st}{su} g))_\alpha
				\end{align*}
				\item[\Ax{D3}]
				\begin{align*}
					\partial(b;(e \WC{r}{s} f))_\alpha &= \begin{cases}
						r\partial(e)_\alpha + s\partial(f)_\alpha & \alpha \leq b\\
						\delta_\xmark & \alpha \leq \bar b\\
					\end{cases}\\
					&= \begin{cases}
						r\partial(b;e)_\alpha + s\partial(b;f)_\alpha & \alpha \leq b\\
						\delta_\xmark & \alpha \leq \bar b\\
					\end{cases}& \cref{lem48}\\
					&= \begin{cases}
						\partial(b;e \WC{r}{s} b;f)_\alpha & \alpha \leq b\\
						\delta_\xmark & \alpha \leq \bar b\\
					\end{cases}\\
					&= \partial(b;(b;e \WC{r}{s} b;f))_\alpha 
				\end{align*}
				
				\item[\Ax{S1}]
				For all $\alpha \in \At$
				\begin{align*}
					\partial(\bskip;e)_\alpha &= \partial(e)_\alpha & \cref{lem48}\\ 
					&= \begin{cases}
						\partial(e)_\alpha(x) & x=\cmark \\
						\partial(e)_\alpha(x) & \text{else}
					\end{cases}\\
					&= \begin{cases}
						\partial(e)_\alpha(x)1 & x=\cmark \\
						\partial(e)_\alpha(x)+0 & \text{else}
					\end{cases}\\
					&= \begin{cases}
						\partial(e)_\alpha(x)\partial(\bskip)_\alpha(x) & x=\cmark \\
						\partial(e)_\alpha(x)+\partial(e)_\alpha(\cmark)\partial(\bskip)_\alpha(x) & \text{else}
					\end{cases}\\
					&= \partial(e;\bskip)_\alpha
				\end{align*}
				\item[\Ax{S2}]
				\begin{align*}
					\forall \alpha \in \At\;\;\partial((e;f);g)_\alpha(\cmark) &= \partial(e;f)_\alpha(\cmark)\partial(g)_\alpha(\cmark)\\
					&= \partial(e)_\alpha(\cmark)\partial(f)_\alpha(\cmark)\partial(g)_\alpha(\cmark)\\
					&= \partial(e)_\alpha(\cmark)(\partial(f;g)_\alpha(\cmark))\\
					&= \partial(e;(f;g))_\alpha(\cmark)
				\end{align*}
				
				For $o \neq \cmark$
				\begin{align*}	
					\partial((e;f);g)_\alpha(o) &= \partial(e;f)_\alpha(\cmark)\partial(g)_\alpha(o) + \partial(e;f)_\alpha(o)\\
					&= \partial(e)_\alpha(\cmark)\partial(f)_\alpha(\cmark)\partial(g)_\alpha(o)  + \partial(e)_\alpha(\cmark)\partial(f)_\alpha(o)+\partial(e)_\alpha(o)\\
					&= \partial(e)_\alpha(\cmark)(\partial(f)_\alpha(\cmark)\partial(g)_\alpha(o)  + \partial(f)_\alpha(o))+\partial(e)_\alpha(o)\\
					&= \partial(e)_\alpha(\cmark)(\partial(f;g)_\alpha(o))+\partial(e)_\alpha(o)\\
					&= \partial(e;(f;g))_\alpha(o)
				\end{align*}
				\item[\Ax{S3}]
				\begin{align*}
					\forall \alpha \in \At\;\;\partial(\babort;e)_\alpha(\xmark) &= \partial(\babort)_\alpha(\xmark) + \partial(\babort)_\alpha(\cmark)\partial(e)_\alpha(\xmark)\\
					&= \partial(\babort)_\alpha(\xmark)\\
					&= \bskip
				\end{align*}
				
				All other outputs have $0$ weight in both expressions, for all atoms by the definition of sequential composition.
				\item[\Ax{S4}]
				We will split up the cases of output. $\forall \alpha \in \At$ 
				
				if $o \in \{\xmark\} + \Out$.
				\begin{align*}
					&\partial( (e \WC{r}{s} f);g)_\alpha(o)\\ 
					&=\partial( e \WC{r}{s} f)_\alpha(o) + \partial( e \WC{r}{s} f)_\alpha(\cmark)\partial(g)_\alpha(o)\\ 
					&= r\partial(e)_\alpha(o)+s\partial(f)_\alpha(o) + r\partial(e)_\alpha(\cmark)\partial(g)_\alpha(o)+s\partial(f)_\alpha(\cmark)\partial(g)_\alpha(o)\\
					&= r(\partial(e)_\alpha(o)+\partial(e)_\alpha(\cmark)\partial(g)_\alpha(o))+s(\partial(f)_\alpha(o)+\partial(f)_\alpha(\cmark)\partial(g)_\alpha(o))\\
					&= r\partial(e;g)_\alpha(o) + s\partial(f;g)_\alpha(o)\\
					&= \partial(e;g \WC{r}{s} f;g)_\alpha(o)  
				\end{align*}
				Furthermore
				\begin{align*}
					\partial( (e \WC{r}{s} f);g)_\alpha(\cmark) &= \partial( e \WC{r}{s} f)_\alpha(\cmark)\partial(g)_\alpha(\cmark)\\ 
					&= r\partial_\alpha(e)(\cmark)\partial(g)_\alpha(\cmark) +  s\partial_\alpha(f)(\cmark)\partial(g)_\alpha(\cmark)\\ 
					&= r\partial_\alpha( e;g)_\alpha(\cmark) +  s\partial_\alpha( f;g)_\alpha(\cmark)\\ 
					&= \partial(e;g \WC{r}{s} f;g)_\alpha(\cmark)  
				\end{align*}
				Finally if $p \in \Act$ and $Q\in \Exp/\equiv$. Then $\forall \alpha \in \At$ 
				\begin{align*}
					&\partial( (e \WC{r}{s} f);g)_\alpha[\{p\} \times Q] \\
					&= \partial( e \WC{r}{s} f)_\alpha[\{p\} \times Q/g] + \partial( e \WC{r}{s} f)_\alpha(\cmark)\partial(g)_\alpha[\{p\} \times Q] &  \cref{lem42}\\
					&=r\partial( e)_\alpha[\{p\} \times Q/g] + s\partial(f)_\alpha[\{p\} \times Q/g] \\&\;+ s\partial(f)_\alpha(\cmark)\partial(g)_\alpha[\{p\} \times Q] + r\partial(e)_\alpha(\cmark)\partial(g)_\alpha[\{p\} \times Q]\\ 
					&= r(\partial( e)_\alpha[\{p\} \times Q/g]+\partial(e)_\alpha(\cmark)\partial(g)_\alpha[\{p\} \times Q])\\&\;+ s(\partial(f)_\alpha[\{p\} \times Q/g] + \partial(f)_\alpha(\cmark)\partial(g)_\alpha[\{p\} \times Q])\\
					&= r(\partial(e;g)_\alpha[\{p\} \times Q]) +s(\partial(f;g)_\alpha[\{p\} \times Q])& \cref{lem42}\\
					&= \partial(e;g \WC{r}{s} f;g)_\alpha[\{p\} \times Q]		
				\end{align*}
				
				\item[\Ax{S5}] 
				\begin{align*}
					\partial((e+_bf);g)_\alpha(\cmark) &= \partial(e+_bf)_\alpha(\cmark)\partial(g)_\cmark(\cmark)\\
					&= \begin{cases}
						\partial(e)_\alpha(\cmark)\partial(g)_\cmark(\cmark) & \alpha \leq_{\BA} b\\
						\partial(f)_\alpha(\cmark)\partial(g)_\cmark(\cmark) & \alpha \leq_{\BA} \bar b\\
					\end{cases}\\
					&= \partial(e;g +_b f;g)_\alpha(\cmark)
				\end{align*}		
				
				For $o \in \xmark + \Out$ 
				\begin{align*}
					\partial((e+_bf);g)_\alpha(o) &= \partial(e+_bf)_\alpha(o) + \partial(e+_bf)_\alpha(\cmark)\partial(g)_\alpha(o)\\
					&= \begin{cases}
						\partial(e)_\alpha(o)+\partial(e)_\alpha(\cmark)\partial(g)_\alpha(\cmark) & \alpha \leq_{\BA} b\\
						\partial(f)_\alpha(o)+\partial(f)_\alpha(\cmark)\partial(g)_\alpha(\cmark) & \alpha \leq_{\BA} \bar b\\
					\end{cases}\\
					&= \partial(e;g +_b f;g)_\alpha(o)
				\end{align*}
				Finally, for $p \in \Act, Q \in \Exp/\equiv$
				\begin{align*}
					&\partial((e+_bf);g)_\alpha[\text{p} \times Q]\\
					&= \partial(e+_bf)_\alpha[\text{p} \times Q/g] + \partial(e+_bf)_\alpha(\cmark)\partial(g)_\alpha[\text{p} \times Q]\\
					&= \begin{cases}
						\partial(e)_\alpha[\text{p} \times Q/g] + \partial(e)_\alpha(\cmark)\partial(g)_\alpha[\text{p} \times Q] & \alpha \leq_{\BA} b\\
						\partial(f)_\alpha[\text{p} \times Q/g] + \partial(f)_\alpha(\cmark)\partial(g)_\alpha[\text{p} \times Q] & \alpha \leq_{\BA} \bar b\\
					\end{cases}\\
					&= \partial(e;g +_b f;g)_\alpha[\text{p} \times Q]
				\end{align*}
				\item[\Ax{S6}]
				\begin{align*}
					\forall \alpha \in \At\;\;\partial(v;e)_\alpha(o) = \partial(v)_\alpha(\cmark)\partial(e)_\alpha(o) + \partial(v)_\alpha(o) = \partial(v)_\alpha(o)
				\end{align*}
				\item[\Ax{S7}]
				\begin{align*}
					\forall \alpha \in \At\;\;\partial(b;c)_\alpha &= \begin{cases}
						\partial(c)_\alpha & \alpha \leq_{\BA} b\\
						\delta_{\xmark} & \alpha \leq_{\BA} \bar b\\
					\end{cases} &\cref{lem48}\\ 
					&= \begin{cases}
						\delta_\cmark & \alpha \leq_{\BA} b \land \alpha \leq_{\BA} c\\
						\delta_{\xmark} & \text{else}\\
					\end{cases}\\
					&= \begin{cases}
						\delta_\cmark & \alpha \leq_{\BA} bc \\
						\delta_{\xmark} & \text{else}\\
					\end{cases}\\
					&= \partial(bc)_\alpha
				\end{align*}
				\item[\Ax{L1}]
				If $\alpha \leq_{\BA} \bar b$ then
				\begin{align*}
					\partial(e^{(b)})_\alpha(\cmark) = 1 = \partial(e;e^{(b)}+_b \bskip)_\alpha(\cmark) 
				\end{align*}
				And all other weights are $0$.
				This satisfies the required equality.
				If $\alpha \leq_{\BA} b$ then:
				\begin{align*}
					\partial(e;e^{(b)}+_b \bskip)_\alpha(\cmark) &= \partial(e;e^{(b)})_\alpha(\cmark)\\
					&= \partial(e)_\alpha(\cmark)\partial(e^{(b)})_\alpha(\cmark)\\
					&= 0\\
					&= \partial(e^{(b)})_\alpha(\cmark)
				\end{align*}
				If $o \in 2 + \Out$ then:
				\begin{align*}
					\partial(e;e^{(b)}+_b \bskip)_\alpha(o) &= \partial(e;e^{(b)})_\alpha(o)\\
					&= \partial(e)_\alpha(o)+\partial(e)_\alpha(\cmark)\partial(e^{(b)})_\alpha(o)\\
					&= \partial(e)_\alpha(o)+\partial(e)_\alpha(\cmark)\partial(e)_\alpha(\cmark)^*\partial(e)_\alpha(o)\\
					&= (1+\partial(e)_\alpha(\cmark)\partial(e)_\alpha(\cmark)^*)\partial(e)_\alpha(o)\\
					&= \partial(e)_\alpha(\cmark)^*\partial(e)_\alpha(o) &\cref{fixpointrule}\\
					&= \partial(e^{(b)})_\alpha(o)
				\end{align*}
				If $p \in \Act$ and $Q \in \Exp/\equiv$. Then: 
				\begin{align*}
					&\partial\left(e;e^{(b)}+_b \bskip\right)_\alpha[\{p\}\times Q] \\
					&= \partial\left(e;e^{(b)}\right)_\alpha[\{p\}\times Q]\\
					&= \partial\left(e\right)_\alpha[\{p\}\times Q/e^{(b)}] + \partial(e)_\alpha(\cmark)\partial(e^{(b)})_\alpha[\{p\}\times Q]\\
					&= \partial\left(e\right)_\alpha[\{p\}\times Q/e^{(b)}] + \partial(e)_\alpha(\cmark)\partial(e)_\alpha(\cmark)^*\partial(e)_\alpha[\{p\}\times Q/e^{(b)}]\\
					&= \partial(e)_\alpha(\cmark)^*\partial(e)_\alpha[\{p\}\times Q/e^{(b)}]\\
					&= \partial\left(e^{(b)}\right)_\alpha[\{p\}\times Q]
				\end{align*}
				\item[\Ax{L2}] 
				For our first case assume that $\alpha \leq_{\BA} \bar c$.
				\begin{align*}
					\partial(c;e^{(b)})_\alpha(\xmark) = \partial(c)_\alpha(\xmark) = 1 = \partial(c;((\odot s^*r ;f;(e)^{(b)})+_b \bskip))_\alpha(\xmark)
				\end{align*}
				Since all other weights are $0$ this is enough to prove the equality this case.
				
				We now consider the case where $\alpha \leq_{\BA} c$ and $\alpha \leq_{\BA} \bar b$.
				\begin{align*}
					\partial(c;e^{(b)})_\alpha(x) = \partial(e^{(b)})_\alpha(\cmark) = 1& = \partial(c;((\odot s^*r ;f;(e)^{(b)})+_b \bskip))(\cmark) & \cref{lem48}
				\end{align*}
				And all other weights are $0$. This verifies the required equality.
	
				Finally we check the case where $\alpha \leq_{\BA} c$ and $\alpha \leq_{\BA} b$.			
				Suppose $o \in \{\xmark\}\cup \Out$.
				\begin{align*}
					&\partial(c;e^{(b)})_\alpha(o) = \partial(((f \WC{r}{s} \bskip) +_c g)^{(b)})_\alpha(o)&\\
					&= \partial((f \WC{r}{s} \bskip)+_c g)_\alpha(\cmark)^*\partial((f \WC{r}{s} \bskip)+_c g)_\alpha(o)\\
					&= (r\partial(f)_\alpha(\cmark)+s)^*r\partial(f)_\alpha(o)\\
					&= s^*(r\partial(f)_\alpha(\cmark)s^*)^*r\partial(f)_\alpha(o) & \cref{conwaySemiring} (\ref{denesting})\\
					&= (s^*r\partial(f)_\alpha(\cmark))^*s^*r\partial(f)_\alpha(o) & \cref{slidingrulelem}\\
					&= s^*r\partial(f)_\alpha(o)+(s^*r\partial(f)_\alpha(\cmark))(s^*r\partial(f)_\alpha(\cmark))^*s^*r\partial(f)_\alpha(o) & \cref{fixpointrule}\\
					&= s^*r\partial(f)_\alpha(o)+s^*r(\partial(f)_\alpha(\cmark))\partial(((f \WC{r}{s} \bskip) +_c g)^{(b)})_\alpha(o)\\
					&=s^*r(\partial(f)_\alpha(o) + \partial(f)_\alpha(\cmark)\partial(e^{(b)})_\alpha(o))\\
					&=s^*r\partial(f;e^{(b)})_\alpha(o)\\
					&=\partial(\odot s^*r;f;e^{(b)})_\alpha(o)\\
					&=\partial((\odot s^*r;f;e^{(b)})+_b \bskip)_\alpha(o)\\
					&=\partial(c;((\odot s^*r;f;e^{(b)})+_b \bskip))_\alpha(o) & \cref{lem48}
				\end{align*}
				Furthermore
				\begin{align*}
					\partial(c;e^{(b)})_\alpha(\cmark) &= \partial(e^{(b)})_\alpha(\cmark) & \cref{lem48}\\
					&= 0\\
					&= \partial(c;((\odot s^*r;f;e^{(b)})+_b \bskip))_\alpha(\cmark)
				\end{align*}
				Finally let $p \in \Act$ and $Q \in \Exp/\equiv$.
				\begin{align*}
					&\partial(c;e^{(b)})_\alpha[\{p\} \times Q] \\
					&= \partial(e^{(b)})_\alpha[\{p\} \times Q] & \cref{lem48}\\
					&= \partial(((f \WC{r}{s} \bskip) +_c g)^{(b)})_\alpha[\{p\} \times Q]&\\
					&= \partial((f \WC{r}{s} \bskip)+_c g)_\alpha(\cmark)^*\\
						&\cdot\partial((f \WC{r}{s} \bskip)+_c g)_\alpha[\{p\} \times Q/((f \WC{r}{s} \bskip)+_c g)^{(b)}]\\
					&= (r\partial(f)_\alpha(\cmark)+s)^*r\partial(f)_\alpha[\{p\} \times Q/((f \WC{r}{s} \bskip)+_c g)^{(b)}]\\
					&= s^*(r\partial(f)_\alpha(\cmark)s^*)^*r\partial(f)_\alpha[\{p\} \times Q/((f \WC{r}{s} \bskip)+_c g)^{(b)}] & \cref{conwaySemiring} (\ref{denesting})\\
					&= (s^*r\partial(f)_\alpha(\cmark))^*s^*r\partial(f)_\alpha[\{p\} \times Q/((f \WC{r}{s} \bskip)+_c g)^{(b)}] & \cref{slidingrulelem}\\
					&= s^*r\partial(f)_\alpha[\{p\} \times Q/((f \WC{r}{s} \bskip)+_c g)^{(b)}]+(s^*r\partial(f)_\alpha(\cmark))\\&\cdot(s^*r\partial(f)_\alpha(\cmark))^*s^*r\partial(f)_\alpha[\{p\} \times Q/((f \WC{r}{s} \bskip)+_c g)^{(b)}] & \cref{fixpointrule}\\
					&= s^*r\partial(f)_\alpha[\{p\} \times Q/((f \WC{r}{s} \bskip)+_c g)^{(b)}]+s^*r(\partial(f)_\alpha(\cmark))\\&\partial(((f \WC{r}{s} \bskip) +_c g)^{(b)})_\alpha[\{p\} \times Q]\\
					&= s^*r(\partial(f)_\alpha[\{p\} \times Q/((f \WC{r}{s} \bskip) +_c g)^{(b)}] \\&+ \partial(f)_\alpha(\cmark)\partial((f \WC{r}{s} \bskip) +_c g)^{(b)})_\alpha[\{p\} \times Q])\\
					&= s^*r\partial(f;(f \WC{r}{s} \bskip) +_c g)^{(b)})_\alpha[\{p\} \times Q]\\
					&= s^*r\partial(f;e^{(b)})_\alpha[\{p\} \times Q]\\
					&= \partial(\odot s^*r;f;e^{(b)})_\alpha[\{p\} \times Q]\\
					&=\partial((\odot s^*r;f;e^{(b)})+_b \bskip)_\alpha[\{p\} \times Q]\\
					&=\partial(c;((\odot s^*r;f;e^{(b)})+_b \bskip))_\alpha[\{p\} \times Q] & \cref{lem48}
				\end{align*}
				\item[\Ax{C1}]
				We appeal to the properties of semimodules, \cref{semimodule}.
				\begin{align*}
					\forall \alpha \in \At\;\;\partial(\odot 1)_\alpha = 1\delta_\cmark = \delta_\cmark = \partial(\bskip)_\alpha
				\end{align*}
				\item[\Ax{C2}]
				We appeal to the properties of semimodules, \cref{semimodule}.
				\begin{align*}\forall \alpha \in \At\;\;\partial(\odot 0;e)_\alpha = 0\partial(e)_\alpha = 0 = 0 \partial(\bskip)_\alpha = \partial(\odot 0)_\alpha
				\end{align*}
				\item[\Ax{W1}]
				We appeal to the properties of semimodules, \cref{semimodule}.
				\begin{align*}
					\forall \alpha \in \At\;\;\partial(e \WC{r}{s} e)_\alpha = r\partial(e)_\alpha + s\partial(e)_\alpha = (r+s)\partial(e)_\alpha+0 = \partial(\odot(r+s);e)_\alpha
				\end{align*}
				\item[\Ax{W2}]
				We appeal to the commutativity of $+$ in semimodules (\cref{semimodule}).
				\begin{align*}
					\forall \alpha \in \At\;\;\partial(e \WC{r}{s} f)_\alpha = r\partial(e)_\alpha + s\partial(f)_\alpha = s\partial(f)_\alpha + r\partial(e)_\alpha = \partial(f \WC{s}{r} e)_\alpha
				\end{align*}
				\item[\Ax{W3}]
				\begin{align*}
					\forall \alpha \in \At\;\;\partial(e \WC{r}{s} (f \WC{t}{u} g))_\alpha &= r\partial(e)_\alpha + s\partial(f \WC{t}{u} g)_\alpha\\
					&= r\partial(e)_\alpha + s(t\partial(f)_\alpha + u\partial(g)_\alpha)\\
					&= r\partial(e)_\alpha + st\partial(f)_\alpha + su\partial(g)_\alpha & \cref{semimodule}\\
					&= \partial(e \WC{r}{st} f)_\alpha + su\partial(g)_\alpha\\
					&= \partial((e \WC{r}{st} f) \WC{1}{su}g)_\alpha
				\end{align*}
				\item[\Ax{W4}]
				\begin{align*}
					\forall \alpha \in \At\;\;\partial(e \WC{ru}{s} f)_\alpha &= ru\partial(e)_\alpha+s\partial(f)_\alpha\\
					&= r\partial(\odot u;e)_\alpha+s\partial(f)_\alpha &\cref{semimodule}\\
					&= \partial((\odot u;e) \WC{r}{s} (f))_\alpha
				\end{align*}
				\item[\Ax{F1}]
				First we consider the case where $\alpha \leq_{\BA} \bar b$.
				
				\begin{align*}
					\partial(e^{(b)};f)_\alpha &= \partial(f)_\alpha\\
					&= \partial(e;g +_b f)_\alpha
				\end{align*}
				
				We now consider the case where $\alpha \leq_{\BA} b$. Suppose $o \in \{\xmark\} \cup \Out$.
				\begin{align*}
					\partial(g)_\alpha(o) &= \partial(e;g+_b f)_\alpha(o) &\\
					&= \partial(e;g)_\alpha(o)\\
					&= \partial(e)_\alpha(o)+ \partial(e)_\alpha(\cmark)\partial(g)_\alpha(o)\\  
					&= \partial(e)_\alpha(o) & E(e) = 0\\
					&= 0^*\cdot\partial(e)_\alpha(o) & \cref{fixpointrule}\\
					&= \partial(e)_\alpha(\cmark)^*\partial(e)_\alpha(o) & E(e) = 0\\   
					&= \partial(e^{(b)})_\alpha(o)\\
					&= \partial(e^{(b)})_\alpha(o) +  \partial(e^{(b)})_\alpha(\cmark);\partial(f)_\alpha(0)\\
					&= \partial(e^{(b)};f)_\alpha(o)\\
				\end{align*}
				
				Finally consider $p\in \Act$ and $Q \in \Exp/\equiv$.
				\begin{align*}
					\partial(g)_\alpha[\{p\} \times Q] &= \partial(e;g+_b f)_\alpha[\{p\} \times Q] &\\
					&= \partial(e;g)_\alpha[\{p\} \times Q]\\
					&= \partial(e)_\alpha[\{p\} \times Q/g]+\partial(e)_\alpha(\cmark)\partial(g)_\alpha[\{p\} \times Q] & \cref{lem42}\\
					&= \partial(e)_\alpha[\{p\} \times Q/g] & E(e) = 0\\
					&= \partial(e)_\alpha(\cmark)^*\partial(e)_\alpha[\{p\} \times Q/g]\\
					&= \partial(e)_\alpha(\cmark)^*\partial(e)_\alpha[\{p\} \times Q/e^{(b)};f] & \cref{lem45}\\
					&= \partial(e)_\alpha(\cmark)^*\partial(e)_\alpha[\{p\} \times (Q/f)/e^{(b)}] & \cref{lem46}\\
					&= \partial(e^{(b)})_\alpha[\{p\} \times Q/f]\\
					&= \partial(e^{(b)})_\alpha[\{p\} \times Q/f] + \partial(e^{(b)})_\alpha(\cmark)\partial(f)_\alpha[\{p\} \times Q]\\
					&= \partial(e^{(b)};f)_\alpha[\{p\} \times Q]& \qedhere
				\end{align*}
			\end{enumerate} 
		\end{proof}
		With this in hand, proving soundness with respect to behavioral equivalence is straightforward.
		\begin{thm} \label{equivimpliesbeh}
			$e \equiv f \implies !_\partial e = !_\partial f$
		\end{thm}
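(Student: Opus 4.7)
The plan is to leverage \cref{SoundnessMorphism} directly: it establishes that $\equiv$ is the kernel of a coalgebra homomorphism, which essentially does all the heavy lifting (the axiom-by-axiom verification). The present theorem is then a short categorical consequence, following the standard pattern that behavioral equivalence coincides with the kernel of the unique map into the final coalgebra.

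Concretely, I would proceed as follows. First, recall from \cref{SoundnessMorphism} that there is a transition function $\partial_\equiv : \Exp/{\equiv} \to \mathcal{H}(\Exp/{\equiv})$ making the canonical quotient map $[-]_\equiv : \Exp \to \Exp/{\equiv}$ into an $\mathcal{H}$-coalgebra homomorphism from $(\Exp, \partial)$ to $(\Exp/{\equiv}, \partial_\equiv)$. Second, invoke finality of $(Z,\zeta)$ in $\mathrm{Coalg}_{\mathcal{H}}$: both $!_\partial : \Exp \to Z$ and $!_{\partial_\equiv} : \Exp/{\equiv} \to Z$ are the unique coalgebra homomorphisms into $(Z,\zeta)$. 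The composite $!_{\partial_\equiv} \circ [-]_\equiv$ is itself a coalgebra homomorphism from $(\Exp,\partial)$ to $(Z,\zeta)$, so by uniqueness
\[
  !_\partial = !_{\partial_\equiv} \circ [-]_\equiv.
\]
Third, if $e \equiv f$, then $[e]_\equiv = [f]_\equiv$, whence $!_\partial(e) = !_{\partial_\equiv}([e]_\equiv) = !_{\partial_\equiv}([f]_\equiv) = !_\partial(f)$, as required.

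There is no real obstacle here: the hard work is \cref{SoundnessMorphism}, which required checking each congruence case and each axiom against $\partial$ (using in particular the Conway identities for the loop axioms \Ax{L1}, \Ax{L2}, \Ax{F1}, and the semimodule laws for the weighted-choice and scaling axioms). The only thing to be careful about is making sure that the appeal to finality is legitimate, which is justified by the proposition establishing existence of the final $\mathcal{H}$-coalgebra (using that $\mathcal{H}$ is bounded and preserves weak pullbacks, and hence that the semiring is positive and refinement). Once these pieces are in place, the theorem is a one-line diagram chase.
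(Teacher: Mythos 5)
Your proposal is correct and follows exactly the paper's own argument: factor $!_\partial$ as $!_{\partial_\equiv}\circ[-]_\equiv$ using the quotient coalgebra from \cref{SoundnessMorphism} and finality, then conclude from $[e]_\equiv=[f]_\equiv$. No differences worth noting.
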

		\begin{proof}
			Let $(\Exp_\equiv, \partial_\equiv)$ the unique coalgebra on $\Exp_\equiv$ which makes $[-]_\equiv$ a coalgebra homomorphism as per \cref{SoundnessMorphism}. Since $[-]_\equiv$ is an epimorphism this coalgebra structure is unique by \cite[Lemma 3.17]{gumm1999elements}. $(\Exp_\equiv, \partial_\equiv)$ has a unique morphism to $(Z, \zeta)$, call this $!_\equiv$. Note that $!_\equiv \circ [-]_\equiv: (\Exp, \partial) \to (Z,\zeta)$ is the unique morphism to the final coalgebra, $!_\partial$. Therefore $e\equiv f \implies [e]_\equiv = [f]_\equiv \implies (!_\equiv\circ[-]_\equiv) e = (!_\equiv\circ[-]_\equiv )f \implies !_\partial e = !_\partial f$ \cite{ToddThesis}.
		\end{proof}
	\end{toappendix}
	\begin{correp}[Soundness]
		For all $e,f \in \Exp\;\; e\equiv f \implies e \sim f$\label{soundness}
	\end{correp}
	\begin{appendixproof}
		$e \equiv f \implies$ $!e = !f$ by \cref{equivimpliesbeh}. Furthermore, $!e = !f \implies e \sim f$ by \cref{bisimIffBeh}. The result is immediate.
	\end{appendixproof}
	\subsection{Derivable facts}\label{sec:derivable}
	Some other useful identities are derivable from the axioms.
	\begin{lemrep}
		For all $b,c \in \BExp,\; e,f,g,h \in \Exp, \;r,s,t,u \in \mathbb S$\\
		\begin{tabular}{L L L L}
			 \Ax{DF1}& \odot t;(e \WC{r}{s} f) \equiv e \WC{tr}{ts} f&\Ax{DF8} &b;(e+_c f) \equiv b;e +_c b;f \\
			 \Ax{DF2}& e+_b(f+_c g) \equiv (e+_b f)+_{b+c} g&\Ax{DF9} & b;(e+_c f) \equiv b;(b;e+_c f)\\
			 \Ax{DF3}& e+_b \babort \equiv b;e&\Ax{DF10} & \odot r; \odot s \equiv \odot (rs)\\
			 \Ax{DF4}& b;(e+_b f) \equiv b;e&\Ax{DF11} & \odot r;(e+_b f) \equiv \odot r;e +_b \odot r;f\\
			 \Ax{DF5}& (e+_b f)\WC{r}{s} g \equiv (e \WC{r}{s} g) +_b (f \WC{r}{s} g)&\Ax{DF12} & g \WC{r}{s} \odot 0 \equiv \odot r; g\\
			 \Ax{DF6}& (e+_b f) \WC{r}{s} (g+_b h) \equiv (e \WC{r}{s} g) +_b (f \WC{r}{s} h)&\Ax{DF13} & b;(e \WC{r}{s} f) \equiv b;(b;e \WC{r}{s} f)\\
			 \Ax{DF7}& e +_\bskip f \equiv e&\Ax{DF14} & g \WC{s}{t} \left( e \WC{r}{u} \odot 0 \right) \equiv g \WC{s}{tr} e\\
		\end{tabular}
	\end{lemrep}
	\begin{appendixproof}
		We now prove each of these derived facts. \Ax{DF2}-\Ax{DF4}, \Ax{DF8} retain the same proof as \gkat, so we simply cite \gkat \cite{Smolka_2019} and do not restate the proof.
		\item[\Ax{DF1}]
		\begin{align*}
			\odot t;(e \WC{r}{s} f) &\equiv (\bskip \WC{t}{0} \babort);(e \WC{r}{s} f) &\\
			&\equiv \bskip;(e \WC{r}{s} f) \WC{t}{0} \babort;(e \WC{r}{s} f) & \Ax{S4}\\
			&\equiv \babort;(e \WC{r}{s} f)\WC{0}{t} \bskip;(e \WC{r}{s} f)  & \Ax{W2}\\
			&\equiv \babort;(e \WC{r}{s} f)\WC{0}{t} (e \WC{r}{s} f)  & \Ax{S1}\\
			&\equiv \babort;(e \WC{r}{s} f)\WC{0}{1} (e \WC{tr}{ts} f)  & \Ax{D2}\\
			&\equiv (e \WC{tr}{ts} f)\WC{1}{0} \babort;(e \WC{r}{s} f)  & \Ax{W2}\\
			&\equiv (e \WC{tr}{ts} f)\WC{1}{0} \babort;(e \WC{tr}{ts} f)  & \Ax{S3}\\
			&\equiv \bskip;(e \WC{tr}{ts} f)\WC{1}{0} \babort;(e \WC{tr}{ts} f)  & \Ax{S1}\\
			&\equiv (\bskip \WC{1}{0} \babort);(e \WC{tr}{ts} f) & \Ax{S4}\\
			&\equiv \odot 1;(e \WC{tr}{ts} f) &\\
			&\equiv e \WC{tr}{ts} f & \Ax{C1}, \Ax{S1}
		\end{align*}
		\item[\Ax{DF5}]
		\begin{align*}
			(e+_b f)\WC{r}{s} g &\equiv g \WC{s}{r} (e+_b f) &\Ax{W2}\\
			&\equiv (g \WC{s}{r}e+_b g \WC{s}{r}f) &\Ax{D1}\\
			&\equiv (e\WC{r}{s} g+_b f \WC{r}{s} g) &\Ax{W2}
		\end{align*}
		\item[\Ax{DF6}]
		\begin{align*}
			(e+_b f) \WC{r}{s} (g+_b h) &\equiv ((e+_b f)\WC{r}{s} g) +_b ((e +_b f)\WC{r}{s} h)&\Ax{D1}\\
			&\equiv ((e \WC{r}{s} g)+_b (f \WC{r}{s} g)) +_b ((e +_b f)\WC{r}{s} h)&\Ax{DF5}\\
			&\equiv b;((e \WC{r}{s} g)+_b (f \WC{r}{s} g)) +_b ((e +_b f)\WC{r}{s} h)&\Ax{G2}\\
			&\equiv (b;(e \WC{r}{s} g)) +_b ((e +_b f)\WC{r}{s} h)&\Ax{DF4}\\
			&\equiv (e \WC{r}{s} g) +_b ((e +_b f)\WC{r}{s} h)&\Ax{G2}\\
			&\equiv ((f +_{\bar b} e)\WC{r}{s} h) +_{\bar b} (e \WC{r}{s} g)&\Ax{G3}\\
			&\equiv ((f \WC{r}{s} h) +_{\bar b} (e \WC{r}{s} h)) +_{\bar b} (e \WC{r}{s} g)&\Ax{DF5}\\
			&\equiv \bar b((f \WC{r}{s} h) +_{\bar b} (e \WC{r}{s} h)) +_{\bar b} (e \WC{r}{s} g)&\Ax{G2}\\
			&\equiv \bar b(f \WC{r}{s} h) +_{\bar b} (e \WC{r}{s} g)&\Ax{DF4}\\
			&\equiv (f \WC{r}{s} h) +_{\bar b} (e \WC{r}{s} g)&\Ax{G2}\\
			&\equiv (e \WC{r}{s} g) +_{b} (f \WC{r}{s} h)&\Ax{G3}
		\end{align*}
		\item[\Ax{DF7}]
		\begin{align*}
			e+_\bskip f &\equiv \bskip;(e+_\bskip f) & \Ax{S1}\\
			&\equiv \bskip;e & \Ax{DF4}\\
			&\equiv e & \Ax{S1}
		\end{align*}
		\item[\Ax{DF9}]
		\begin{align*}
			b;(e +_c f) &\equiv b;e +_c b;f & \Ax{DF8}\\
			&\equiv bb;e +_c b;f & bb \equiv_{\BA} b\\
			&\equiv b;b;e +_c b;f & \Ax{S7}\\
			&\equiv b;(b;e +_c f) & \Ax{DF8}
		\end{align*}
		\item[\Ax{DF10}] 
		\begin{align*}
			\odot r; \odot s &\equiv \bskip \WC{r}{0} \babort ; \bskip \WC{s}{0} \babort &\\
			&\equiv \bskip;(\bskip \WC{s}{0} \babort) \WC{r}{0} \babort;(\bskip \WC{s}{0} \babort) & \Ax{S4}\\
			&\equiv (\bskip \WC{s}{0} \babort) \WC{r}{0} \babort & \Ax{S1,S3}\\
			&\equiv \babort \WC{0}{r} (\bskip \WC{s}{0} \babort) & \Ax{W2}\\
			&\equiv \babort \WC{0}{1} (\bskip \WC{rs}{0} \babort) & \Ax{D2}\\
			&\equiv \bskip;(\bskip \WC{rs}{0} \babort) \WC{1}{0} \babort;(\bskip \WC{rs}{0} \babort) & \Ax{W2, S3, S1}\\
			&\equiv \odot 1;(\bskip \WC{rs}{0} \babort) & \Ax{S4}\\
			&\equiv \bskip \WC{rs}{0} \babort & \Ax{C1,S1}\\
			&\equiv \odot(rs)
		\end{align*}
		\item[\Ax{DF11}]
		\begin{align*}
			\odot r;(e+_b f) &\equiv (\bskip \WC{r}{0} \babort);(e+_b f) &\\
			&\equiv (\bskip;(e+_b f) \WC{r}{0} \babort;(e+_b f)) & \Ax{S4}\\
			&\equiv (e+_b f) \WC{r}{0} \babort & \Ax{S1, S3}\\
			&\equiv \babort \WC{0}{r} (e+_b f) & \Ax{W2}\\
			&\equiv (\babort \WC{0}{r} e)+_b (\babort \WC{0}{r} f) & \Ax{D1}\\
			&\equiv (e \WC{r}{0} \babort)+_b (f \WC{r}{0} \babort) & \Ax{W2}\\
			&\equiv (\bskip;e \WC{r}{0} \babort)+_b (\bskip;f \WC{r}{0} \babort) & \Ax{S1}\\
			&\equiv (\bskip;e \WC{r}{0} \babort;e)+_b (\bskip;f \WC{r}{0} \babort;f) & \Ax{S3}\\
			&\equiv (\bskip \WC{r}{0} \babort);e+_b (\bskip \WC{r}{0} \babort);f & \Ax{S4}\\
			&\equiv \odot r;e +_b \odot r;f&
		\end{align*}
		\item[\Ax{DF12}]
		\begin{align*}
		g \WC{r}{s} \odot 0 &\equiv g \WC{r}{s} \odot 0;\babort&\Ax{C2}\\
		&\equiv g \WC{r}{0} \babort&\Ax{W2, W4}\\
		&\equiv \bskip;g \WC{r}{0} \babort;g&\Ax{S1, S3}\\
		&\equiv \left(\bskip \WC{r}{0} \babort\right);g&\Ax{S4}\\
		&\equiv \odot r;g&
		\end{align*}
		\item[\Ax{DF13}] 
		\begin{align*}
		b;(e \WC{r}{s} f) &\equiv b;(b;e \WC{r}{s} b;f) & \Ax{D3}\\
		&\equiv b;((bb);e \WC{r}{s} b;f) & \text{BA}\\
		&\equiv b;(b;b;e \WC{r}{s} b;f) & \Ax{S7}\\
		&\equiv b;(b;e \WC{r}{s} f) & \Ax{D3}
		\end{align*}
		\item[\Ax{DF14}]
		\begin{align*}
			g \WC{s}{t} \left( e \WC{r}{u} \odot 0 \right) &\equiv
			g \WC{s}{t} \left( \odot r; e \right)& \Ax{DF12}\\
			&\equiv \left( \odot r; e \right) \WC{t}{s} g& \Ax{W2}\\
			&\equiv e \WC{tr}{s} g& \Ax{W4} \rlap{\qedhere} 
		\end{align*}
	\end{appendixproof}
	With the axioms and derived facts in hand, we now revisit our original examples. In the \emph{ski rental} problem the relevant property is the minimum total cost that could be paid over these sequential days. An appropriate choice of semiring to weight over is the \emph{tropical semiring}, as it allows us to reason about the cheapest branch of computation using sum to produce branches and minimum to compare branches. 
	\begin{exmp}[Ski Rental]
		We now prove that the optimal cost is  buying skis if the trip is longer than $n$ days and renting skis otherwise. Recall the encoding was $(\bskip \WC{1}{y} v)^n;v$ where $n$ is the number days of the trip, $y$ the cost of skis, and $v$ terminates the process. 
		
		We first prove by induction on $n$ the lemma $(\bskip \WC{1}{y} v)^n \equiv (\bskip \WC{n}{y} v)$. The base case where $n=1$ is trivial. Suppose it is true for $n=k$.
		\begin{align*}
			(\bskip \WC{1}{y} v)^{k+1} &\equiv (\bskip \WC{1}{y} v)^{k};(\bskip \WC{1}{y} v) \tag{Definition of $e^n$}\\
			&\equiv (\bskip \WC{k}{y} v);(\bskip \WC{1}{y} v) \tag{Induction hypothesis}\\
			&\equiv (\bskip;(\bskip \WC{1}{y} v) \WC{k}{y} v;(\bskip \WC{1}{y} v)) \tag{\textsf{S4}}\\
			&\equiv ((\bskip \WC{1}{y} v) \WC{k}{y} v) \tag{\textsf{S1, S6}}\\
			&\equiv (v \WC{y}{k} (v \WC{y}{1} \bskip)) \tag{\textsf{W2}}\\
			&\equiv ((v \WC{y}{ky} v) \WC{0}{k1} \bskip) \tag{\textsf{W3}}\\
			&\equiv (( \odot(y+ky);v) \WC{0}{k1} \bskip) \tag{\textsf{W1}}\\
			&\equiv (( \odot y;v) \WC{0}{k1} \bskip) \tag{$y+ky=y$}\\
			&\equiv (v \WC{y}{k1} \bskip) \tag{\textsf{W4}}\\
			&\equiv (\bskip \WC{k1}{y} v) \tag{\textsf{W2}}
		\end{align*}
		We now use this fact to simplify the expression of the ski rental problem:
		\begin{align*}
			(\bskip \WC{1}{y} v)^n;v &\equiv (\bskip \WC{n}{y} v);v \\
			&\equiv (\bskip;v\WC{n}{y} v;v) \tag{\textsf{S4}}\\
			&\equiv (v\WC{n}{y} v) & \tag{\textsf{S1, S6}}\\
			&\equiv \odot(n+y);v & \tag{\textsf{W1}}
		\end{align*}
		This is termination with weight equal to the minimum of $n$ and $y$. 
	\end{exmp}
	\begin{exmp}[Coin Game]
		Recall the encoding is $((\bskip \WC{1}{1} (\odot 1;\coin)) \WC{.5}{.5} (\odot 0;\coin))^{(\bskip)}$, where $\coin$ is the outcome of winning a dollar. We are interested in expected value, so we choose the \emph{extended non-negative rational semiring} because this can capture both the probabilities and the values of outcomes. We choose rationals rather than the reals to avoid having to compare infinite precision numbers. We now simplify the encoding:
		\begin{align*}
			&((\bskip \WC{1}{1} (\odot 1;\coin )) \WC{.5}{.5} (\odot 0;\coin))^{(\bskip)}\\
			&\qquad\equiv ((\bskip \WC{1}{1} (\odot 1;\coin)) \WC{.5}{.5} (\odot 0;\babort))^{(\bskip)} \tag{\textsf{C2}}\\
			&\qquad\equiv ((\bskip \WC{1}{1} (\odot 1;\coin)) \WC{.5}{0} \babort)^{(\bskip)} \tag{\textsf{W4}}\\
			&\qquad\equiv (\odot .5(\bskip \WC{1}{1} (\odot 1;\coin)) )^{(\bskip)} \tag{\cref{scalingdef}}\\
			&\qquad\equiv (\odot .5(\bskip \WC{1}{1} \coin) )^{(\bskip)} \tag{\textsf{C1, S1}}\\
			&\qquad\equiv (\bskip \WC{.5}{.5} \coin)^{(\bskip)} & \tag{\textsf{DF1}}\\
			&\qquad\equiv (\coin \WC{.5}{.5} \bskip)^{(\bskip)} & \tag{\textsf{W2}}\\
			&\qquad\equiv (\coin \WC{.5}{.5} \bskip +_\bskip \babort)^{(\bskip)} & \tag{\textsf{DF7}}\\
			&\qquad\equiv \bskip;\left(\left(\odot (.5^*\cdot.5);\coin;\left((\coin \WC{.5}{.5} \bskip) +_\bskip \babort\right)^{(\bskip)}\right) +_\bskip \bskip\right) & \tag{\textsf{L2}}\\
			&\qquad\equiv\odot (.5^*\cdot.5);\coin;(\coin \WC{.5}{.5} \bskip +_\bskip \babort)^{(\bskip)} & \tag{\textsf{S1, DF7}}\\
			&\qquad\equiv\odot (.5^*\cdot.5);\coin & \tag{\textsf{S6}}\\
			&\qquad\equiv\odot (2\cdot.5);\coin \\
			&\qquad\equiv\odot 1;\coin &
		\end{align*}
		Hence, the player expects to win one dollar in this game.
	\end{exmp}
	
	\section{Completeness}\label{completenessS}
	It is natural to ask if the set of axioms presented in \cref{axiomsS} is \emph{complete}, i.e., whether bisimilarity of any pair of \wgkat expressions can be established solely through the means of axiomatic manipulation. As we will see later on, the problem of completeness essentially relies on representing \wgkat automata syntactically as certain systems of equations and uniquely solving them. This is similar to proofs of completeness one might encounter in the literature concerning process algebra and Kleene Algebra. 
	
	Unfortunately, similarly to \gkat~\cite{Smolka_2019} and its probabilistic exension~\cite{rozowski2023probabilistic}, this poses a few technical challenges. Firstly, unlike Kleene Algebra, we do not have a perfect correspondence between the operational model (\wgkat automata) and the expressivity of the syntax. Namely, there exist \wgkat automata that cannot be described by a \wgkat expression. This simply stems from the fact that our syntax is able to only describe programs with well-structured control flow using while loops and if-then-else statements, which is strictly less expressive than using arbitrary non-local flow involving constructs like goto expressions in an uninterpreted setting.
	Secondly,  classic proofs of completeness of Kleene Algebra strictly rely on the left distributivity of sequential composition over branching, which is unsound under bisimulation semantics. This is crucial, as it allows for reduction of the problem of solving an arbitrary system of equations to the unary case that can be handled through the inference rule similar to \textsf{F1} of our system. 
	
	In the presence of these problems, we circumvent those issues by extending our inference system with a powerful axiom schema generalizing \textsf{F1} to arbitrary systems of equations (hence avoiding the issue of left distributivity) that states that each such system has \emph{at most one solution} (thus solving the problem of insufficient expressivity of our syntax). We will follow the \gkat terminology and refer to this axiom schema as the \emph{Uniqueness axiom} \textsf{(UA)}~\cite{Smolka_2019}, but the general idea is standard in process algebra and was explored by Bergstra under the name \emph{Recursive Specification Principle} \textsf{(RSP)}~ \cite{10.1007/3-540-16444-8_1}.  
	
	To obtain the completeness result, we will rely on the following roadmap:
	\begin{enumerate}
		\item First, we show that every \wgkat expression possesses a certain normal form that ties it to its operational semantics.
		\item Then, we define systems of equations akin to the normal form the step above and define what it means to solve them.
		\item Relying on the above two steps, we prove the correspondence between solutions to systems of equations and \wgkat automata homomorphisms.
		\item We establish the soundness of the Uniqueness Axiom. As a consequence of the correspondence from the step above, we show that two expressions being related by some bisimulation can be seen as them being solutions to the same system of equations. This combined with the use of \textsf{UA} guarantees completeness. 
	\end{enumerate}

	\subsection{Fundamental Theorem}
	In order to prove completeness we will use an intermediate result which states that each expression can be simplified to a normal form, obtained from the operational semantics of the expression. Following the terminology from the Kleene Algebra community we call this the \textit{fundamental theorem} \cite{RUTTEN20031}.
		\begin{defn}[\cite{rozowski2023probabilistic}]\label{gdedsum}
			Given $\Phi \subseteq \At$ and an indexed collection $\{e_\alpha\}_{\alpha \in \Phi}$ where $\forall \alpha\in\Phi, e_\alpha \in \Exp$ we define a generalized guarded sum inductively as
			$$
				\bigplus_{\alpha \in \Phi} = \babort \text{ if } \Phi = \emptyset\;\;\;\;\;\;\;\;\;\;\; \bigplus_{\alpha \in \Phi} = e_\gamma +_\gamma \bigplus_{\alpha \in \Phi \setminus\gamma} e_\alpha\;\; \gamma \in \Phi
			$$
		\end{defn}
		\begin{toappendix}
			\label{FTapx}
		\begin{lem} [\cite{rozowski2023probabilistic}, 52]
			Generalized guarded sums are well-defined up to $\equiv$ \label{lem52}
		\end{lem}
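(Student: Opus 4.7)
The plan is to proceed by induction on $|\Phi|$. The base cases $|\Phi| = 0$ and $|\Phi| = 1$ are immediate from \cref{gdedsum}: either there is no choice of $\gamma$ to make or only one element to pick, so any computation yields literally the same expression.

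For the inductive step with $|\Phi| \geq 2$, I would show that any two choices of outer atom $\gamma \in \Phi$ produce $\equiv$-equivalent expressions. Fix distinct $\gamma_1, \gamma_2 \in \Phi$. The induction hypothesis guarantees that the generalized sums over $\Phi \setminus \{\gamma_i\}$ are well-defined up to $\equiv$, so on each side I may pick the other distinguished atom as the second summand. Writing $g = \bigplus_{\alpha \in \Phi \setminus \{\gamma_1, \gamma_2\}} e_\alpha$, the problem reduces to the \emph{swap identity}
$$e_{\gamma_1} +_{\gamma_1} (e_{\gamma_2} +_{\gamma_2} g) \equiv e_{\gamma_2} +_{\gamma_2} (e_{\gamma_1} +_{\gamma_1} g).$$

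To derive the swap identity, I would apply \Ax{G3} to commute the outer choice on the left, producing $(e_{\gamma_2} +_{\gamma_2} g) +_{\overline{\gamma_1}} e_{\gamma_1}$; then use \Ax{G4} to re-associate into $e_{\gamma_2} +_{\gamma_2 \overline{\gamma_1}} (g +_{\overline{\gamma_1}} e_{\gamma_1})$; then use disjointness of atoms, which yields $\gamma_2 \leq_{\BA} \overline{\gamma_1}$ and hence $\gamma_2 \overline{\gamma_1} \equiv_{\BA} \gamma_2$, to simplify the outer guard; and finally apply \Ax{G3} again to the inner guarded choice to obtain the right-hand side.

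The main obstacle is the Boolean-algebra bookkeeping: \Ax{G4} introduces conjunctions of guards, and at each step one must appeal to pairwise disjointness (and exhaustiveness) of atoms of $\At$ to reduce those conjunctions back to atomic guards of the correct form. Once the two-atom swap identity is in hand, equivalence of arbitrary orderings follows by the standard fact that adjacent transpositions generate the symmetric group, combined with further invocations of the induction hypothesis on tail subsets of $\Phi$ to realign the inner sums before each swap.
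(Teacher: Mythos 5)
Your proof is correct and follows essentially the intended argument: the paper itself only cites \probgkat for this lemma, but the standard proof there (and the one suggested by this paper's remark that generalized guarded sums are defined up to skew-associativity and skew-commutativity) is exactly your induction on $|\Phi|$, reducing to the two-atom swap identity via \Ax{G3}, \Ax{G4}, and the disjointness of atoms ($\gamma_2 \leq_{\BA} \overline{\gamma_1}$, so $\gamma_2\overline{\gamma_1} \equiv_{\BA} \gamma_2$). The closing appeal to adjacent transpositions is not even needed, since your inductive hypothesis already lets you realign the inner sums so that the single swap identity plus congruence handles any two choices of outermost atom.
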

		\begin{lem}[\cite{rozowski2023probabilistic}, 53]
			Let $b,c \in \BExp$ and let $\{e_\alpha\}_{\alpha \in \At}$ be an indexed collection such that $\forall \alpha \;\; e_\alpha \in \Exp$. Then:
			$$
				c;\bigplus_{\alpha \leq_{\BA} b} e_\alpha \equiv \bigplus_{\alpha \leq_{\BA} bc}
			$$
		\end{lem}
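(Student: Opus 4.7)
The plan is to proceed by induction on the cardinality $|\{\alpha \in \At : \alpha \leq_{\BA} b\}|$. For the base case, when no atom lies below $b$, we have $b \equiv_{\BA} \babort$, so both index sets $\{\alpha : \alpha \leq_{\BA} b\}$ and $\{\alpha : \alpha \leq_{\BA} bc\}$ are empty; the right-hand side is $\babort$ by definition of the generalized sum, and the left-hand side is $c;\babort$, which collapses to $\babort$ using \Ax{S7} (both $c$ and $\babort$ are tests, so $c;\babort \equiv c\babort$) together with the Boolean identity $c\babort \equiv_{\BA} \babort$.

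For the inductive step, pick any atom $\gamma$ below $b$, and using well-definedness of generalized guarded sums (the preceding lemma of \cite{rozowski2023probabilistic}), unfold the left-hand sum as $e_\gamma +_\gamma \bigplus_{\alpha \leq_{\BA} b\bar\gamma} e_\alpha$. Derived fact \Ax{DF8} (left distributivity of sequencing over guarded choice) then rewrites the left-hand side of our goal as $c;e_\gamma +_\gamma c;\bigplus_{\alpha \leq_{\BA} b\bar\gamma} e_\alpha$, and I case split on whether $\gamma \leq_{\BA} c$ or $\gamma \leq_{\BA} \bar c$. In the first case, $\gamma c \equiv_{\BA} \gamma$, so successive applications of \Ax{G2}, \Ax{S2}, \Ax{S7} and Boolean reasoning convert $c;e_\gamma +_\gamma (\cdots)$ into $\gamma;(c;e_\gamma) +_\gamma (\cdots) \equiv (\gamma c);e_\gamma +_\gamma (\cdots) \equiv \gamma;e_\gamma +_\gamma (\cdots) \equiv e_\gamma +_\gamma (\cdots)$; feeding the right summand through the induction hypothesis and unfolding $\bigplus_{\alpha \leq_{\BA} bc} e_\alpha$ with pivot $\gamma$ (legal since $\gamma \leq_{\BA} bc$ and $bc\bar\gamma \equiv_{\BA} b\bar\gamma c$) closes this case. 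In the second case, $\gamma c \equiv_{\BA} \babort$; \Ax{G2}, \Ax{S2}, \Ax{S7} and \Ax{S3} reduce $c;e_\gamma +_\gamma (\cdots)$ to $\babort +_\gamma (\cdots)$, then \Ax{G3} and \Ax{DF3} convert this to $\bar\gamma;c;\bigplus_{\alpha \leq_{\BA} b\bar\gamma} e_\alpha$, and one more \Ax{S2}, \Ax{S7} step together with the Boolean identity $\bar\gamma c \equiv_{\BA} c$ (which follows from $\gamma \leq_{\BA} \bar c$) collapses this back to $c;\bigplus_{\alpha \leq_{\BA} b\bar\gamma} e_\alpha$; the induction hypothesis, combined with the fact that $\gamma \leq_{\BA} \bar c$ implies $bc \equiv_{\BA} b\bar\gamma c$, then closes this case.

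The main obstacle I anticipate is the careful Boolean bookkeeping: each step that exploits $\gamma c \equiv_{\BA} \gamma$, $\gamma c \equiv_{\BA} \babort$, or $\bar\gamma c \equiv_{\BA} c$ must be explicitly justified by the chosen atom's position relative to $c$, and re-introducing a $\bar\gamma$ factor in front of $c$ in the second case must be argued through Boolean equivalence rather than by a single rewrite axiom. Once those manipulations are spelled out, the remaining work is a routine chain of axiom applications in both cases.
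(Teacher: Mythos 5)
Your proof is correct. Note that the paper does not actually supply a proof of this lemma — it imports it from \probgkat (Lemma 53 of \cite{rozowski2023probabilistic}) with the remark that the proof carries over unchanged to the new model — and your induction on the set of atoms entailing $b$, with the case split on whether the pivot atom $\gamma$ entails $c$ or $\bar c$, is exactly the standard argument that proof uses; the only cosmetic caveat is that \Ax{DF8} is left distributivity of a \emph{test} (not an arbitrary expression) over guarded choice, which is precisely the form you need.
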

		Where we take $\alpha \leq_{\BA} b$ in this case to actually be the set $\{\alpha \mid \alpha \leq_{\BA} b\}$ similar to \cite{Smolka_2019, rozowski2023probabilistic}
		\begin{lem}[\cite{rozowski2023probabilistic}, 54]
			Let $\Phi \subseteq \At$ and $\{e_\alpha\}_{\alpha \in \At}$ be an indexed collection such that $\forall \alpha \;\; e_\alpha \in \Exp$. Then:
			$$
				\bigplus_{\alpha \in \Phi} e_\alpha \equiv \bigplus_{\alpha \in \Phi} \alpha;e_\alpha
			$$\label{lem54}
		\end{lem}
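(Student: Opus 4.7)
The statement to prove is
\[
\bigplus_{\alpha \in \Phi} e_\alpha \;\equiv\; \bigplus_{\alpha \in \Phi} \alpha;e_\alpha,
\]
that is, inside a generalized guarded sum one may freely prefix each branch by its guarding atom. My plan is to prove this by induction on the cardinality of the finite set $\Phi \subseteq \At$, leaning on \Cref{lem52} to ensure the generalized sum is well-defined regardless of which element we peel off first.

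For the base case $\Phi = \emptyset$, both sides are $\babort$ by \Cref{gdedsum}, so the equivalence is immediate. For the inductive step, assume the claim for all subsets of $\At$ of size less than $|\Phi|$, and pick any $\gamma \in \Phi$. By \Cref{gdedsum},
\[
\bigplus_{\alpha \in \Phi} e_\alpha \;\equiv\; e_\gamma +_\gamma \bigplus_{\alpha \in \Phi \setminus \{\gamma\}} e_\alpha.
\]
The induction hypothesis applies to $\Phi \setminus \{\gamma\}$, giving $\bigplus_{\alpha \in \Phi \setminus \{\gamma\}} e_\alpha \equiv \bigplus_{\alpha \in \Phi \setminus \{\gamma\}} \alpha;e_\alpha$, so by congruence of $\equiv$ with respect to guarded choice the right-hand side above is equivalent to $e_\gamma +_\gamma \bigplus_{\alpha \in \Phi \setminus \{\gamma\}} \alpha;e_\alpha$. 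Now apply axiom \Ax{G2} to the outer guarded choice with guard $\gamma$ to prefix the first branch, yielding $\gamma;e_\gamma +_\gamma \bigplus_{\alpha \in \Phi \setminus \{\gamma\}} \alpha;e_\alpha$, which by \Cref{gdedsum} is exactly $\bigplus_{\alpha \in \Phi} \alpha;e_\alpha$.

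There is no real obstacle: the key observation is simply that \Ax{G2} was designed to introduce the guard on the accepted branch of a guarded choice, and the inductive construction of the generalized sum picks exactly one branch with a single atomic guard at each step, so \Ax{G2} is applicable without modification. The only subtle point worth double-checking is the well-definedness of $\bigplus$ on both sides up to $\equiv$, so that our choice of $\gamma$ does not matter; this is precisely what \Cref{lem52} guarantees, and it is how the analogous lemma is handled in \cite{rozowski2023probabilistic} and \cite{Smolka_2019}.
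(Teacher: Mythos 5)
Your proof is correct: the induction on $|\Phi|$, using \Cref{gdedsum} to unfold, the induction hypothesis plus congruence on the second branch, \Ax{G2} to introduce the guard $\gamma$ on the first branch, and \Cref{lem52} for well-definedness of the choice of $\gamma$, is exactly the standard argument. The paper itself omits the proof and defers to \cite{rozowski2023probabilistic}, where the same induction is carried out, so your approach matches the intended one.
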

		\begin{lem}[\cite{rozowski2023probabilistic}, 55]
			Let $\Phi \subseteq \At$ and $\{e_\alpha\}_{\alpha \in \At}$ be an indexed collection such that $\forall \alpha \;\; e_\alpha \in \Exp$ and let $f \in \Exp$. Then:
			$$
				\left( \bigplus_{\alpha \in \Phi} e_\alpha \right);f  \equiv \bigplus_{\alpha \in \Phi} e_\alpha;f
			$$\label{lem55}
		\end{lem}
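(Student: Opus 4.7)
The plan is to prove this by induction on the cardinality of $\Phi$. Since generalized guarded sums are well-defined up to $\equiv$ by \cref{lem52}, we are free to choose any convenient element $\gamma \in \Phi$ to peel off at each step.

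For the base case $\Phi = \emptyset$, the left-hand side is $\babort; f$ and the right-hand side is $\babort$, and these are equivalent by \Ax{S3}. For the inductive step, assume the claim for all sets of size $n$, and let $|\Phi| = n+1$. Pick any $\gamma \in \Phi$. Then by the definition of generalized guarded sum and applying the right-distributivity axiom \Ax{S5} of sequential composition over guarded choice, followed by the inductive hypothesis on $\Phi \setminus \{\gamma\}$, we obtain
\begin{align*}
\left( \bigplus_{\alpha \in \Phi} e_\alpha \right);f
&\equiv \left( e_\gamma +_\gamma \bigplus_{\alpha \in \Phi \setminus \{\gamma\}} e_\alpha \right);f \\
&\equiv e_\gamma; f +_\gamma \left(\bigplus_{\alpha \in \Phi \setminus \{\gamma\}} e_\alpha\right); f \\
&\equiv e_\gamma; f +_\gamma \bigplus_{\alpha \in \Phi \setminus \{\gamma\}} e_\alpha; f \\
&\equiv \bigplus_{\alpha \in \Phi} e_\alpha; f,
\end{align*}
where the last step reassembles the generalized guarded sum with the same choice of pivot $\gamma$.

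There is no real obstacle here, as both axioms needed (\Ax{S5} and \Ax{S3}) are available and the induction is straightforward. The only subtlety worth flagging is that the inductive statement must be uniform in $f$ and in the collection $\{e_\alpha\}$ so that the inductive hypothesis applies to the family $\{e_\alpha\}_{\alpha \in \Phi \setminus \{\gamma\}}$ with the same $f$; this is automatic from the quantification in the statement. Reliance on \cref{lem52} guarantees that the particular choice of $\gamma$ does not affect the outcome up to $\equiv$, so the induction is well-defined.
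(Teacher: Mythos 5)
Your proof is correct and is exactly the standard induction one would expect: the paper itself omits the argument, citing the corresponding lemma in the \probgkat paper, and the induction on $|\Phi|$ using \Ax{S3} for the empty case and \Ax{S5} to distribute $;f$ over the guarded choice is precisely that argument. Your remark that \cref{lem52} justifies the arbitrary choice of pivot $\gamma$ is the right point of care and matches how the paper treats generalized guarded sums throughout.
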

		\begin{lem}\label{lem56}
			Let $\Phi \subseteq \At$ and let $\{e_\alpha\}_{\alpha \in \Phi}$ and $\{f_\alpha\}_{\alpha \in \Phi}$ be indexed collections such that $e_\alpha, f_\alpha \in \Exp$ for each $\alpha \in \Phi$ and let $r,s \in \mathbb S$. Then:
			$$\left(\bigplus_{\alpha \in \Phi}e_\alpha\right) \WC{r}{s} \left(\bigplus_{\alpha \in \Phi}f_\alpha\right) \equiv \bigplus_{\alpha \in \Phi}(e_\alpha \WC{r}{s} f_\alpha)$$
		\end{lem}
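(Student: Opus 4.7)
The plan is induction on $|\Phi|$, with \Ax{DF6} --- namely $(e+_b f) \WC{r}{s} (g+_b h) \equiv (e \WC{r}{s} g) +_b (f \WC{r}{s} h)$ --- doing essentially all the work in the inductive step. This derived fact is a perfect match for the claim: it commutes weighted choice past a common outer guarded choice, which is exactly the shape of the outermost operator of a generalized guarded sum after one step of unfolding.

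For the inductive step, pick any $\gamma \in \Phi$; by Lemma~\ref{lem52} the choice of head atom does not matter up to $\equiv$. Writing $E \coloneqq \bigplus_{\alpha \in \Phi \setminus \{\gamma\}} e_\alpha$ and $F \coloneqq \bigplus_{\alpha \in \Phi \setminus \{\gamma\}} f_\alpha$, the left-hand side unfolds to $(e_\gamma +_\gamma E) \WC{r}{s} (f_\gamma +_\gamma F)$. A single application of \Ax{DF6} rewrites this to $(e_\gamma \WC{r}{s} f_\gamma) +_\gamma (E \WC{r}{s} F)$, and the induction hypothesis applied to $\Phi \setminus \{\gamma\}$ replaces the inner $E \WC{r}{s} F$ with $\bigplus_{\alpha \in \Phi \setminus \{\gamma\}}(e_\alpha \WC{r}{s} f_\alpha)$. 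Congruence for $+_\gamma$ then yields the right-hand side.

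The main obstacle is the base case. For $\Phi = \emptyset$ the statement reduces to $\babort \WC{r}{s} \babort \equiv \babort$, but \Ax{W1} only gives $\babort \WC{r}{s} \babort \equiv \odot(r+s);\babort$, and semantically the latter sends $\xmark$ to weight $r+s$ whereas $\babort$ sends it to weight $1$, so this identity is not sound in general. The same obstruction reappears at $|\Phi|=1$: after applying \Ax{DF3} to both sides, one is reduced to $(\gamma;e_\gamma) \WC{r}{s} (\gamma;f_\gamma) \equiv \gamma;(e_\gamma \WC{r}{s} f_\gamma)$, which is fine on atoms below $\gamma$ (via \Ax{D3} and Lemma~\ref{lem48}) but fails on atoms below $\bar\gamma$ unless $r+s=1$. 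In practice this lemma is invoked inside the fundamental theorem with $\Phi = \At$, where these ``extraneous'' atoms simply do not arise and both sides agree on every atom; so the cleanest route is to either restrict the statement's scope to $\Phi = \At$ or to carry along enough information about the other atoms throughout the induction so that the base case becomes vacuous. Once the base case is pinned down, the remainder of the proof is a completely routine peeling-off of atoms, one application of \Ax{DF6} at a time.
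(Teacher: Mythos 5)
Your proposal follows exactly the same route as the paper's proof: induction on $|\Phi|$, with the inductive step consisting of one unfolding of the generalized guarded sum (justified by \cref{lem52}), a single application of \Ax{DF6}, and the induction hypothesis. Where you differ is that you stop to scrutinize the base case, and you are right to do so: the paper's proof simply writes $\babort \WC{r}{s} \babort \equiv \bigplus_{\alpha\in\emptyset}(e_\alpha \WC{r}{s} f_\alpha) = \babort$ with no justification, and as you observe this cannot be derivable in general, since $\partial(\babort \WC{r}{s} \babort)_\alpha = (r+s)\delta_\xmark$ while $\partial(\babort)_\alpha = \delta_\xmark$, so by soundness it would force $r+s=1$. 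The same defect persists for every proper subset $\Phi \subsetneq \At$ (on atoms outside $\Phi$ the two sides weight $\xmark$ by $r+s$ and by $1$ respectively), so the statement as given is false for general $r,s$, and the flaw you found is in the paper's argument rather than in your attempt.

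Your diagnosis of the repair is also the right one. The lemma is only invoked in the fundamental theorem with $\Phi = \At$, where the two sides do agree on every atom; but since the naive induction descends through proper subsets of $\At$, one must either add the hypothesis $r+s=1$, or prove the guarded variant
$b;\bigl(\bigl(\bigplus_{\alpha\in\Phi} e_\alpha\bigr) \WC{r}{s} \bigl(\bigplus_{\alpha\in\Phi} f_\alpha\bigr)\bigr) \equiv b;\bigplus_{\alpha\in\Phi}(e_\alpha \WC{r}{s} f_\alpha)$ with $\Phi = \{\alpha \mid \alpha \leq_{\BA} b\}$,
whose base case $\Phi=\emptyset$, $b=\babort$ is discharged by \Ax{S3}, whose inductive step carries the guard through using \Ax{D3}, \Ax{DF8} and \Ax{DF4}, and whose top instance $\Phi=\At$, $b=\bskip$ recovers the intended statement via \Ax{S1}. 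Modulo pinning down that base case, your inductive step is identical to the paper's and is correct.
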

		\begin{proof}
			We prove this by induction on the size of $\Phi$. For the base case we take a $\Phi$ of size $0$.
			$$\left(\bigplus_{\alpha \in \Phi}e_\alpha\right) \WC{r}{s} \left(\bigplus_{\alpha \in \Phi}f_\alpha\right) \equiv \babort \WC{r}{s} \babort \equiv \bigplus_{\alpha \in \Phi}(e_\alpha \WC{r}{s} f_\alpha)$$
			For our inductive step:
			\begin{align*}
				&\left(\bigplus_{\alpha \in \Phi}e_\alpha\right) \WC{r}{s} \left(\bigplus_{\alpha \in \Phi}f_\alpha\right) \\&\equiv \left(e_\gamma +_\gamma \bigplus_{\alpha \in \Phi\setminus\{\gamma\}}e_\alpha\right) \WC{r}{s} \left(f_\gamma +_\gamma \bigplus_{\alpha \in \Phi\setminus\{\gamma\}}f_\alpha\right)&\cref{gdedsum}\\
				&\equiv \left(e_\gamma \WC{r}{s} f_\gamma\right) +_\gamma \left(\left(\bigplus_{\alpha \in \Phi\setminus\{\gamma\}}e_\alpha\right) \WC{r}{s} \left(\bigplus_{\alpha \in \Phi\setminus\{\gamma\}}f_\alpha\right)\right)&\Ax{DF6}\\
				&\equiv \left(e_\gamma \WC{r}{s} f_\gamma\right) +_\gamma \left(\bigplus_{\alpha \in \Phi\setminus\{\gamma\}}(e_\alpha \WC{r}{s} f_\alpha)\right)&\text{Inductive hypothesis}\\
				&\equiv \bigplus_{\alpha \in \Phi}(e_\alpha \WC{r}{s} f_\alpha)&\rlap{\qedhere} 
			\end{align*}
		\end{proof}
	\end{toappendix}
		
		\begin{defn} 
			\label{defBigOplus}
			Given a non-empty finite index set I and indexed collections $\{r_i\}_{i\in I}$ and $\{e_i\}_{i\in I}$ such that $e_i \in \Exp$ and $r_i \in \mathbb S$ for all $i\in I$ we define a sum  expression inductively:
			$$\bigoplus_{i\in I}r_i \cdot e_i = e_j \WC{r_j}{1} \left( \bigoplus_{i\in I\setminus\{j\}}r_i\cdot e_i \right)$$
		\end{defn}
		Both \cref{gdedsum} and \cref{defBigOplus} are well defined up to the order they are unrolled. See \cref{FTapx} for details.
	\begin{toappendix}
		
		We note that by \Ax{DF14} we can omit the final branch of the generalized weighted sum which has an empty index set and thus is the zero weighting. We sometimes opt to use \Ax{DF12} and express the sum with the scaling notation $\odot r;e$ rather than $e \WC{r}{1} \odot 0$.
		
		\begin{lem}
			Generalized weighted sums are well defined up to $\equiv$ \label{lem59} 
		\end{lem}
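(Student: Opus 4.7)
The plan is to proceed by induction on $|I|$. In the base case $|I|=1$ there is a unique choice of $j \in I$, so the expression is trivially well-defined. For the inductive step with $|I| \geq 2$, I would assume the claim for all nonempty index sets of smaller size and show that any two choices of the initial index to extract yield $\equiv$-equivalent expressions.

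Fix two distinct $j, k \in I$ (the case $j=k$ being immediate), and let $X = \bigoplus_{i \in I \setminus \{j, k\}} r_i \cdot e_i$, which is well-defined up to $\equiv$ by the inductive hypothesis (when $I = \{j,k\}$ we take $X$ to be $\odot 0$ following the convention from the preceding remark and \Ax{DF14}). Applying the inductive hypothesis to the two sub-sums of size $|I|-1$ lets us re-expand each so that the other of the two distinguished indices is pulled out next. Thus the claim reduces to showing
\begin{align*}
e_j \WC{r_j}{1} \bigl(e_k \WC{r_k}{1} X\bigr) \equiv e_k \WC{r_k}{1} \bigl(e_j \WC{r_j}{1} X\bigr).
\end{align*}

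This adjacent swap follows from applying \Ax{W3} on each side to denest the weighted choices, yielding
\begin{align*}
e_j \WC{r_j}{1} \bigl(e_k \WC{r_k}{1} X\bigr) &\equiv \bigl(e_j \WC{r_j}{r_k} e_k\bigr) \WC{1}{1} X, \\
e_k \WC{r_k}{1} \bigl(e_j \WC{r_j}{1} X\bigr) &\equiv \bigl(e_k \WC{r_k}{r_j} e_j\bigr) \WC{1}{1} X,
\end{align*}
and then applying \Ax{W2} to the inner choice to get $e_j \WC{r_j}{r_k} e_k \equiv e_k \WC{r_k}{r_j} e_j$; congruence of $\equiv$ with respect to weighted choice concludes.

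I do not expect any deep obstacle: the whole argument is an adjacent-swap reduction lifted to arbitrary reorderings via the induction hypothesis, with \Ax{W3} and \Ax{W2} doing all the real work. The only bit of care is in the smallest cases where the innermost recursion hits the empty index set, which is handled by fixing the empty-sum convention through \Ax{DF14} and \Ax{DF12} so the denesting axiom still applies uniformly.
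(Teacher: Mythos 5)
Your proof is correct and follows essentially the same route as the paper: induction on $|I|$ reducing well-definedness to an adjacent swap, discharged by denesting with \Ax{W3}, commuting with \Ax{W2}, and re-nesting with \Ax{W3}. The only cosmetic difference is that the paper treats the two-element case separately via \Ax{DF12} and \Ax{W4}, whereas you handle it uniformly by taking the empty inner sum to be $\odot 0$; both work.
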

		\begin{proof}
			We prove this by induction similar to the generalized convex sums of \probgkat \cite{rozowski2023probabilistic}. The case where $I=\{i\}$ is immediate as $\bigoplus_{i\in \{I\}} r_i\cdot e_i \equiv e_i \WC{r}{1} \odot 0$.
			 
			Suppose $I =\{k,j\}$ then:
			\begin{align*}
				e_j \WC{r_j}{1} \left( \bigoplus_{i\in I\setminus\{j\}}r_i\cdot e_i \right) &\equiv e_j \WC{r_j}{1} (e_k \WC{r_k}{1} \odot 0) &\cref{defBigOplus}\\
				&\equiv e_j \WC{r_j}{1} (\odot r_k;e_k) &\Ax{DF12}\\
				&\equiv \odot (r_j; e_j) \WC{1}{1} (\odot r_k;e_k) &\Ax{W4}\\
				&\equiv \odot (r_k; e_k) \WC{1}{1} (\odot r_j;e_j) &\Ax{W2}\\
				&\equiv e_k \WC{r_k}{1} \left( \bigoplus_{i\in I\setminus\{k\}}r_i\cdot e_i \right)&\text{symmetry}
			\end{align*}
			For our inductive case we assume that $\{k,j\}\subset I$:
			\begin{align*}
				e_j \WC{r_j}{1} \left( \bigoplus_{i\in I\setminus\{j\}}r_i\cdot e_i \right) &\equiv e_j \WC{r_j}{1} \left(e_k \WC{r_k}{1}\left( \bigoplus_{i\in I\setminus\{j,k\}}r_i\cdot e_i \right)\right)&\cref{defBigOplus}\\
				& \equiv\left( e_j \WC{r_j}{r_k} e_k\right) \WC{1}{1}\left( \bigoplus_{i\in I\setminus\{j,k\}}r_i\cdot e_i \right)&\Ax{W3}\\
				& \equiv\left( e_k \WC{r_k}{r_j} e_j\right) \WC{1}{1}\left( \bigoplus_{i\in I\setminus\{j,k\}}r_i\cdot e_i \right)&\Ax{W2}\\
				& \equiv e_k \WC{r_k}{1} \left(e_j \WC{r_j}{1}\left( \bigoplus_{i\in I\setminus\{j,k\}}r_i\cdot e_i \right)\right)&\Ax{W3}\\
				& \equiv e_k \WC{r_k}{1} \left(\bigoplus_{i\in I\setminus\{k\}}r_i\cdot e_i \right)&\rlap{\qedhere} 
			\end{align*} 
		\end{proof}
		\begin{rem}
		We note here that we will sometimes write $$\bigoplus_{i \in I} r_i\cdot e_i \WC{}{} \bigoplus_{j \in J} s_j\cdot f_j$$ for the generalized weighted sum $$\bigoplus_{i \in I\cup J}t_k\cdot g_k \;\; t_k = \begin{cases}
			r_k & k \in I\\ s_k & k \in J 
		\end{cases},\quad g_k = \begin{cases}
		e_k & k \in I\\ f_k & k \in J
		\end{cases}$$ where $I$ and $J$ are finite and nonempty and $I \cap J = \emptyset$. This is useful to separate parts of a generalized weighted sum.
		\end{rem}
		\begin{lem}
			Let $I$ be a non-empty finite index set, $\{r_i\}_{i\in I}$ and $\{e_i\}_{i\in I}$ indexed collections such that $r_i \in \mathbb S$ and $e_i \in \Exp$ for all $i$ and $f \in \Exp$ then:
			$$\left(\bigoplus_{i\in I}r_i\cdot e_i\right);f \equiv \left(\bigoplus_{i\in I}r_i\cdot e_i;f\right)$$ \label{lem61}
		\end{lem}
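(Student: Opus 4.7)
The plan is to prove this by induction on the size of the finite index set $I$, mirroring the structure of the definition of generalized weighted sums in \cref{defBigOplus}. The statement is essentially a right-distributivity property of sequential composition over the iterated weighted choice, and the key ingredient should be axiom \Ax{S4}, which says that sequential composition right-distributes over a single binary weighted choice.

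For the base case $|I| = 1$, say $I = \{i\}$, we have $\bigoplus_{i \in I} r_i \cdot e_i \equiv e_i \WC{r_i}{1} \odot 0$. Composing on the right with $f$ and applying \Ax{S4} yields $e_i;f \WC{r_i}{1} \odot 0;f$, and then \Ax{C2} collapses $\odot 0;f$ to $\odot 0$, so the result equals $\bigoplus_{i \in I} r_i \cdot e_i;f$ up to $\equiv$. In the inductive step, assume the claim for all index sets of size less than $|I|$. Pick any $j \in I$, use \cref{defBigOplus} to write $\bigoplus_{i\in I} r_i \cdot e_i \equiv e_j \WC{r_j}{1} \left(\bigoplus_{i \in I\setminus\{j\}} r_i\cdot e_i\right)$, apply \Ax{S4} to push the $;f$ inside the outer weighted choice, and then invoke the induction hypothesis on the strictly smaller sum in the right branch. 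Reassembling via \cref{defBigOplus} gives $\bigoplus_{i \in I} r_i \cdot e_i;f$.

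The schematic calculation is:
\begin{align*}
\left(\bigoplus_{i\in I}r_i\cdot e_i\right);f
&\equiv \left(e_j \WC{r_j}{1} \bigoplus_{i\in I\setminus\{j\}} r_i\cdot e_i\right);f \\
&\equiv e_j;f \WC{r_j}{1} \left(\bigoplus_{i\in I\setminus\{j\}} r_i\cdot e_i\right);f \tag{\Ax{S4}} \\
&\equiv e_j;f \WC{r_j}{1} \bigoplus_{i\in I\setminus\{j\}} r_i\cdot (e_i;f) \tag{IH} \\
&\equiv \bigoplus_{i\in I} r_i\cdot (e_i;f).
\end{align*}

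There is no substantive obstacle here; the only mild subtlety is that \cref{defBigOplus} fixes a choice of $j$ used to unroll the sum, so formally one should appeal to \cref{lem59} (well-definedness up to $\equiv$) to ensure the reassembly step is legitimate regardless of the order in which the sum is unrolled. The same $j$ can be reused on both sides of the equivalence, keeping the bookkeeping straightforward.
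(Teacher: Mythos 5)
Your proof is correct and follows essentially the same route as the paper's: induction on $|I|$, with \Ax{S4} pushing the composition through the outer weighted choice and the induction hypothesis handling the tail (the paper's base case routes through \Ax{DF12} instead of \Ax{C2}, but this is an immaterial difference). Your remark about appealing to \cref{lem59} for well-definedness of the reassembly is a sensible and accurate observation.
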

		\begin{proof}
			By induction on the size of $I$. For the base case let $I = \{j\}$.
			\begin{align*}
				\left(\bigoplus_{i\in I}r_i\cdot e_i\right);f &\equiv (\odot r_j;e_j);f&\Ax{DF12}\\ 
				&\equiv (\odot r_j;e_j;f)&\Ax{S4}\\ 
				&\equiv \left(\bigoplus_{i\in I}r_i\cdot e_i;f\right) &\Ax{DF12}
			\end{align*}
			For the inductive case suppose $\{j\} \subset I$.
			\begin{align*}
				\left(\bigoplus_{i\in I}r_i\cdot e_i\right);f &\equiv  \left(e_j \WC{r_j}{1} \left(\bigoplus_{i\in I\setminus\{j\}}r_i\cdot e_i\right)\right);f&\cref{defBigOplus}\\
				&\equiv  \left(e_j;f \WC{r_j}{1} \left(\bigoplus_{i\in I\setminus\{j\}}r_i\cdot e_i\right);f\right)&\Ax{S4}\\
				&\equiv  \left(e_j;f \WC{r_j}{1} \left(\bigoplus_{i\in I\setminus\{j\}}r_i\cdot e_i;f\right)\right)&\text{Inductive hypothesis}\\
				&\equiv  \left(\bigoplus_{i\in I}r_i\cdot e_i;f\right)& \rlap{\qedhere} 
			\end{align*}
		\end{proof}
		\begin{lem}
			Let $I$ be a non-empty finite index set, $\{r_i\}_{i\in I}$ and $\{e_i\}_{i\in I}$ indexed collections such that $r_i \in \mathbb S$ and $e_i \in \Exp$ for all $i\in I$. Let $E = \bigcup_{i \in I}\{e_i\}$. Then: 
			
			$$\bigoplus_{i \in I} r_i\cdot e_i \equiv \bigoplus_{e \in E}\left(\sum_{e_i=e}r_i\right)\cdot e$$ \label{lem63}
		\end{lem}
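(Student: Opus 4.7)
The plan is to proceed by strong induction on $|I|$. The base case $|I|=1$ is immediate: both sides collapse to $\odot r_{i_0};e_{i_0}$ by unfolding \Cref{defBigOplus} and applying $\Ax{DF12}$ to $e_{i_0} \WC{r_{i_0}}{1} \odot 0$.

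For the inductive step, fix any $j \in I$ and distinguish two cases. If $e_i \neq e_j$ for every $i \in I\setminus\{j\}$, then unrolling the sum at $j$ via \Cref{defBigOplus} and applying the inductive hypothesis to the subsum over $I\setminus\{j\}$ yields the target directly, since in this case $\sum_{e_i=e_j}r_i = r_j$.

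The interesting case is when there exists $k \in I \setminus \{j\}$ with $e_k = e_j$. Here we merge these two duplicate branches using $\Ax{W3}$, $\Ax{W1}$ and $\Ax{W4}$:
\begin{align*}
\bigoplus_{i\in I} r_i\cdot e_i
&\equiv e_j \WC{r_j}{1} \left(e_k \WC{r_k}{1} \bigoplus_{i \in I\setminus\{j,k\}} r_i\cdot e_i\right) &\cref{defBigOplus} \\
&\equiv (e_j \WC{r_j}{r_k} e_k) \WC{1}{1} \bigoplus_{i\in I\setminus\{j,k\}} r_i\cdot e_i & \Ax{W3} \\
&\equiv \odot(r_j+r_k);e_j \WC{1}{1} \bigoplus_{i\in I\setminus\{j,k\}} r_i\cdot e_i & \Ax{W1}\\
&\equiv e_j \WC{r_j+r_k}{1} \bigoplus_{i\in I\setminus\{j,k\}} r_i\cdot e_i & \Ax{W4}
\end{align*}
The result is itself a generalized weighted sum indexed by a set of size $|I|-1$, obtained from $I$ by replacing the pair $\{j,k\}$ with a single fresh index whose expression is $e_j$ and whose weight is $r_j+r_k$. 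Crucially, the set of distinct expressions is still exactly $E$, and for every $e \in E$ the total $\sum_{e_i=e}r_i$ is unchanged (for $e \neq e_j$ trivially, and for $e = e_j$ because the new contribution $r_j+r_k$ exactly replaces the two original contributions $r_j$ and $r_k$). The inductive hypothesis then delivers the target form.

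The only subtle point, and thus the main obstacle, is the bookkeeping in the second case: one must verify that after the merge the multiset of weights-per-distinct-expression is truly preserved, so that the inductive hypothesis produces precisely $\bigoplus_{e\in E}\bigl(\sum_{e_i=e}r_i\bigr)\cdot e$. Everything else reduces to a routine chain of axiom applications.
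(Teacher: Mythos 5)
Your proof is correct and follows essentially the same route as the paper's: induction on $|I|$, a case split on whether the chosen index's expression duplicates another, and the same $\Ax{W3}$/$\Ax{W1}$/$\Ax{W4}$ chain to merge the duplicate branches. The only (immaterial) difference is the order of operations — you merge two duplicate leaves first and then invoke the inductive hypothesis on the resulting smaller sum, whereas the paper applies the inductive hypothesis to the tail $\bigoplus_{i \in I\setminus\{k\}} r_i \cdot e_i$ first and then merges the pulled-out term into the already-collapsed sum.
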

		\begin{proof}
			We prove this by induction on the size of $I$. If $|I| = 1$ then the sum is $$e_1 \WC{r_1}{0} \babort$$ which satisfies the lemma.
			
			We now present the inductive case. Let $I = k+1$. We will pull out some $e_k$. Either $e_k$ is equal to another other element in $\{e_i\}_{i \in I}$ or not. In the first case:
			\begin{align*}
				&\bigoplus_{i \in I} r_i\cdot e_i \\
				&\equiv e_k \WC{r_k}{1}\bigoplus_{j \in I \setminus \{k\}}r_j\cdot e_j&\cref{defBigOplus}\\
				&\equiv e_k \WC{r_k}{1}\bigoplus_{e \in E}\left(\sum_{\{i \mid e_i=e, i \neq k\}}r_i\right)\cdot e&\text{Ind. hypothesis}\\
				&\equiv e_k \WC{r_k}{1}\left(e_k \WC{\sum_{\{i \mid e_i=e_k, i \neq k\}} r_ i}{1} \bigoplus_{e \in E \setminus \{e_k\}}\left(\sum_{\{i \mid e_i=e\}}r_i\right)\cdot e\right)&\cref{defBigOplus}\\
				&\equiv \left( e_k \WC{r_k}{\sum_{\{i \mid e_i=e_k, i \neq k\}} r_ i}e_k \right) \WC{1}{1} \bigoplus_{e \in E \setminus \{e_k\}}\left(\sum_{\{i \mid e_i=e\}}r_i\right)\cdot e&\Ax{W3}\\
				&\equiv \left( \odot \sum_{\{i \mid e_i=e_k\}} r_i;  e_k\right) \WC{1}{1} \bigoplus_{e \in E \setminus \{e_k\}}\left(\sum_{\{i \mid e_i=e\}}r_i\right)\cdot e&\Ax{W1}\\
				&\equiv e_k \WC{\sum_{\{i \mid e_i=e_k\}} r_i}{1} \bigoplus_{e \in E \setminus \{e_k\}}\left(\sum_{\{i \mid e_i=e\}}r_i\right)\cdot e& \Ax{W4}\\
				&\equiv \bigoplus_{e \in E}\left(\sum_{e_i=e}r_i\right)\cdot e&\cref{defBigOplus}
			\end{align*}
			In the second case:
			\begin{align*}
				\bigoplus_{i \in I} r_i\cdot e_i &\equiv e_k \WC{r_k}{1}\bigoplus_{j \in I \setminus \{i\}}r_j\cdot e_j&\cref{defBigOplus}\\
				&\equiv e_k \WC{r_k}{1}\bigoplus_{e \in E \setminus \{e_k\}}\left(\sum_{\{i \mid e_i=e\}}r_i\right)\cdot e&\text{Inductive hypothesis}\\
				&\equiv e_k \WC{\sum_{i \mid e_i=e_k}r_i}{1}\bigoplus_{e \in E \setminus \{e_k\}}\left(\sum_{\{i \mid e_i=e\}}r_i\right)\cdot e&\\
				&\equiv \bigoplus_{e \in E}\left(\sum_{e_i=e}r_i\right)\cdot e&\rlap{\qedhere} 
			\end{align*} 
		\end{proof}
		\begin{lem}
			Let $I$ be a non-empty finite index set, $\{r_i\}_{i \in I}$ and $\{e_i\}_{i \in I}$ indexed collections such that $r_i \in S$ and $e_i \in \Exp$ for all $i \in I$. Let $E = \bigcup_{i \in I}\{e_i\}$ then: 
			$$\bigoplus_{i\in I}r_i\cdot e_i \equiv \bigoplus_{[e]_\equiv \in E/\equiv}\left(\sum_{e_i \equiv e} r_i\right)\cdot e$$
			\label{lem64}
		\end{lem}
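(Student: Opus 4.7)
The plan is to reduce this lemma to \cref{lem63} via a congruence-based substitution step. For each equivalence class $[e]_\equiv \in E/\equiv$, first pick a canonical representative $\hat{e} \in [e]_\equiv$, and define $\hat{e}_i := \widehat{[e_i]_\equiv}$ for every $i \in I$, so that $e_i \equiv \hat{e}_i$ by construction. The idea is then to rewrite the sum over $\{e_i\}$ as a sum over $\{\hat{e}_i\}$, and finally invoke \cref{lem63} to collapse it by now-syntactically-equal representatives.

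The first step is to show $\bigoplus_{i \in I} r_i \cdot e_i \equiv \bigoplus_{i \in I} r_i \cdot \hat{e}_i$. I would prove this by induction on $|I|$ using the unrolling from \cref{defBigOplus}. The base case $|I| = 1$ amounts to $e_j \equiv \hat{e}_j$ combined with the congruence of $\equiv$ with respect to $\WC{r_j}{1}$ and the fact that the trailing $\odot 0$ term is unaffected. For the inductive step, unrolling at some $j \in I$ isolates a single $\WC{r_j}{1}$ whose left argument is $e_j$ and whose right argument is a generalized weighted sum over $I \setminus \{j\}$; the inductive hypothesis handles the latter, and $\equiv$-congruence of $\WC{r}{s}$ propagates the substitution at $e_j$. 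The choice of unrolling order is immaterial by \cref{lem59}.

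The second step is a direct application of \cref{lem63} to the sum $\bigoplus_{i \in I} r_i \cdot \hat{e}_i$. Letting $\widehat{E} = \{ \hat{e}_i \mid i \in I\}$ denote the set of chosen representatives, \cref{lem63} gives
\[
    \bigoplus_{i \in I} r_i \cdot \hat{e}_i \equiv \bigoplus_{\hat{e} \in \widehat{E}} \left( \sum_{\hat{e}_i = \hat{e}} r_i \right) \cdot \hat{e}.
\]
Finally, the map $\hat{e} \mapsto [\hat{e}]_\equiv$ is a bijection between $\widehat{E}$ and $E/\equiv$, and by construction $\hat{e}_i = \hat{e}$ holds exactly when $e_i \equiv \hat{e}$. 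Reindexing the displayed sum along this bijection yields the required right-hand side (where any other choice of representative per class produces a $\equiv$-equivalent expression, again by the congruence substitution used in the first step).

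The only delicate point, and hence the main obstacle in writing this out, is the congruence substitution of the first step: one must be confident that the $\bigoplus$ construction is built purely from the \wgkat operator $\WC{r}{s}$ (with $\odot 0$ as a terminator), so that $\equiv$ lifts componentwise through the whole nested expression, and that the reindexing in \cref{defBigOplus} is compatible with this substitution thanks to \cref{lem59}. No additional machinery beyond \cref{lem59} and \cref{lem63} is needed, so the argument is short and essentially bookkeeping.
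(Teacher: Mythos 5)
Your proposal is correct and matches the paper's intent: the paper's own proof is literally the one-line remark that the argument is ``similar to \cref{lem63} using $\equiv$ in place of $=$,'' and your write-up is a clean way of making that precise --- substituting canonical representatives via the congruence property of $\equiv$ (justified through \cref{lem59} and the unrolling of \cref{defBigOplus}), then invoking \cref{lem63} as a black box and reindexing along the bijection $\widehat{E}\cong E/\equiv$. No gap; if anything, your factorization is tidier than interleaving the $\equiv$-rewriting into the induction of \cref{lem63} directly.
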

		\begin{proof}
			The proof is similar to \cref{lem63} using $\equiv$ in place of $=$.
		\end{proof}
	\end{toappendix}
	\begin{thmrep}[Fundamental Theorem]
		For every $e \in \Exp$ it holds that $$e \equiv \bigplus_{\alpha \in \At} \left(\bigoplus_{d \in \text{supp}(\partial(e)_\alpha)}\partial(e)_\alpha(d) \cdot \ex(d)\right)$$
		where \text{exp} defines a function $2+\Out+\Act\times\Exp \to \Exp$ given by
		$$\exp(\xmark) = \babort \;\;\;\exp(\cmark) = \bskip \;\;\;\ex(v) = v \;\;\;\exp((p, f)) = p;f \;\;\; (v \in \Out, p \in \Act, f \in \Exp)$$\label{fundamentaltheorem}
	\end{thmrep}
	\begin{appendixproof}
	The proof is by induction on the construction of $e$. The base cases are trivial, we illustrate one of them. We also omit the case of guarded choice as it is unchanged from \probgkat \cite{rozowski2023probabilistic}.
	\subparagraph*{Case of: $e\equiv p$}
	\begin{align*}
		p &\equiv \bigplus_{\alpha \in \At} p & \Ax{G1}\\
		&\equiv \bigplus_{\alpha \in \At} p;\bskip & \Ax{S1}\\
		&\equiv \bigplus_{\alpha \in \At}\left(\bigoplus_{d \in \text{supp}(\partial(e)_\alpha)} \partial(e )_\alpha(d)\cdot \text{exp}(d)\right) &
	\end{align*}
	\subparagraph*{Case of: $e \equiv f \WC{r}{s} g$}
	\begin{align*}
		f \WC{r}{s} g &\equiv \bigplus_{\alpha \in \At} \left(\bigoplus_{d \in \text{supp}(\partial(f)_\alpha)}\partial(f)_\alpha(d) \cdot \text{exp}(d)\right) \\&\;\;\WC{r}{s} \bigplus_{\alpha \in \At} \left(\bigoplus_{d \in \text{supp}(\partial(g)_\alpha)}\partial(g)_\alpha(d) \cdot \text{exp}(d)\right)& \text{Ind. hypothesis}\\
		&\equiv\bigplus_{\alpha \in \At} \Biggl( \bigoplus_{d \in \text{supp}(\partial(f)_\alpha)}\partial(f)_\alpha(d) \cdot \text{exp}(d)\\&\;\; \WC{r}{s} \bigoplus_{d \in \text{supp}(\partial(g)_\alpha)}\partial(g)_\alpha(d) \cdot \text{exp}(d)\Biggl)& \cref{lem56}
	\end{align*}
	For each $\alpha$ we can rearrange this inner expression using \Ax{W3} into a single generalized sum. Furthermore, let $I_\alpha \coloneq \text{supp}(\partial(f)_\alpha) + \text{supp}(\partial(g)_\alpha)$, where $+$ is the coproduct. Then let $$r_i = \begin{cases}
		r\partial(f)_\alpha(i)& i \in \text{supp}(\partial(f)_\alpha)\\s\partial(g)_\alpha(i)& i \in \text{supp}(\partial(s)_\alpha)
	\end{cases}$$
	Hence we can define $$f \WC{r}{s} g = \bigplus_{\alpha \in \At}\left(\bigoplus_{i\in I_\alpha} r_i\cdot \text{exp}(i)\right)$$
	So by \cref{lem63} we can combine terms which gives us
	$$\bigplus_{\alpha \in \At}\left(\bigoplus_{d \in \text{supp}(\partial(f)_\alpha) \cup \text{supp}(\partial(g)_\alpha)} \left(r\partial(f)_\alpha(d) + s\partial(g)_\alpha(d)\right)\cdot \text{exp(d)}\right)$$
	Which is the same as
	$$\bigplus_{\alpha \in \At}\left(\bigoplus_{d \in \text{supp}(\partial(f \WC{r}{s} g)_\alpha)} \partial(f \WC{r}{s} g)_\alpha(d)\cdot \text{exp(d)}\right)$$
	
	\paragraph*{Case of: $e= f;g$}
	\begin{align*}
		f;g &\equiv \left(\bigplus_{\alpha \in \At} \left(\bigoplus_{d \in \text{supp}(\partial(f)_\alpha)}\partial(f)_\alpha(d) \cdot \text{exp}(d)\right)\right);g& \text{Inductive hypothesis}\\
		&\equiv \left(\bigplus_{\alpha \in \At} \left(\bigoplus_{d \in \text{supp}(\partial(f)_\alpha)}\partial(f)_\alpha(d) \cdot \text{exp}(d)\right);g\right)& \cref{lem55}\\
		&\equiv \left(\bigplus_{\alpha \in \At} \left(\bigoplus_{d \in \text{supp}(\partial(f)_\alpha)}\partial(f)_\alpha(d) \cdot \text{exp}(d);g\right)\right)& \cref{lem61}
	\end{align*}
	We split this term to get
	\begin{align*}
		&\bigplus_{\alpha \in \At} \left(\partial(f)_\alpha(\cmark) \cdot \bskip;g \WC{}{} \bigoplus_{{\substack{d \in \{\xmark\} \cup \Out}}}\partial(f)_\alpha(d)\cdot \text{exp}(d);g  \right. \\ &\left. \;\;\WC{}{} \bigoplus_{{\substack{d \in \text{supp}(\partial(f)_\alpha)\cap \Act\times\Exp}}}\partial(f)_\alpha(d) \cdot \text{exp}(d);g\right)\\
		&\equiv \bigplus_{\alpha \in \At} \left(\partial(f)_\alpha(\cmark) \cdot g \WC{}{}  \bigoplus_{{\substack{d \in \{\xmark\} \cup \Out}}}\partial(f)_\alpha(d) \cdot \text{exp}(d);g \right. \\ &\left.\;\;\WC{}{} \bigoplus_{{\substack{d \in \text{supp}(\partial(f)_\alpha)\cap \Act\times\Exp}}}\partial(f)_\alpha(d) \cdot \text{exp}(d);g\right) & \Ax{S1}\\
		&\equiv \bigplus_{\alpha \in \At} \left(\partial(f)_\alpha(\cmark) \cdot g \WC{}{}  \bigoplus_{{\substack{d \in \{\xmark\} \cup \Out}}}\partial(f)_\alpha(d) \cdot \text{exp}(d) \right. \\ &\left.\;\;\WC{}{} \bigoplus_{{\substack{d \in \text{supp}(\partial(f)_\alpha)\cap \Act\times\Exp}}}\partial(f)_\alpha(d) \cdot \text{exp}(d);g\right) & \Ax{S3, S6}\\
		&\equiv \bigplus_{\alpha \in \At} \alpha;\left(\partial(f)_\alpha(\cmark) \cdot g \WC{}{}  \bigoplus_{{\substack{d \in \{\xmark\} \cup \Out}}}\partial(f)_\alpha(d) \cdot \text{exp}(d) \right. \\ &\left.\;\;\WC{}{} \bigoplus_{{\substack{d \in \text{supp}(\partial(f)_\alpha)\cap \Act\times\Exp}}}\partial(f)_\alpha(d) \cdot \text{exp}(d);g\right) & \Ax{G2}\\
		&\equiv \bigplus_{\alpha \in \At} \alpha;\left(\partial(f)_\alpha(\cmark) \cdot \alpha;g \WC{}{}  \bigoplus_{{\substack{d \in \{\xmark\} \cup \Out}}}\partial(f)_\alpha(d) \cdot \text{exp}(d) \right. \\ &\left.\;\; \WC{}{} \bigoplus_{{\substack{d \in \text{supp}(\partial(f)_\alpha)\cap \Act\times\Exp}}}\partial(f)_\alpha(d) \cdot \text{exp}(d);g\right) & \Ax{DF13}\\
		&\equiv \bigplus_{\alpha \in \At} \left(\partial(f)_\alpha(\cmark) \cdot \alpha;g \WC{}{} \bigoplus_{{\substack{d \in \{\xmark\} \cup \Out}}}\partial(f)_\alpha(d) \cdot \text{exp}(d) \right. \\ &\left.\;\;\WC{}{} \bigoplus_{{\substack{d \in \text{supp}(\partial(f)_\alpha)\cap \Act\times\Exp}}}\partial(f)_\alpha(d) \cdot \text{exp}(d);g\right) & \Ax{G2}
	\end{align*}
	We can apply the inductive hypothesis and \Ax{DF4} to say 
	$$\alpha;g = \alpha;\bigoplus_{d\in \text{supp}(\partial(g)_\alpha)}\partial(g)_\alpha(d)\cdot \text{exp}(d)$$ The $\alpha$ will be removed by \Ax{G2} as soon as we substitute this back into the above expression.
	We substitute and can express this as one sum. Let
	\begin{align*}I_\alpha &= \text{supp}(\partial(g)_\alpha)+(\text{supp}(\partial(f)_\alpha)\cap (\{\xmark\} \cup \Out))\\&+\{(a, f^\prime;g) \mid (a,f^\prime) \in \text{supp}(\partial(f)_\alpha) \cap \Act\times\Exp\}\end{align*}
	For each $\alpha \in \At$ let $\{r_i\}_{i\in I_\alpha}$ such that 
	$$ r_i  = \begin{cases}
		\partial(f)_\alpha(\cmark)\partial(g)_\alpha(i)& i\in \text{supp}(\partial(g)_\alpha)\\
		\partial(f)_\alpha(i)&i\in \text{supp}(\partial(f)_\alpha\cap(\{\xmark\} \cup \Out))\\
		\partial(f)_\alpha(a,f^\prime)&i=(a,f^\prime;g) \text{ and } (a,f^\prime) \in \text{supp}(\partial(f)_\alpha \cap \Act\times\Exp)
	\end{cases}$$
	Hence we have
	$$f;g \equiv \bigplus_{\alpha \in \At}\left(\bigoplus_{i \in I_\alpha}r_i \cdot \text{exp}(i)\right)$$
	Finally we apply \cref{lem63} to arrive at
	$$f;g \equiv \bigplus_{\alpha \in \At}\left(\bigoplus_{d \in \text{supp}(\partial(f;g)_\alpha)} \partial(f;g)_\alpha(d) \cdot \text{exp}(d)\right)$$
	\paragraph*{Case of: $e \equiv f^{(b)}$}
	We want to show that $$f^{(b)} \equiv \bigplus_{\alpha \in \At}\left(\bigoplus_{d \in \text{supp}(\partial(f^{(b)})_\alpha)}\partial\left(f^{(b)}\right)_\alpha(d)\cdot\ex(d)\right)$$
	Note that due to \cref{lem54} this is equivalent to proving $$\bigplus_{\alpha \in \At} \alpha;f^{(b)} \equiv \bigplus_{\alpha \in \At}\alpha;\left(\bigoplus_{d \in \text{supp}(\partial(f^{(b)})_\alpha)}\partial\left(f^{(b)}\right)_\alpha(d)\cdot\ex(d)\right)$$
	Which is a consequence of the statement $\forall \alpha \in \At$
	$$\alpha;f^{(b)} \equiv \alpha;\left(\bigoplus_{d \in \text{supp}(\partial(f^{(b)})_\alpha)}\partial\left(f^{(b)}\right)_\alpha(d)\cdot\ex(d)\right)$$
	We first consider an atom $\gamma \leq_{\BA} \bar b$. It holds that
	\begin{align*}
		\gamma;f^{(b)} &\equiv \gamma;(f;f^{(b)}+_b \bskip)&\Ax{L1}\\
		&\equiv (f;f^{(b)}+_b \bskip)+_\gamma \babort&\Ax{DF3}\\
		&\equiv f;f^{(b)}+_{b\gamma}(\bskip +_\gamma \babort)&\Ax{G4}\\
		&\equiv f;f^{(b)}+_{\babort}(\bskip +_\gamma \babort)&\gamma \leq_{\BA} \bar b\\
		&\equiv (\bskip+_\gamma \babort)+_{\bskip} f;f^{(b)}&\Ax{G4}\\
		&\equiv (\bskip+_\gamma \babort)&\Ax{DF7}\\
		&\equiv \gamma;\bskip&\Ax{DF3}\\
		&\equiv \gamma;\left(\bigoplus_{d \in \text{supp}(\partial(f^{(b)})_\gamma)}\partial\left(f^{(b)}\right)_\gamma(d)\cdot\ex(d)\right)&
	\end{align*}
	Consider an atom $\gamma \leq_{\BA} b$.
	\begin{align*}
		f &\equiv \bigplus_{\alpha \in \At}\left(\bigoplus_{d \in \text{supp}(\partial(f)_\alpha)}\partial\left(f\right)_\alpha(d)\cdot\ex(d)\right) & \text{Inductive hypothesis}\\
		&\equiv \bigoplus_{d \in \text{supp}(\partial(f)_\gamma)}\partial\left(f\right)_\gamma(d)\cdot\ex(d) \\&\;\;+_\gamma \bigplus_{\alpha \in \At\setminus \gamma}\left(\bigoplus_{d \in \text{supp}(\partial(f)_\alpha)}\partial\left(f\right)_\alpha(d)\cdot\ex(d)\right)\\
		&\equiv \left(\bigoplus_{d \in \text{supp}(\partial(f)_\gamma) \setminus \cmark}\partial\left(f\right)_\gamma(d)\cdot\ex(d) \WC{1}{\partial(f)_\gamma(\cmark)} \bskip\right) \\&\;\;+_\gamma \bigplus_{\alpha \in \At\setminus \gamma}\left(\bigoplus_{d \in \text{supp}(\partial(f)_\alpha)}\partial\left(f\right)_\alpha(d)\cdot\ex(d)\right)
	\end{align*}
	We continue with \Ax{L2} to get
	\begin{align*}
		\gamma;f^{(b)} &\equiv \gamma;\left(\left(\odot{\partial(f)_\gamma(\cmark)}^*;\bigoplus_{\mathclap{\substack{d \in \text{supp}(\partial(f)_\gamma)\setminus \cmark}}}\partial\left(f\right)_\gamma(d)\cdot\ex(d)\right);f^{(b)}+_b \bskip\right)\\
		&\equiv \gamma;\left(\left(\odot{\partial(f)_\gamma(\cmark)}^*;\bigoplus_{\mathclap{\substack{d \in \text{supp}(\partial(f)_\gamma)\setminus \cmark}}}\partial\left(f\right)_\gamma(d)\cdot\ex(d);f^{(b)}\right)+_b \bskip\right)&\cref{lem61}\\
		&\equiv \left(\left(\odot 	{\partial(f)_\gamma(\cmark)}^*;\bigoplus_{\mathclap{\substack{d \in \text{supp}(\partial(f)_\gamma)\setminus \cmark}}}\partial\left(f\right)_\gamma(d)\cdot\ex(d);f^{(b)}\right)+_b \bskip\right)+_\gamma \babort&\Ax{DF3}\\
		&\equiv \left(\odot 	{\partial(f)_\gamma(\cmark)}^*;\bigoplus_{\mathclap{\substack{d \in \text{supp}(\partial(f)_\gamma)\setminus \cmark}}}\partial\left(f\right)_\gamma(d)\cdot\ex(d);f^{(b)}\right)+_{\gamma b} (\bskip +_\gamma \babort)&\Ax{G4}\\
		&\equiv \left(\odot 	{\partial(f)_\gamma(\cmark)}^*;\bigoplus_{\mathclap{\substack{d \in \text{supp}(\partial(f)_\gamma)\setminus \cmark}}}\partial\left(f\right)_\gamma(d)\cdot\ex(d);f^{(b)}\right)+_{\gamma} (\bskip +_\gamma \babort)&\gamma \leq_{\BA} b\\
		&\equiv (\babort +_{\bar\gamma} \bskip) +_{\bar\gamma} \left(\odot {\partial(f)_\gamma(\cmark)}^*;\bigoplus_{\mathclap{\substack{d \in \text{supp}(\partial(f)_\gamma)\setminus \cmark}}}\partial\left(f\right)_\gamma(d)\cdot\ex(d);f^{(b)}\right)&\Ax{G3}\\
		&\equiv (\bar\gamma;(\babort +_{\bar\gamma} \bskip))+_{\bar\gamma} \left(\odot {\partial(f)_\gamma(\cmark)}^*;\bigoplus_{\mathclap{\substack{d \in \text{supp}(\partial(f)_\gamma)\setminus \cmark}}}\partial\left(f\right)_\gamma(d)\cdot\ex(d);f^{(b)}\right)&\Ax{G2}\\
		&\equiv (\bar\gamma;\babort) +_{\bar\gamma} \left(\odot {\partial(f)_\gamma(\cmark)}^*;\bigoplus_{\mathclap{\substack{d \in \text{supp}(\partial(f)_\gamma)\setminus \cmark}}}\partial\left(f\right)_\gamma(d)\cdot\ex(d);f^{(b)}\right)&\Ax{DF4}\\
		&\equiv (\babort+_{\bar\gamma} \babort) +_{\bar\gamma} \left(\odot {\partial(f)_\gamma(\cmark)}^*;\bigoplus_{\mathclap{\substack{d \in \text{supp}(\partial(f)_\gamma)\setminus \cmark}}}\partial\left(f\right)_\gamma(d)\cdot\ex(d);f^{(b)}\right)&\Ax{DF3}\\
		&\equiv \babort +_{\bar\gamma} \left(\odot {\partial(f)_\gamma(\cmark)}^*;\bigoplus_{\mathclap{\substack{d \in \text{supp}(\partial(f)_\gamma)\setminus \cmark}}}\partial\left(f\right)_\gamma(d)\cdot\ex(d);f^{(b)}\right)&\Ax{G1}\\
		&\equiv \left(\odot {\partial(f)_\gamma(\cmark)}^*;\bigoplus_{\mathclap{\substack{d \in \text{supp}(\partial(f)_\gamma)\setminus \cmark}}}\partial\left(f\right)_\gamma(d)\cdot\ex(d);f^{(b)}\right)+_\gamma \babort&\Ax{G3}\\
		&\equiv \gamma;\left(\odot {\partial(f)_\gamma(\cmark)}^*;\bigoplus_{\mathclap{\substack{d \in \text{supp}(\partial(f)_\gamma)\setminus \cmark}}}\partial\left(f\right)_\gamma(d)\cdot\ex(d);f^{(b)}\right)&\Ax{DF3}
	\end{align*}
	Separating the support gives
	\begin{align*}
		\gamma;f^{(b)} &\equiv \gamma;\biggl(\odot {\partial(f)_\gamma(\cmark)}^*;\biggl(\bigoplus_{d \in \{\xmark\} \cup \Out}\partial(f)_\gamma(d)\cdot\ex(d);f^{(b)} \\&\;\;\oplus \bigoplus_{d \in \text{supp}(\partial(f)_\gamma) \cap \Act \times \Exp}\partial(f)_\gamma(d)\cdot\ex(d);f^{(b)}\biggr)\biggr)\\
		&\equiv \gamma;\biggl(\bigoplus_{d \in \{\xmark\} \cup \Out}\partial(f)_\gamma(\cmark)^*\partial\left(f\right)_\gamma(d)\cdot\ex(d);f^{(b)} \\&\;\;\oplus \bigoplus_{d \in \text{supp}(\partial(f)_\gamma) \cap \Act \times \Exp}\partial(f)_\gamma(\cmark)^*\partial\left(f\right)_\gamma(d)\cdot\ex(d);f^{(b)}\biggr)&\Ax{DF1}\\
		&\equiv \gamma;\biggl(\bigoplus_{d \in \{\xmark\} \cup \Out}\partial(f)_\gamma(\cmark)^*\partial(f)_\gamma(d)\cdot\ex(d) \\&\;\;\oplus \bigoplus_{d \in \text{supp}(\partial(f)_\gamma) \cap \Act \times \Exp}\partial(f)_\gamma(\cmark)^*\partial(f)_\gamma(d)\cdot\ex(d);f^{(b)}\biggr)&\Ax{S6, S3}\\
		&\equiv \gamma;\left(\bigoplus_{d \in \text{supp}(\partial(f^{(b)})_\gamma)}\partial(f^{(b)})_\gamma(d)\cdot\ex(d) \right) \rlap{\qedhere} 
	\end{align*}
	\end{appendixproof}
	\subsection{Solutions to Systems of Equations}
	
	\subparagraph*{System of equations} We are now ready to characterize systems of equations and their solutions. We first introduce a three-sorted grammar to mimic the way that atoms index weighted choices in the fundamental theorem.
	\begin{align*}
		\Exp(X) &\coloneq \bigplus_{\alpha \in \At}w_\alpha & (\forall \alpha \in \At \;\; w_\alpha \in \WExp(X))\\
		\WExp(X) &\coloneq \bigoplus_{i \in I} r_i \cdot t_i &(t_i \in \TExp, r_i \in \mathbb S, I \text{ finite})\\
		\TExp(X) &\coloneq \textbf{f} \mid \textbf{g}x &(\textbf{f, g} \in \Exp, x \in X)
	\end{align*}
	Note that the above is well defined only if we take a congruence containing our axioms of both guarded and weighted choice, as the generalized version of guarded choice is defined up to skew-associativity and skew-commutativity. That is, an arbitrary order may be picked when selecting an element of $\Exp(X)$, but after substituting \wgkat expressions for indeterminates $x \in X$ it won't matter which order was selected. See \cref{FTapx} for details.
	\begin{defn}[\cite{rozowski2023probabilistic}]\label{salsys}
		A system of equations is a pair $(X, \tau: X \to \Exp(X))$ consisting of a finite set $X$ of indeterminates and a function $\tau$. If for all $x \in X, \alpha \in \At$ in each $\tau(x)$ the \textbf{g}x subterms satisfy $E(g)_\alpha = 0$, the system of equations is called \emph{Salomaa}.
	\end{defn}
	We now define the system of equations associated with a given \wgkat automaton.
	\begin{defn}
		Let $(X, \beta)$ be a \wgkat automaton. A system of equations associated with $(X, \beta)$ is a Salomaa system $(X, \tau)$, with $\tau: X \to \Exp(X)$ defined by
		 $$
			\tau(x) = \bigplus_{\alpha \in \At}\left(\bigoplus_{d \in \text{supp}(\beta(x))_\alpha} \beta(x)_\alpha(d)\cdot \text{sys}(d)\right)
		$$ 
		where $\text{sys}: 2+\Out+\Act\times\Exp \to \WExp(X)$ is given by
		 $$
			\text{sys}(\xmark) = \babort\;\;\text{sys}(\cmark)=\bskip\;\; \text{sys}(v) = v\;\;\text{sys}(p, x) = px
		$$
	\end{defn}
	\begin{exmp}
		The $\tau$ (up to $\equiv$) associated with the automaton from \cref{automatonexample} is
		\begin{align*}
			x_1 \mapsto& (\babort \WC{4}{3} (p_1x_2)) +_\alpha (v \WC{1}{1} ((p_2x_2)\WC{5}{2}(p_3x_3))+_\beta \babort) \\
			x_2 \mapsto& \bskip \WC{15}{0} \babort +_\alpha (\bskip \WC{15}{0} \babort +_\beta \babort)\\
			x_3 \mapsto& \bskip \WC{15}{0} \babort +_\alpha (\bskip \WC{15}{0} \babort +_\beta \babort)
		\end{align*} \label{systemofequations} \vspace{-.5cm}
	\end{exmp}
	\begin{defn}[\cite{rozowski2023probabilistic}]
		Given a function $h: X \to \Exp$ that assigns a value to each indeterminate in $X$ we derive a \wgkat expression $h^{\#}(e)$ for each $e \in \Exp(X)$ inductively:
		
		\begin{minipage}[t]{.45\textwidth}
			\begin{align}
				h^{\#}(f) &= f \label{HSharpDef1}\\
				h^{\#}(e+_b f) &= h^{\#}(e) +_b h^{\#}(f)\label{HSharpDefGuarded}
			\end{align}
		\end{minipage}
		\begin{minipage}[t]{.5\textwidth}
			\begin{align}
				h^{\#}(w_1 \WC{r}{s} w_2) &= h^{\#}(w_1) \WC{r}{s} h^{\#}(w_2)\label{HSharpDef2}\\
				h^{\#}(\textbf{g}x) &= \textbf{g};h(x)\label{HSharpDef3}
			\end{align}\label{HSharpDef}\vspace{-.5cm}
		\end{minipage}
	\end{defn}
	
	We now characterize a solution to the Salomaa system of equations. 
	\begin{defn}[\cite{rozowski2023probabilistic}]
		Let $R \subseteq \Exp \times \Exp$ be a congruence. A solution up to $R$ of a system $(X, \tau)$ is a map $h: X \to \Exp$ such that for all $x \in X$ that $(h(x), h^{\#}(\tau(x))) \in R$
	\end{defn}
	\begin{exmp}
		A solution up to $\equiv$ of \cref{systemofequations} would satisfy
		\begin{align*}
			h(x_1) &\equiv (\babort \WC{4}{3} p_1;h(x_2)) +_\alpha (v \WC{1}{1} (p_2;h(x_2)\WC{5}{2}p_3;h(x_3))+_\beta \babort) \\
			h(x_2) &\equiv \bskip \WC{15}{0} \babort +_\alpha (\bskip \WC{15}{0} \babort +_\beta \babort)\\
			h(x_3) &\equiv \bskip \WC{15}{0} \babort +_\alpha (\bskip \WC{15}{0} \babort +_\beta \babort)
		\end{align*}
		
		So in this case we can select 
		\begin{align*}
			h(x_2) = h(x_3) &= \odot 15\\
			h(x_1) &=  (\babort \WC{4}{3} (p_1;\odot 15)) +_\alpha (v \WC{1}{1} ((p_2;\odot 15)\WC{5}{2}(p_3;\odot 15))+_\beta \babort)
		\end{align*}
	\end{exmp}
	We now can show the connection between solutions to the Salomaa system of equations and automaton homomorphisms using the equivalence classes of states.
	\begin{toappendix}
		\begin{lem}[{{\cite[5.8]{Rutten2000UniversalCA}}}]\label{uniquemapbisim}
			Let $(X, \beta)$ be an $\mathcal H$-coalgebra, $R \subseteq X \times X$ a bisimulation equivalence on $(X, \beta)$ and $[-]_R:X \to X/R$ the canonical quotient map. Then, there exists a unique transition map $\bar\beta: X/R \times \At \to \Mon(2+\Out+\Act\times X/R)$ which makes $[-]_R$ into an $\mathcal{H}$-coalgebra homomorphism from $(X, \beta)$ to $(X/R, \bar\beta)$. \label{lem65}
		\end{lem}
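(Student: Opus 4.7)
The plan is to apply a standard diagonal argument for quotients of coalgebras by bisimulation equivalences. First I would define the candidate transition map $\bar\beta$ on equivalence classes by pushing forward through the quotient: for $x \in X$ and $\alpha \in \At$, set $\bar\beta([x]_R, \alpha) = \Mon(2+\Out+\Act\times[-]_R)(\beta(x)_\alpha)$. Concretely, for $o \in 2+\Out$ this reads $\bar\beta([x]_R, \alpha)(o) = \beta(x)_\alpha(o)$, and for $(p, Q) \in \Act\times (X/R)$ it reads $\bar\beta([x]_R, \alpha)(p, Q) = \beta(x)_\alpha[\{p\}\times [-]_R^{-1}(Q)]$, which is a finite sum because $\beta(x)_\alpha$ has finite support.

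The main obstacle is well-definedness: I need to show the definition does not depend on the choice of representative. Here I would use the assumption that $R$ is a bisimulation equivalence, so there exists $\rho \colon R \to \mathcal HR$ such that both projections $\pi_1, \pi_2 \colon R \to X$ are $\mathcal H$-coalgebra homomorphisms into $(X, \beta)$. Functoriality of $\mathcal H$ together with the homomorphism property yield
\begin{align*}
\mathcal H[-]_R \circ \beta \circ \pi_1 &= \mathcal H[-]_R \circ \mathcal H\pi_1 \circ \rho = \mathcal H([-]_R \circ \pi_1) \circ \rho,\\
\mathcal H[-]_R \circ \beta \circ \pi_2 &= \mathcal H([-]_R \circ \pi_2) \circ \rho.
\end{align*}
Since $(\pi_1(r), \pi_2(r)) \in R$ for every $r \in R$, the two maps $[-]_R \circ \pi_1$ and $[-]_R \circ \pi_2$ agree pointwise, so the two right-hand sides coincide. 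Instantiating at an arbitrary pair $(x, x') \in R$ gives $\mathcal H[-]_R(\beta(x)) = \mathcal H[-]_R(\beta(x'))$, which is exactly the required representative-independence.

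The homomorphism property $\mathcal H[-]_R \circ \beta = \bar\beta \circ [-]_R$ then holds by construction, since unfolding the definition of $\bar\beta$ recovers precisely the pushforward of $\beta$ along $[-]_R$. Uniqueness is immediate from the fact that $[-]_R$ is surjective and hence an epimorphism in the category of sets: any $\bar\beta'$ fitting into the same commuting square must satisfy $\bar\beta'([x]_R, \alpha) = \mathcal H[-]_R(\beta(x))(\alpha)$ for every $x$ and $\alpha$, and hence agree with $\bar\beta$ pointwise. The well-definedness step is the only one that requires real work, and it is purely formal once the bisimulation witness $\rho$ is unpacked.
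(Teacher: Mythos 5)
Your proof is correct: you push the transition structure forward along the quotient map, use the bisimulation witness $\rho$ and the homomorphism property of the projections to get representative-independence, and conclude uniqueness from surjectivity of $[-]_R$. The paper does not prove this lemma itself but cites it as Theorem 5.8 of Rutten's \emph{Universal Coalgebra}, and your argument is precisely the standard proof of that cited result, so there is nothing to compare beyond noting that you have supplied the proof the paper delegates to the reference.
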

	\end{toappendix}
	\begin{thmrep}
		Let $(X, \beta)$ be a finite \wgkat automaton. The map $h:X\to \Exp$ is a solution up to $\equiv$ of the system associated with $(X, \beta)$ if and only if $[-]_\equiv \circ h$ is a \wgkat automaton homomorphism from $(X, \beta)$ to $(\Exp/\equiv, \bar\partial)$. Where $\bar\partial$ is the unique transition function on $\Exp/\equiv$ which makes the quotient map $[-]_\equiv:\Exp \to \Exp/\equiv$ a \wgkat automaton homomorphism from $(\Exp, \partial)$ to $(\Exp/\equiv, \bar\partial)$.  \label{thm21}
	\end{thmrep}
	\begin{appendixproof}		
		This theorem and its proof are akin to \cite[Theorem 21]{rozowski2023probabilistic}. The theorem follows from \cref{uniquemapbisim}. Like \cref{SoundnessMorphism} we appeal to products over all $\alpha \in \At$ to construct $\mathcal H$, since $\At$ is finite.
		
		Let $\bar\partial: \Exp/\equiv \to \mathcal{H}(\Exp/\equiv)$ be the unique $\mathcal{H}$-coalgebra structure map from \cref{lem65} which makes the quotient map $[-]_\equiv:\Exp\to\Exp/\equiv$ into an $\mathcal H$-coalgebra homomorphism. Then given a function $h: X \to \Exp$, $[-]_\equiv \circ h$ is a \HCo homomorphism from $(X, \beta)$ to $(\Exp/\equiv, \bar\partial)$ if and only if the following diagram commutes.
		\begin{center} \begin{tikzcd}
			X && {\Exp/\equiv} \\
			\\
			{\mathcal{H}X} && {\mathcal{H}(\Exp/\equiv)}
			\arrow["{[-]_\equiv\circ h}"{description}, from=1-1, to=1-3]
			\arrow["\beta", from=1-1, to=3-1]
			\arrow["\bar\partial", from=1-3, to=3-3]
			\arrow["{\mathcal{H}([-]_\equiv\circ h)}"{description}, from=3-1, to=3-3]
		\end{tikzcd} \end{center}
		
		That is: $\bar\partial \circ [-]_\equiv \circ h = \mathcal{H}([-]_\equiv\circ h) \circ \beta$. Since $[-]_\equiv$ is a coalgebra homomorphism $\mathcal H[-]_\equiv \circ \partial = \bar\partial \circ [-]_\equiv$ and the previous statement holds if and only if $$\mathcal{H}[-]_\equiv \circ \partial \circ h = \mathcal{H}([-]_\equiv \circ h) \circ \beta$$ That is if for all $x \in X, \alpha \in \At, o \in 2+\Out, (a,[e^\prime]) \in \Act \times \Exp/\equiv$
		\begin{align}
			\beta(x)_\alpha(o) &= \partial(h(x))_\alpha(o) \label{cond1.21}\\
			\sum_{h(x^\prime) \equiv e^\prime}\beta(x)_\alpha(a, x^\prime) &= \sum_{f \equiv e^\prime}\partial(h(x))_\alpha(a,f)\label{cond2.21}
		\end{align}
		We start by proving the converse. Assume that $[-]_\equiv \circ h$ is an \HCo homomorphism. Let $(X, \tau)$ be the Salomaa system associated with $(X, \beta)$. To show $h$ is a solution we will show that $h(x)\equiv (h^{\#} \circ \tau)(x)$ for all $x \in X$. By \cref{fundamentaltheorem} it holds that for all $x \in X$ that
		\begin{align*}
			h(x) \equiv \bigplus_{\alpha\in\At}\left(\bigoplus_{d \in \text{supp}(\partial(h(x))_\alpha)} \partial(h(x))_\alpha(d)\cdot\exp(d)\right)
		\end{align*}
		We can unroll this and use \cref{lem64} to obtain for all $x \in X$
		\begin{align}
			h(x) = \bigplus_{\alpha \in \At}\left(\bigoplus_{d \in 2 + \Out} \partial(h(x))_\alpha(d)\cdot d \oplus \bigoplus_{(a, [e^\prime]_\equiv)\in E_\alpha} \left(\sum_{f \equiv e^\prime} \partial(h(x))_\alpha(a,f)\right)\cdot a;e\right) \label{exp1.21}
		\end{align}
		Where for each $\alpha \in \At$ $$E_\alpha = \{(a, Q) \in \Act \times \Exp/\equiv \mid \exists e^\prime\;\; e^\prime \in Q, \partial(h(x))_\alpha(a, e^\prime) \neq 0\}$$
		We can write the other side of the equation, for all $x \in X$
		$$(h^{\#}\circ \tau)(x) \equiv \bigplus_{\alpha \in \At}\left(\bigoplus_{d \in 2+\Out} \beta(x)_\alpha(d)\cdot d \oplus \bigoplus_{(a,f) \in \text{supp}(\beta(x)_\alpha)} \beta(x)_\alpha(a, f)\cdot a;h(f)\right)$$
		Applying \cref{lem64} we get 
		\begin{align}
			(h^{\#}\circ \tau)(x) \equiv \bigplus_{\alpha \in \At}\left(\bigoplus_{d \in 2+\Out} \beta(x)_\alpha(d)\cdot d \oplus \bigoplus_{[e]_\equiv \in F_\alpha}\left(\sum_{h(x^\prime) \equiv e\prime} \beta(x)_\alpha(a,x^\prime)\right)\cdot a;e^\prime\right) \label{exp2.21} 
		\end{align}
		Where for all $\alpha \in \At$
		$$F_\alpha = \{(a,Q) \in \Act\times \Exp/\equiv \mid\exists x^\prime\;\; h(x^\prime) \in Q, \beta(x)_\alpha(a, x^\prime)\neq0\}$$
		We argue that \cref{exp1.21} and \cref{exp2.21} are the same. We do this by showing that the indexed collections that form each sum are equivalent for any fixed $x \in X$ and $\alpha \in \At$.
		For any $\alpha \in \At$ given $d \in 2+\Out$ if $d \in \text{supp}(\partial(h(x)))_\alpha$, then because of \cref{cond1.21} $d \in \text{supp}(\beta(x)_\alpha)$. The converse holds by symmetry.  Furthermore, for all such $d$, again by condition \cref{cond1.21} $\beta(x)_\alpha(d) = \partial(h(x))_\alpha(d)$.
		Consider $(a, Q) \in \Act\times\Exp/\equiv$. If $(a, Q) \in E_\alpha$ then there is some $e^\prime \in \Exp$ such that $\partial(h(x))_\alpha(a,e^\prime) \neq 0$. Because of \cref{cond2.21} there exists some $x^\prime \in X$ satisfying $h(x^\prime) \equiv e^\prime$ and $\beta(x)_\alpha(a,x^\prime) \neq 0$. The converse holds by symmetry. Hence $F_\alpha = E_\alpha$. Similarly also by \cref{cond2.21} the weights associated with each $(a,Q) \in \Act\times \Exp/\equiv$ are the same.
		Therefore $h(x) = (h^{\#}\circ \tau)(x)$ for all $x \in X$
		To prove the other direction we assume $h$ is a solution up to $\equiv$. First we show that $(\partial\circ h^{(\#)}\circ \tau)(x) = (\mathcal{H}(h) \circ \beta)(x)$ for all $x \in X$. For all $x \in X$
		$$(h^{\#}\circ \tau)(x) \equiv \bigplus_{\alpha \in \At}\left(\bigoplus_{d \in 2+\Out} \beta(x)_\alpha(d)\cdot d \oplus \bigoplus_{(a,f) \in \text{supp}(\beta(x)_\alpha)} \beta(x)_\alpha(a, f)\cdot a;h(f)\right)$$
		When we apply the transition map, we can split into two cases for all $\alpha \in \At$.
		\begin{enumerate}
			\item For $o \in 2+\Out$, $(\partial_\alpha \circ h^{\#}\circ \tau)(x)(o) = \beta(x)_\alpha(o) = (\mathcal H (h) \circ \beta)(x)_\alpha(o)$
			\item For all $(a, e^\prime) \in \Act \times \Exp$, 
			$$(\partial_\alpha \circ h^{\#}\circ \tau)(x)(a,e^\prime) = \sum_{h(x^\prime) = e^\prime}\beta(x)_\alpha(a,x^\prime) = (\mathcal H (h) \circ \beta)(x)_\alpha(a,e^\prime)$$
		\end{enumerate}
		Hence $(\partial \circ h^{\#}\circ \tau) = (\mathcal H (h) \circ \beta)$. Composing with $\mathcal{H}[-]_\equiv$ gives
		\begin{align*}
			\mathcal{H}[-]_\equiv \circ \partial \circ h^{\#}\circ \tau &= \mathcal{H}[-]_\equiv \circ \mathcal{H}(h) \circ \beta\\
			\mathcal{H}[-]_\equiv \circ \partial \circ h^{\#}\circ \tau &= \mathcal{H}([-]_\equiv \circ h) \circ \beta & \text{Functor Associativity}\\
			\bar\partial \circ [-]_\equiv \circ h^{\#}\circ \tau &= \mathcal{H}([-]_\equiv \circ h) \circ \beta & [-]_\equiv \text{ is a homomorphism}\\
			\bar\partial \circ [-]_\equiv \circ h &= \mathcal{H}([-]_\equiv \circ h) \circ \beta & h^{\#}\circ \tau \equiv h
		\end{align*}
		Therefore $[-]_\equiv \circ h$ is an \HCo  homomorphism from $(X, \beta)$ to $(\Exp/\equiv \bar\partial)$.
	\end{appendixproof}

	\subsection{Bisimilarity implies Provable Equivalence} 
	Having characterized systems of equations and their solutions we now present the uniqueness axiom and establish completeness. Let $\dot\equiv \subseteq \Exp \times \Exp$ be the least congruence that contains $\equiv$, and satisfies the following axiom:
	\begin{align*}
		&\frac{(X, \tau) \text{ is a Salomaa system}\;\;f,g:X\to\Exp \text{ are solutions of }(X, \tau) \text{ up to } \dot\equiv}{f(x) \dot\equiv g(x)\; \text{for all } x\in X} \tag{\textsf{UA}}
	\end{align*}
	
	We establish the soundness of the axiom in the same way as before.
	\begin{toappendix}
		\begin{lem}
			The congruence $\dot\equiv$ is the kernel of a coalgebra homomorphism \label{dotKernel}
		\end{lem}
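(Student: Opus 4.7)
My plan is to mimic the proof of \cref{SoundnessMorphism} with $\equiv$ replaced by $\dot\equiv$. I would define the candidate transition map $\partial_{\dot\equiv} : \Exp/\dot\equiv \to \mathcal{H}(\Exp/\dot\equiv)$ on equivalence classes by $\partial_{\dot\equiv}([e]_{\dot\equiv}) \coloneq \mathcal{H}[-]_{\dot\equiv}(\partial(e))$. Appealing to the same characterization of well-definedness through surjective kernels used in \cref{SoundnessMorphism}, the task reduces to proving that $e \dot\equiv f$ implies $\mathcal{H}[-]_{\dot\equiv}(\partial(e)) = \mathcal{H}[-]_{\dot\equiv}(\partial(f))$. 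I would proceed by induction on the depth of the derivation of $e \dot\equiv f$.

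The congruence-closure cases and all the cases corresponding to the axioms from \cref{axiomsFig} reduce immediately to \cref{SoundnessMorphism}. Indeed, $[-]_{\dot\equiv}$ factors through $[-]_\equiv$ via the canonical projection $\pi : \Exp/\equiv \twoheadrightarrow \Exp/\dot\equiv$ induced by the inclusion $\equiv \subseteq \dot\equiv$, so any equation $\mathcal{H}[-]_\equiv(\partial(e)) = \mathcal{H}[-]_\equiv(\partial(f))$ already established in \cref{SoundnessMorphism} lifts through $\mathcal{H}\pi$ to the required equation in $\mathcal{H}(\Exp/\dot\equiv)$.

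The genuinely new case is \textsf{UA}. Suppose $f, g : X \to \Exp$ are both solutions up to $\dot\equiv$ of a Salomaa system $(X, \tau)$, so the rule concludes $f(x) \dot\equiv g(x)$ for each $x \in X$. The approach I would take is to first establish an analogue of \cref{thm21} for $\dot\equiv$: using the partially constructed $\partial_{\dot\equiv}$ made available by the inductive hypothesis on all strictly shorter derivations, the same fundamental-theorem manipulations used in the proof of \cref{thm21} show that both $[-]_{\dot\equiv} \circ f$ and $[-]_{\dot\equiv} \circ g$ are \wgkat automaton homomorphisms from the automaton $(X, \beta)$ associated with $(X, \tau)$ into $(\Exp/\dot\equiv, \partial_{\dot\equiv})$. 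Composing either with the unique arrow from $(\Exp/\dot\equiv, \partial_{\dot\equiv})$ into the final $\mathcal{H}$-coalgebra forces the two homomorphisms to agree on the nose, which, unpacked, yields the required equality of $\mathcal{H}[-]_{\dot\equiv}(\partial(f(x)))$ and $\mathcal{H}[-]_{\dot\equiv}(\partial(g(x)))$.

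The main obstacle is the apparent circular dependency between the target map $\partial_{\dot\equiv}$ and the \textsf{UA} step that references it. I expect this to be resolvable because the Salomaa side condition $E(\textbf{g})_\alpha = 0$ guarantees that any next-step continuation produced by a solution is productive, so each individual $\dot\equiv$-equivalence needed to invoke the \cref{thm21} analogue appears at strictly lower derivation depth. The subtlest bookkeeping step will be structuring the induction so that it advances simultaneously over the whole family $\{f(x) \dot\equiv g(x)\}_{x \in X}$ produced by a single application of \textsf{UA}, rather than pointwise, so that the use of the final-coalgebra universal property is well-founded.
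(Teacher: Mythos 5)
Your overall architecture --- define $\partial_{\dot\equiv}$ on the quotient, reduce well-definedness to the kernel inclusion $\ker([-]_{\dot\equiv})\subseteq\ker(\mathcal{H}[-]_{\dot\equiv}\circ\partial)$, induct on derivations, inherit the old cases, and isolate \textsf{UA} as the only new case --- is exactly the paper's. (One small caveat: the congruence-closure cases do not literally factor through \cref{SoundnessMorphism} via the projection $\pi$, because the component pairs may lie in $\dot\equiv$ without lying in $\equiv$; those cases must be rerun with the $\dot\equiv$-inductive hypothesis, though the computations are verbatim the same. The axiom cases do lift as you describe.)

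The genuine gap is in your treatment of \textsf{UA}. First, finality does not give what you claim: if $[-]_{\dot\equiv}\circ f$ and $[-]_{\dot\equiv}\circ g$ are both homomorphisms into $(\Exp/\dot\equiv,\partial_{\dot\equiv})$, composing with the unique arrow to the final coalgebra shows only that the two \emph{composites} agree; it does not force the homomorphisms themselves to ``agree on the nose'' unless the quotient coalgebra is simple, which you have not established. Worse, agreement of $[-]_{\dot\equiv}\circ f$ and $[-]_{\dot\equiv}\circ g$ is already immediate from the rule's conclusion $f(x)\dot\equiv g(x)$ and is not the goal; what you must produce is the one-step identity $\mathcal{H}[-]_{\dot\equiv}(\partial(f(x)))=\mathcal{H}[-]_{\dot\equiv}(\partial(g(x)))$. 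Second, routing through an analogue of \cref{thm21} presupposes the coalgebra $(\Exp/\dot\equiv,\partial_{\dot\equiv})$ whose existence is the very statement being proved, and your ``partially constructed $\partial_{\dot\equiv}$'' is not a worked-out induction. The paper avoids both problems by never leaving $\Exp$: the inductive hypothesis applied to the premises $f(x)\dot\equiv f^{\#}(\tau(x))$ and $g(x)\dot\equiv g^{\#}(\tau(x))$ reduces the goal to comparing $\mathcal{H}[-]_{\dot\equiv}(\partial(f^{\#}(\tau(x))))$ with $\mathcal{H}[-]_{\dot\equiv}(\partial(g^{\#}(\tau(x))))$ by direct expansion. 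In $\partial(\textbf{g};f(x_i))_\alpha$ the only summand involving the derivative of $f(x_i)$ is $\partial(\textbf{g})_\alpha(\cmark)\cdot\partial(f(x_i))_\alpha$, which the Salomaa condition $E(\textbf{g})_\alpha=0$ annihilates; every surviving term depends on $f(x_i)$ only through the class $[e';f(x_i)]_{\dot\equiv}=[e';g(x_i)]_{\dot\equiv}$, and these coincide because $\dot\equiv$ is a congruence containing the pairs $(f(x_i),g(x_i))$. You correctly identified the Salomaa side condition as the source of well-foundedness, but your proposal never actually deploys it; the final-coalgebra detour should be replaced by this direct computation.
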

		\begin{proof}
			Like in the previous proof of soundness we need to show the inclusion of the kernel. Specifically we need to show that $e \dot\equiv f \implies \mathcal{H}([-]_{\dot\equiv}) \circ \partial(e) = \mathcal{H}([-]_{\dot\equiv})\circ \partial(f)$. This proof by induction is identical to \cref{SoundnessMorphism} except for one new case.
			Let $f,g$ be solutions up to $\dot\equiv$ of the Salomaa system $(X, \tau)$. Then it is true that $\forall x \in X \;\; f(x) \dot\equiv g(x)$.
			Furthermore, by definition of being a solution $\forall x \in X\;\; f(x) = f^{\#}(\tau(x))$  (with symmetry for $g$) hence $f = f^{\#} \circ \tau$, and clearly
			\begin{align*}
				\mathcal{H}([-]_{\dot\equiv}) \circ \partial \circ f &= \mathcal{H}([-]_{\dot\equiv}) \circ \partial \circ f^{\#} \circ \tau\\
				\mathcal{H}([-]_{\dot\equiv}) \circ \partial \circ g &= \mathcal{H}([-]_{\dot\equiv}) \circ \partial \circ g^{\#} \circ \tau
			\end{align*}
			From this we need to show
			\begin{align*}
				\mathcal{H}([-]_{\dot\equiv}) \circ \partial \circ f = \mathcal{H}([-]_{\dot\equiv}) \circ \partial \circ g
			\end{align*}
			Which we will do by showing that $\forall \alpha \in \At$ $$W[-]_{\dot\equiv} \circ \partial_\alpha \circ f = W[-]_{\dot\equiv} \circ \partial_\alpha \circ g$$
			Let $(X, \tau)$ be a Salomaa system. $\forall \alpha \in \At, x \in X$ 
			\begin{align*}
				&(W[-]_{\dot\equiv} \circ \partial_\alpha \circ f)(x) = \left(W[-]_{\dot\equiv} \circ \partial_\alpha \circ f^\#\circ \tau\right)(x)\\
				&=(W[-]_{\dot\equiv} \circ \partial_\alpha \circ f^\#)\left(\bigplus_{\alpha \in \At}\bigoplus_{i\in I_\alpha} r_i\cdot \begin{cases}\textbf{g}x_{i}\\\textbf{f}\end{cases}\right)&\\
				&=(W[-]_{\dot\equiv} \circ \partial_\alpha)\left(\bigplus_{\alpha \in \At}\bigoplus_{i\in I_\alpha} r_i\cdot f^{\#}\begin{cases}\textbf{g}x_{i}\\\textbf{f}\end{cases}\right)&\\
				&=(W[-]_{\dot\equiv} \circ \partial_\alpha)\left(\bigplus_{\alpha \in \At}\bigoplus_{i\in I_\alpha} r_i\cdot \begin{cases}\textbf{g};f(x_i)\\\textbf{f}\end{cases}\right)&\\
				&=(W[-]_{\dot\equiv})\left(\sum_{i \in I_\alpha} r_i\cdot \partial_\alpha\begin{cases}\textbf{g};f(x_i)\\\textbf{f}\end{cases}\right)&
			\end{align*}
			In the first case
			\begin{align*}
				&(W[-]_{\dot\equiv})\left(\sum_{i \in I_\alpha} r_i\cdot \partial(\textbf{g};f(x_i))_\alpha\right)&\\
				&=W[-]_{\dot\equiv}
				r_i\sum_{i \in I_\alpha}\bigg( \partial(g)_\alpha(\cmark)\partial(f(x_i))_\alpha \\&\;\;
				+ \sum_{o \in \{\xmark\} + \Out}\partial(\textbf{g})_\alpha(o)\delta_o 
				 + \sum_{(p, e^\prime) \in \Act\times\Exp} \partial(\textbf{g})_\alpha(p, e^\prime)\delta_{p, e^\prime;f(x_i)}\bigg)&\\
				&=r_i\sum_{i \in I_\alpha}\bigg( \partial(g)_\alpha(\cmark)W[-]_{\dot\equiv}\partial(f(x_i))_\alpha \\&\;\;+ \sum_{o \in \{\xmark\} + \Out}\partial(\textbf{g})_\alpha(o)W[-]_{\dot\equiv}\delta_o \\&\;\;+ \sum_{(p, e^\prime) \in \Act\times\Exp} \partial(\textbf{g})_\alpha(p, e^\prime)W[-]_{\dot\equiv}\delta_{p, e^\prime;f(x_i)}\bigg)&\\
				&=r_i\sum_{i \in I_\alpha} \bigg(\sum_{o \in \{\xmark\} + \Out}\partial(\textbf{g})_\alpha(o)W[-]_{\dot\equiv}\delta_o \\&\;\;+ \sum_{(p, e^\prime) \in \Act\times\Exp} \partial(\textbf{g})_\alpha(p, e^\prime)W[-]_{\dot\equiv}\delta_{p, e^\prime;f(x_i)}\bigg)&\cref{salsys}\\
				&=r_i\sum_{i \in I_\alpha} \bigg(\sum_{o \in \{\xmark\} + \Out}\partial(\textbf{g})_\alpha(o)W[-]_{\dot\equiv}\delta_o \\&\;\;+ \sum_{(p, e^\prime) \in \Act\times\Exp} \partial(\textbf{g})_\alpha(p, e^\prime)W[-]_{\dot\equiv}\delta_{p, e^\prime;g(x_i)}\bigg)&\\
				&=(W[-]_{\dot\equiv})\left(\sum_{i \in I_\alpha} r_i\cdot \partial(\textbf{g};g(x_i))\right)& \text{symmetry}
			\end{align*}
			Hence
			\begin{align*}
				&(W[-]_{\dot\equiv})\left(\sum_{i \in I_\alpha} r_i\cdot \partial\begin{cases}\textbf{g};f(x_i)\\\textbf{f}\end{cases}\right)\\
				&=(W[-]_{\dot\equiv})\left(\sum_{i \in I_\alpha} r_i\cdot \partial\begin{cases}\textbf{g};g(x_i)\\\textbf{f}\end{cases}\right)\\
				&=(W[-]_{\dot\equiv} \circ \partial_\alpha \circ g)(x) & \text{symmetry}
			\end{align*}
	
			Hence $\forall \alpha \in \At\; \forall x \in X\;\; (W[-]_{\dot\equiv}\circ \partial_\alpha \circ f) x = (W[-]_{\dot\equiv}\circ \partial_\alpha \circ g) x$ hence $\forall \alpha \in \At \;\; W[-]_{\dot\equiv}\circ \partial_{\alpha} \circ f = W[-]_{\dot\equiv}\circ \partial_{\alpha} \circ g$, which implies that $\prod_{\alpha\in\At} W[-]_{\dot\equiv}\circ \partial_{\alpha} \circ f = \prod_{\alpha\in\At}W[-]_{\dot\equiv}\circ \partial_{\alpha} \circ g$ which of course means that 
			$$\mathcal{H}([-]_{\dot\equiv})\circ \partial \circ f = \mathcal{H}([-]_{\dot\equiv})\circ \partial \circ g\qedhere$$
		\end{proof}
	\end{toappendix}
	\begin{toappendix}
		\begin{cor} \label{behEquivUA}
			For $e,f \in \Exp$, $e \dot\equiv f \implies !_\partial e=!_\partial f$
		\end{cor}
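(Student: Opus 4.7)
The plan is to mirror the strategy used earlier for \cref{equivimpliesbeh}, which handled the analogous statement for $\equiv$. The previous lemma (\cref{dotKernel}) establishes that $\dot\equiv$ is the kernel of a coalgebra homomorphism; concretely, it shows that $\mathcal{H}([-]_{\dot\equiv}) \circ \partial$ factors through the quotient map $[-]_{\dot\equiv} \colon \Exp \to \Exp/{\dot\equiv}$. This is exactly the ingredient needed to lift the transition structure $\partial$ to a coalgebra structure $\partial_{\dot\equiv} \colon \Exp/{\dot\equiv} \to \mathcal{H}(\Exp/{\dot\equiv})$ on the quotient.

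First, I would invoke \cite[Lemma 3.17]{gumm1999elements} (as in the proof of \cref{equivimpliesbeh}) to obtain the unique coalgebra structure $\partial_{\dot\equiv}$ on $\Exp/{\dot\equiv}$ making the canonical quotient map $[-]_{\dot\equiv}$ a coalgebra homomorphism from $(\Exp, \partial)$ to $(\Exp/{\dot\equiv}, \partial_{\dot\equiv})$. The well-definedness of $\partial_{\dot\equiv}$ is precisely the inclusion $\ker([-]_{\dot\equiv}) \subseteq \ker(\mathcal{H}[-]_{\dot\equiv} \circ \partial)$ established by \cref{dotKernel}.

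Next, by finality of $(Z, \zeta)$, there exists a unique coalgebra homomorphism $!_{\dot\equiv} \colon (\Exp/{\dot\equiv}, \partial_{\dot\equiv}) \to (Z, \zeta)$. The composite $!_{\dot\equiv} \circ [-]_{\dot\equiv}$ is then a coalgebra homomorphism from $(\Exp, \partial)$ to $(Z, \zeta)$, and by the uniqueness clause of finality it must coincide with $!_\partial$. Consequently, if $e \dot\equiv f$, then $[e]_{\dot\equiv} = [f]_{\dot\equiv}$, and applying $!_{\dot\equiv}$ gives $!_\partial(e) = (!_{\dot\equiv} \circ [-]_{\dot\equiv})(e) = (!_{\dot\equiv} \circ [-]_{\dot\equiv})(f) = !_\partial(f)$, as desired.

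There is no substantive obstacle here: the whole content of the corollary sits in \cref{dotKernel}, which already did the induction over the congruence rules (including the new \textsf{UA} case). The corollary itself is a routine diagram-chase via the universal property of the final coalgebra, entirely parallel to the argument used to derive \cref{equivimpliesbeh} from \cref{SoundnessMorphism}.
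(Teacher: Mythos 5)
Your proposal is correct and follows exactly the paper's route: the paper proves this corollary by citing \cref{dotKernel} and repeating the argument of \cref{equivimpliesbeh}, i.e., lifting $\partial$ to the quotient $\Exp/{\dot\equiv}$ via \cite[Lemma 3.17]{gumm1999elements} and then composing with the unique map into the final coalgebra. No differences worth noting.
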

		\begin{proof}
			Follows from \cref{dotKernel} with the same argument as \cref{equivimpliesbeh}
		\end{proof}
	\end{toappendix}
	
	\begin{thmrep}[Soundess of $\dot\equiv$]
		For all $e,f \in \Exp,\;\; e \dot\equiv f \implies e \sim f$
	\end{thmrep}
	\begin{appendixproof}
		Immediate from \cref{behEquivUA} and \cref{bisimIffBeh} 
	\end{appendixproof}
	
	Finally, we can prove completeness using the uniqueness axiom.
	
	\begin{thmrep}[Completeness]
		For all $e,f \in \Exp,\;\; e \sim f\implies e \dot\equiv f$
	\end{thmrep}
	\begin{appendixproof}
		This fact now follows for the same reason as \probgkat \cite{rozowski2023probabilistic}.
		There exists some bisimulation $(R, \delta)$ on $\langle e\rangle_\partial \times \langle f \rangle_\partial$ which witnesses their equivalence. Clearly $R$ is finite as $\langle e\rangle_\partial$ and $\langle f\rangle_\partial$ are both finite due to \ref{lem7}. Let $\pi_1, \pi_2$ be the projection mappings from $R$ to $(\langle e\rangle_\partial, \partial)$ and $(\langle f\rangle_\partial, \partial)$. Let $j, k$ be the inclusion maps from $(\langle e\rangle_\partial, \partial)$ and $(\langle f\rangle_\partial, \partial)$ into $(\Exp, \partial)$. By \cref{thm21} $j \circ \pi_1$, and $k \circ \pi_2$ are solutions of the Salomaa system associated with $(R, \delta)$ up to $\equiv$, since they may be composed with $[-]_\equiv$. They are also solutions up to $\dot\equiv$ as $\equiv \subseteq \dot\equiv$. Hence by \Ax{UA} $\forall (g,h) \in R \;\;(j\circ \pi_1)(g,h) \dot\equiv (k \circ \pi_2)(g,h)$. So it follows that
		$$e \dot\equiv j(e)\dot\equiv (j\circ \pi_1)(e,f) \dot\equiv (k\circ \pi_2)(e,f) \dot\equiv k(f) \dot\equiv f\qedhere$$
	\end{appendixproof}
	\section{Decidability and Complexity}\label{decisionS}
	In practice, one often cares about a mechanistic way to check program equivalence. We show that bisimulation equivalence for \wgkat expressions is indeed decidable, and we offer an efficient procedure to decide it. This sets our language apart from other weighted \kat variants, in particular \kawt, which does not offer a decision procedure for equivalence and is likely not decidable due to using trace equivalence. The key insight is that bisimulation equivalence lets us appeal to known theory of monoid-weighted transition systems and achieve a decision procedure which, remarkably, is polynomial time.
	
	Similar to \probgkat \cite{rozowski2023probabilistic}, we rely on a \emph{coalgebraic} generalization \cite{10.1007/978-3-030-30942-8_18} of classic partition refinement \cite{KanPartRefine, partrefine} to minimize the automaton under bisimilarity which essentially computes the largest bisimulation.  
	Coalgebraic partition refinement offers a procedure to minimize any coalgebra that fits a grammar of supported functors and polynomial constructors \cite{10.1007/978-3-030-30942-8_18}. As mentioned before, \wgkat can be viewed as coalgebras. So in order to establish the decidability of bisimulation equivalence for \wgkat automata we show how to express the equivalent coalgebra in the grammar which can be decided by coalgebraic partition refinement. However, in order to determine a bound on the runtime of this procedure we must bound some of the internal operations of the coalgebraic partition refinement algorithm. 
	
	To bound the runtime of the coalgebraic partition refinement algorithm we encode our coalgebra in a format with a uniform label set, and then bound the comparison of labels, the \textbf{init} step, and the \textbf{update} step \cite{10.1007/978-3-030-30942-8_18}. The \textbf{init} step essentially takes the set of labels on edges from a state and computes the total weight of those edges. The \textbf{update} step basically uses a state's labels into a set of states $S$, and the weight of the same state's edges into another set of states $B$ and computes the weight of edges into $S$ and $C/S$, as well as computing transitions into blocks of related states \cite{10.1007/978-3-030-30942-8_18, dorsch2017efficientcoalgebraicpartitionrefinement}. Once we bound these operations we can leverage a complexity result \cite[Theorem 3.4]{10.1007/978-3-030-30942-8_18} of coalgebraic partition refinement to achieve the desired runtime bound. The details of both of these arguments are in \cref{DecideAppendix}.
	
	\begin{toappendix} 
		\label{DecideAppendix}
		As with \gkat and \probgkat we fix the set of atoms beforehand to avoid being able to encode boolean unsatisfiability.
	\begin{thm}[Decidability]
		$e\dot\equiv f$ is decidable
	\end{thm}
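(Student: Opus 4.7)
The plan is to combine the completeness theorem with coalgebraic partition refinement on a suitable finite model of the candidate expressions. By the soundness and completeness results established above, $e \dot\equiv f$ holds if and only if $e$ and $f$ are bisimilar as states of the \wgkat automaton $(\Exp, \partial)$, so the decision problem reduces to checking bisimilarity of two states in a reachable sub-automaton.

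First I would construct the finite sub-automaton with carrier $\langle e \rangle_\partial \cup \langle f \rangle_\partial$. By \cref{lem7}, the reachable set from any expression is finite, bounded by $\#(e)+\#(f)$, so this sub-automaton can be effectively generated by iterating $\partial$ starting from $\{e,f\}$ and reading off the support of each weighting. Second, I would view this finite automaton as an $\mathcal{H}$-coalgebra for $\mathcal{H} = \Mon(2 + \Out + \Act \times \Id)^{\At}$, and feed it into the generic coalgebraic partition refinement framework. Because $\mathcal{H}$ is built out of the monoid-valued weighting functor $\Mon$, a fixed finite power indexed by $\At$, and polynomial constructors (constants, coproducts, and products with $\Act$), it fits the grammar of functors for which the coalgebraic partition refinement algorithm is defined. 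The algorithm then outputs the coarsest bisimulation as a partition on the state set, and $e \dot\equiv f$ is decided by checking whether $e$ and $f$ fall in the same block.

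The main obstacle, and the part that the surrounding paragraph signals will be worked out in the appendix, is fitting $\mathcal{H}$ into the concrete input format expected by coalgebraic partition refinement: encoding transitions with a uniform label set on which semiring operations are available, and checking that the primitive operations used by the algorithm, namely label comparison, \textbf{init}, and \textbf{update}, are computable. Provided the underlying semiring has decidable equality and computable addition and multiplication, these reduce to finitely many semiring operations per state, giving termination and hence decidability. Bounding the cost of each such primitive operation then also yields the polynomial complexity advertised in the introduction.
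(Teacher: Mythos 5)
Your proposal is correct and follows essentially the same route as the paper: reduce $e \dot\equiv f$ to bisimilarity of $e$ and $f$ (via soundness and completeness) in the finite reachable sub-automaton guaranteed by the bound $\#(e)+\#(f)$, then observe that $\mathcal{H}$ fits the functor grammar supported by coalgebraic partition refinement. Your explicit caveat that the semiring must have decidable equality and computable operations is a point the paper leaves implicit, but it does not change the argument.
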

	\begin{proof}
		$\mathcal H$ is expressible using the polynomial functor and monoid weighting functor. Hence it conforms to the grammar of functors for which we know bisimulation equivalence in their associated coalgebras is decidable via coalgebraic partition refinement \cite{10.1007/978-3-030-30942-8_18}. 
	\end{proof}

		In order to characterize the complexity of deciding bisimilarity, we examine some of the internals of the partition refinement algorithm by defining some mappings. The runtime of coalgebraic partition refinement on a given coalgebra is determined by the runtime of two mappings \textbf{init} and \textbf{update}, as well as how much time it takes to compare weights. In order to characterize the required mappings and weights we need to determine a refinement interface for $\mathcal{H}$. This is done by encoding the functor and its associated coalgebra \cite[Section 8]{wissmann2020efficient}. 
		
		\begin{defn}[\cite{10.1007/978-3-030-30942-8_18}]The encoding of the functor $F$ is $(A, \flat)$, where $A$ is a set of labels and $\flat$ is is a family of maps $\flat:FX \to \mathcal{B}_\omega (A \times X)$, one for every set $X$.
		\end{defn}
		$\flat$ can be interpreted as creating the labeled edges from the codomain of the transition function. 
		For each $X$ let $!$ be the unique map $X\to 1$, we let $1$ be a fixed singleton set $\{*\}$ \cite{10.1007/978-3-030-30942-8_18}. Furthermore:
		
		\begin{defn}[\cite{10.1007/978-3-030-30942-8_18}]The encoding of the $F$-coalgebra $(X, \psi)$ is given by $\langle F!,\flat\rangle\cdot \psi: X \to F1 \times \mathcal B_\omega(A \times X)$.
		\end{defn}
		
		We say that the coalgebra has $|X|$ states, and $\sum_{x\in X} |\flat(c(x))|$ edges \cite{10.1007/978-3-030-30942-8_18}. The refinement interface will be defined in reference to the following mappings which essentially sort elements of the coalgebra according to their relationship to the subsets $S$ and $B$. 
		
		\begin{defn}[\cite{10.1007/978-3-030-30942-8_18}]
			If $S\subseteq B \subseteq X$ then let $\chi^B_S:X \to 3$ such that
	$$			\chi^B_S(x \in S) = 2 \;\;\;\;
				\chi^B_S(x \in B\setminus S) = 1 \;\;\;\;
				\chi^B_S(x \in X \setminus B) = 0$$
		\end{defn}
		\begin{defn}[Refinement Interface \cite{10.1007/978-3-030-30942-8_18}]
			Given an encoding $(A, \flat)$ of the set functor $F$, a refinement interface for $F$ is the triple $(W, \textbf{init}, \textbf{update})$. Where $W$ is a set of weights, $\textbf{init}:F1 \times \mathcal{B}_\omega A \to W$, $\textbf{update}:\mathcal{B}_\omega A \times W \to W \times F3 \times W$ such that there exists a family of weight maps $w: \mathcal{P}X \to (FX \to W)$ such that for all $t \in FX$, and $S\subseteq B \subseteq X$
			\begin{align*}
				w(X)(t) &= \textbf{init}(F!(t), \mathcal{B}_\omega\pi_1(\flat(t)))\\
				(w(S)(t), F\chi_S^B(t), w(B\setminus S)(t)) &= \textbf{update}(\{a \mid (a, x) \in \flat(t), x \in S\}, w(B)(t))
			\end{align*}
		\end{defn}
		The maps $w$ can be thought of as tracking the weight into a set of states from the chosen element of the coalgebra. For example $w(B)(t)$ for $B \subseteq X, t \in FX$ is the weight into $B$ from $t$. 
		
		Since $\mathcal H$ is a composite functor we can obtain the refinement interface by flattening and desorting $\mathcal H$ \cite[Section 8]{wissmann2020efficient} to obtain a set functor which is a coproduct of basic functors in the provided grammar. By doing this, the refinement interface can be obtained by piece-wise combining the refinement interfaces of the basic functors. 
		
		In our case $\mathcal H$ flattens into three polynomial functors and a monoid weighting functor. Consider the \HCo $(Q,\tau)$. We first flatten $\mathcal H$ to the multisorted $Set^4$ functor
		$$\bar H(X_1, X_2, X_3, X_4) = (X_2^\At, \Mon^{X_3}, 2+\Out+X_4, \Act\times X_1)$$ We desort this multisorted coalgebra into a coalgebra for $\bar X =X_1 + X_2 + X_3 +X_4$ and coproduct the resulting refinement interfaces to obtain a refinement interface for the desorted coalgebra \cite[Section 8]{wissmann2020efficient}. Each of the basic subfunctors has an existing refinement interface \cite{wissmann2020efficient, 10.1007/978-3-030-30942-8_18}. So we simply count states and transitions. There are $|\At||Q|, |\At||Q|, |\Act||Q|, |Q|$ states in $X_2, X_3, X_4$ and $X_1$ respectively. There are $|Q||\At|, |Q||\At|, k, |\Act||Q|$ edges from $X_1 \to X_2, X_2 \to X_3, X_3 \to X_4, \text{ and } X_4 \to X_1$  respectively. Where $$k = \sum_{q \in Q, \alpha \in \At, o \in 2+\Out+\Act \times Q} [\tau(q)_\alpha(o) \neq 0]$$
		
		In order to know the complexity we must combine the refinement interfaces as per \cite[8.18]{wissmann2020efficient} which allows us to bound the runtime of \textbf{init} and \textbf{update}, as well as comparisons of terms of type $W$ and then apply \cite[Theorem 3.4]{10.1007/978-3-030-30942-8_18}. The labels and weights are the coproducts of the labels and weights of each subfunctor \cite[8.18]{wissmann2020efficient}. Furthermore, the maps $\flat$ and $w$ are component-wise functions build from the original $\flat_i, w_i$ maps from the subfunctors \cite[8.18]{wissmann2020efficient}. The same is true for \textbf{init} and \textbf{update} \cite[8.18]{wissmann2020efficient}. So we simply bound the runtimes of \textbf{init} and \textbf{update} by bounding the worst-case refinement interface of the underlying functions.
		
		\begin{thm}
			If $\At$ is fixed, then the bisimilarity of states in a \wgkat automaton $(Q, \tau)$ is decidable in time $O((|Q|\cdot|\Out|+|\Act|\cdot|Q|^2)(\log^2(|\Out|+|\Act|\cdot|Q|)))$  \label{runtimeThm}
		\end{thm}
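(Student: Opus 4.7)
The plan is to instantiate the general complexity result for coalgebraic partition refinement, namely \cite[Theorem 3.4]{10.1007/978-3-030-30942-8_18}, using the flattened/desorted encoding of $\mathcal{H}$ via $\bar H$ already set up in the discussion above. Since $\At$ is fixed, the outer exponentiation has constant arity, and the argument splits into three natural steps: counting the states and edges of the desorted coalgebra associated with $(Q, \tau)$, bounding the cost of \textbf{init}, \textbf{update}, and weight comparison on the combined refinement interface, and substituting these quantities into the general bound.

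First I would count states and edges. The sorts $X_1$, $X_2$, $X_3$ contribute at most $|Q|$, $|\At|\cdot|Q|$, and $|\At|\cdot|Q|$ elements respectively, which all collapse to $O(|Q|)$ once $\At$ is fixed. The sort $X_4$ is bounded by the total support of the weightings $\tau(q)_\alpha$, yielding at most $|\At|\cdot|Q|\cdot(2+|\Out|+|\Act|\cdot|Q|) = O(|Q|\cdot|\Out| + |\Act|\cdot|Q|^2)$ elements. The edge count is dominated by the edges emitted through $\Mon$, giving the same $O(|Q|\cdot|\Out| + |\Act|\cdot|Q|^2)$ bound.

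Next I would assemble the refinement interface by coproducting the four already-known interfaces for the basic building blocks of $\bar H$: the fixed-arity exponential $(-)^{\At}$, the monoid weighting $\Mon(-)$, the polynomial $2+\Out+(-)$, and the product $\Act\times(-)$, following the composition and coproduct constructions of \cite[Section 8]{wissmann2020efficient}. Each of these individual interfaces executes \textbf{init} and \textbf{update} in time linear in the size of the label multiset it receives, and the combined interface inherits this property. Comparison of elements of the combined weight type reduces to constant-time tag inspection plus comparisons in $\mathbb S$.

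Finally I would plug into \cite[Theorem 3.4]{10.1007/978-3-030-30942-8_18}, which yields a runtime of $O((n+m)\cdot \log n \cdot p(n))$, where $n$ and $m$ are the state and edge counts above and $p(n)$ absorbs a logarithmic overhead from register operations performed inside each refinement step. Substituting $n, m = O(|Q|\cdot|\Out| + |\Act|\cdot|Q|^2)$ produces exactly the claimed bound. I expect the main obstacle to be the careful bookkeeping needed to assemble the four refinement interfaces while tracking precisely how weight comparisons behave in the coproduct of weight monoids, since any hidden superlogarithmic cost there would break the final $\log^2$ factor; everything else is a mechanical application of the general machinery.
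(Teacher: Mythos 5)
Your proposal follows essentially the same route as the paper: flatten/desort $\mathcal{H}$ into the four-sorted functor, count states and edges (with total $O(|Q|\cdot|\Out|+|\Act|\cdot|Q|^2)$ dominated by the support of the weightings), coproduct the refinement interfaces of the basic subfunctors, and instantiate \cite[Theorem 3.4]{10.1007/978-3-030-30942-8_18}. The only cosmetic difference is the provenance of the second logarithmic factor, which the paper attributes to the $p(n,m)=O(\log m)$ cost of \textbf{update} and weight comparison for a \emph{general} (non-cancellative) monoid-weighted functor rather than to generic register operations; this does not affect the final bound.
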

		\begin{proof}
			The bound for comparison of elements of type $W$, and the runtime of \textbf{init} and \textbf{update} is dominated by the case of the monoid weighted functor. This is because the comparison of types, and the computation on \textbf{init} and \textbf{update} for the (bounded) polynomial functor runs in constant time \cite[Example 6.11]{wissmann2020efficient}. So we will bound by this worst case. Let $n$ and $m$ be the number of states and transitions in our flattened coalgebra.
			
			We have $p(n,m) = O(\log m)$ for a general monoid \cite[Proof of Corollary 5.5]{10.1007/978-3-030-30942-8_18}  which dominates the constant $p(n,m)$ factor of our other subfunctors \cite[Section 3.2]{10.1007/978-3-030-30942-8_18}. We  use these bounds to achieve an overall runtime of $O((m+n)\log(n)\log(m))$ \cite[Theorem 3.4]{10.1007/978-3-030-30942-8_18}. $m$ dominates $n$ as it is strictly larger, since we can trim unreachable states. Therefore the overall runtime is bounded by $O(m\log^2(m))$. Additionally $m$ is dominated by $|k|$ in the worst case. So we substitute $|k|$ from $m$ and simplify.
			\begin{align*}
				&O(|Q|\cdot|\At|)(2+|\Out|+|\Act|\cdot|Q|) \log(m)^2 = O(|Q|\cdot|\Out|+|\Act|\cdot|Q|^2) \log(m)^2\\
				&\leq O((|Q|\cdot|\Out|+|\Act|\cdot|Q|^2)(\log^2 ((|Q|\cdot|\At|)(2+|\Out|+|\Act|\cdot|Q|)))\\
				&= O((|Q|\cdot|\Out|+|\Act|\cdot|Q|^2)(\log(|Q|\cdot|\At|) + \log(2+|\Out|+|\Act|\cdot|Q|))^2)\\
				&\leq O((|Q|\cdot|\Out|+|\Act|\cdot|Q|^2)(\log(|Q|) + \log(|\At|)+\log(2+|\Out|+|\Act|\cdot|Q|))^2)\\
				&= O((|Q|\cdot|\Out|+|\Act|\cdot|Q|^2)(\log(|Q|)+\log(2+|\Out|+|\Act|\cdot|Q|))^2)\\
				&\leq O((|Q|\cdot|\Out|+|\Act|\cdot|Q|^2)\log(2+|\Out|+|\Act|\cdot|Q|)^2)\\
				&\leq O((|Q|\cdot|\Out|+|\Act|\cdot|Q|^2)(\log(2|\Out|+2|\Act|\cdot|Q|))^2)\\
				&= O((|Q|\cdot|\Out|+|\Act|\cdot|Q|^2)(\log(2)+\log(|\Out|+|\Act|\cdot|Q|))^2)\\
				&= O((|Q|\cdot|\Out|+|\Act|\cdot|Q|^2)\\&\;\;\cdot(\log^2(2)+2\log(2)\log(|\Out|+|\Act|\cdot|Q|)+\log^2(|\Out|+|\Act|\cdot|Q|)))\\
				&= O((|Q|\cdot|\Out|+|\Act|\cdot|Q|^2)(\log^2(|\Out|+|\Act|\cdot|Q|)))\rlap{\qedhere} 
			\end{align*}
		\end{proof}
		\end{toappendix}
		\begin{correp}[Decidability]
			If $e, f \in \Exp$, $n = \# e + \# f$, and $\At$ is fixed, then \wgkat equivalence of $e$ and $f$ is decidable in time $O(n^3\log^2(n))$ \label{runtime}
		\end{correp}
		\begin{appendixproof}
			Unreachable states need not be included in a bisimulation, therefore, if it exists, we can find the required bisimulation in the smallest subcoalgebra of $(\Exp, \partial)$ containing the two expressions we are deciding equivalence of.
			
			By \cref{lem7}, $\langle e, f\rangle_\partial$ has no more than $\# e + \# f$ reachable states. Furthermore, the number of distinct actions and outputs from $\langle e, f\rangle_\partial$ are each no more than $n$ also per \cref{lem7}. Hence $|\Out|, |\Act|, |Q| \leq n$, and hence $|Q|^2\cdot|\Act| \leq n^3$. We assume that $\At$ is fixed. Hence, from \cref{runtimeThm}. 
			\begin{align*}
				O((|Q|\cdot|\Out|+|\Act|\cdot|Q|^2)&\cdot\log^2 (|\Out|+|\Act|\cdot|Q|))\\ 
				&\leq O((n^2+n^3)\cdot\log^2 (n+n^2))\\
				&= O(n^3\log^2 (n+n^2))\\
				&\leq O(n^3\log^2 (2n^2))\\
				&= O(n^3(\log(2n^2))^2)\\
				&= O(n^3(\log(n^2)+\log(2))^2)\\
				&= O(n^3(\log^2(n^2)+2\log(n^2)\log(2)+\log^2(2)))\\
				&= O(n^3(\log^2(n^2)))\\
				&= O(n^34\log^2(n))\\
				&= O(n^3\log^2(n))\rlap{\qedhere} 
			\end{align*}
		\end{appendixproof}
	\begin{rem}The astute reader will know that the equivalence of rational power series in the tropical semiring is undecidable \cite{tropical_undecideable} and might reasonably doubt the decidability of equivalence of \wgkat expressions. This does not pose an issue for \wgkat. Our notion of equivalence is one of bisimilarity, and hence single step equivalence, not trace equivalence. Due to this, deciding equivalence of \wgkat terms over the tropical semiring reduces to deciding equivalence of terms over the \emph{additive monoid} of the tropical semiring, not the tropical semiring itself. For this reason equivalence of \wgkat automata can be decided without deciding equivalence of rational power series. For this reason equivalence with respect to bisimilarity of \wgkat expressions is decidable even over the tropical semiring. \end{rem}
	\section{Related Work}\label{relatedS}
	Our work comes in a long line of work on Kleene Algebra with Tests (\kat) \cite{KAT}. This is an algebraic framework for reasoning about programs based on Kleene Algebra, the algebra of regular expressions. The original model was one with the axioms of Kleene Algebra combined with a boolean algebra of tests, giving control flow to a algebraic model. A restriction of this language to the deterministic fragment led to \gkat which allows for decision of equivalence in near linear time \cite{Smolka_2019}. Following this, a deterministic probabilistic language, \probgkat, was developed which was also efficiently decidable \cite{rozowski2023probabilistic}. Our work is directly inspired by \probgkat, although it is orthogonal as an extension of GKAT with semiring weightings. \probgkat extended \gkat with convex sums in order to capture probability, in contrast to the general weighted sums we introduce in this paper which do not enforce convexity \cite{rozowski2023probabilistic}. As a result the semantics and axiomatizations differ considerably in form but also in terms of the techniques needed to prove results. We took inspiration for our arguments from the general theory of effectful process algebras \cite{ToddThesis}.
	
 	Both idempotent and non-idempotent semirings have been examined and used as models for computation \cite{KOZEN1994366,weightedhandbook}. While our work is directly inspired by \cite{KAT}, the work on various non-idempotent $^*$-semirings was invaluable to our characterization of semirings appropriate for weighting over. In particular we considered both iteration semirings \cite{IterationSemirings} and inductive $^*$-semirings \cite{InductiveSemirings}. The work on classifying $^*$-semirings is quite thorough and may offer useful classes for researchers working on weighted computation. Semirings in general are well studied and we refer to this theory at various points for our arguments \cite{golan2013semirings}.
	
	We would be remiss to not mention the work on Kleene Algebra with Weights and Tests (\kawt) \cite{sedlar2023kleenealgebratestsweighted}. This is an extension of \kat with weights. It works by weighting the nondeterministic branches of computation in \kat and summing weights. The syntax is quite similar to our own. \kawt is based on a relational semantics which assigns weights to guarded strings in contrast to our coalgebra-based semantics. The syntax of \kawt introduces another class of expression to \kat, the weightings. This class behaves like the Boolean tests in \kat, allowing for both identities, addition and multiplication. These terms are then subterms of programs. For a semiring to be used as the weights in a \kawt it must be idempotent and complete. This allows for a definition of iteration, while keeping each \kat a \kawt. An idempotent semiring is one in which $\forall s \in \mathbb S, \;\; s+s=s$ and a complete semiring is one in which there is an infinitary sum operator $\sum_{I}$ for any index set $I$, and the sum obeys some natural notions of sums and multiplication.
	
	Two key differences between \wgkat and \kawt languages are decidability and the admissible class of semirings. In particular no decidability results exist for \kawt~\cite{sedlar2023kleenealgebratestsweighted} and idempotent and complete semirings are considered. We require the semiring be positive, Conway and refinement. All complete semirings are positive \cite[22.28]{golan2013semirings}, and all complete $^*$-semirings are Conway semirings \cite[Theorem 3.4]{weightedhandbook}. So roughly speaking, we impose the new restriction of being a refinement monoid, but in exchange we drop the requirement of idempotence and loosen the requirement of completeness. We find that this trade includes several new useful semirings, like the natural numbers with infinity under addition and multiplication, or even the non-negative extended reals under addition and multiplication, while ruling out only pathological examples.
	
	Finally our work on the \wgkat automaton leverages the existing work on the general theory of coalgebra \cite{Rutten2000UniversalCA, gumm2002coalgebras, PETERGUMM2000111, gumm2009copower} and in particular monoid-weighted transition systems \cite{GUMM2001185}. We found the property of refinement in monoids to particularly useful and the existing literature to be quite thorough \cite{refinementmonoids, GUMM2001185}. These ideas of mutual refinements are relevant to the semantics of weighted programming. Of course we leverage an existing algorithm for our decision procedure for bisimulation equivalence \cite{10.1007/978-3-030-30942-8_18, wissmann2020efficient, dorsch2017efficientcoalgebraicpartitionrefinement}.
	\section{Conclusion}\label{conclS}
	We presented \wgkat, a weighted language inspired by Kleene Algebra with Tests, equipped with sequential composition, weighted branching over a large class of semirings, conditionals, Boolean tests, primitive actions, and guarded loops. We proposed an operational semantics and gave a sound and complete axiomatization of bisimilarity. Finally, we showed that bisimulation equivalence of \wgkat is efficiently decidable in $O(n^3\log^2(n))$ time. 
	
	One extension of this work would be to axiomatize the coarser notion of trace equivalence, which brings in new axioms. For example, left distributivity over guarded and weighted choice would seem to be sound under trace equivalence, given some conditions on the weights involved. The undecidability of equivalence of power series in the tropical semiring poses issues for a decision procedure for trace equivalence. However characterizing a decision procedure based on properties of the semiring would be an interesting result. 
		
	A practical direction for research would be to take these ideas of weightings and apply them to \netkat, a variant of \kat for verifying network properties. This would allow for examination of quantitative properties which may be of interest to network operators.
	
	Another direction for future research is to prove completeness without the semiring being refinement and positive. These conditions were needed for a bisimulation to exist, which is what made the completeness theorem possible. If it were possible to prove completeness without laying hands on a bisimulation, then behavioral equivalence could still be established without these properties. While we are only aware of pathological examples of semirings which fail to be refinement, allowing for non-positive semirings would include many familiar semirings. Finally, tackling the question of completeness without the uniqueness axiom which remains open for \gkat, \probgkat, and now \wgkat. 

\bibliography{icalp}

\end{document}